\documentclass[12pt]{article}

\usepackage{float}
\usepackage[makeroom]{cancel}
\usepackage{color}
\usepackage{graphicx}
\usepackage{amsmath}
\usepackage{amssymb}
\usepackage{xspace}
\usepackage[small]{subfigure}
\usepackage[numbers,compress]{natbib}
\usepackage[hyperfootnotes=false]{hyperref}
\usepackage{tcolorbox}

\usepackage{framed}
\usepackage{physics}
\usepackage{tensor}

\newlength{\fighskip} \fighskip=2pt
\newlength{\figvskip} \figvskip=3pt

\newcommand*{\figbox}[2]{{
  \def\figscale{#1}
  \def\arraystretch{0.8}
  \arraycolsep=0pt
  \begin{array}{c}
    \vbox{\vskip\figscale\figvskip
      \hbox{\hskip\figscale\fighskip
        \includegraphics[scale=\figscale]{#2}}}
  \end{array}}}

\usepackage{mciteplus} 
\usepackage{dcolumn}
\usepackage{bm}
\usepackage{verbatim}
\usepackage{amscd}
\usepackage{amsfonts}
\usepackage{setspace}
\usepackage{amsthm}
\usepackage{enumerate}
\usepackage{mathtools}

\theoremstyle{plain}
\newtheorem{theorem}{Theorem}
\newtheorem{lemma}[theorem]{Lemma}

\newtheorem{corollary}[theorem]{Corollary}
\theoremstyle{definition}

\newcommand{\Area}{\mathrm{Area}}

\usepackage{authblk}

\title{
\vspace{-80pt}
\hfill
{\normalsize RUP-26-24}\\
\vspace{40pt}
\bf 
Relative entropy of entanglement \\
and 
tripartite minimal surface 
}
\author[1,2]{Takato Mori\thanks{takato.mori@yukawa.kyoto-u.ac.jp}}
\author[3,4]{Beni Yoshida\thanks{byoshida@perimeterinstitute.ca}}
\affil[1]{\em \small Department of Physics, Rikkyo University, \protect\\
3-34-1 Nishi-Ikebukuro, Toshima-ku, Tokyo 171-8501, Japan}
\affil[2]{\em \small Transformative Research Innovation Platform of RIKEN platforms (TRIP) Headquarters,\par RIKEN, Wako 351-0198, Japan}
\affil[3]{\em \small Perimeter Institute for Theoretical Physics, Waterloo, Ontario N2L 3W8, Canada}
\affil[4]{\em \small 
Fundamental Quantum Science Program (FQSP), RIKEN, Wako 351-0198, Japan}
\date{}

\begin{document}
\maketitle

\begin{abstract}
We propose a holographic dual of the relative entropy of entanglement for a boundary tripartition $A:B:C$. 
Our proposal is that, at the leading order,
\begin{align}
E_R(\rho_{AB})
=
\frac{1}{4G_N}\Big[\mathrm{Area}(\Gamma_{\min})
-
\mathrm{Area}(\gamma_{AB})\Big]
\notag
\end{align}
where $\Gamma_{\min}$ is the minimal bulk tripartition surface and $\gamma_{AB}$ is the minimal surface homologous to $AB$.
For a general random tensor network, we prove the corresponding upper bound by constructing a separable state associated with $\Gamma_{\min}$ and evaluating its relative entropy. 
We also establish the matching lower bound rigorously for networks of one, two, and three Haar random tensors by bounding the maximal tripartite product-state overlap. 
In appropriate geometries, the minimal tripartition surface can form a nontrivial tri-junction resembling the Mercedes logo. 
The proof proceeds by sequential optimization over candidate product states, which has a natural geometric interpretation as a local search for the minimal tripartition surface.

\begin{center}
\includegraphics[width=0.39\textwidth]{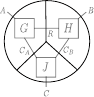}
\end{center}
\end{abstract}

\newpage

\tableofcontents

\newpage

\section{Introduction}

The Ryu--Takayanagi formula revealed a remarkable connection between quantum entanglement and geometry in holographic theories. 
Since then, a variety of quantum-information quantities have been considered, and candidate geometric bulk duals have been proposed for them~\cite{Nguyen2018, TakayanagiUmemoto2018, Cheng2020,
DuttaFaulkner2021, DongQiWalter2021, Mori2025QuantumCorrelation} (see~\cite{Horodecki:2009zz} for a review of entanglement measure).
Understanding such relations is important not only for clarifying the role of entanglement in the emergence of semiclassical spacetime, but also for addressing conceptual puzzles in quantum gravity in which quantum entanglement plays a fundamental role.
While establishing such duality relations directly in holographic theories is often technically challenging, holographic toy models such as Haar random tensor networks often provide a useful controlled setting~\cite{Pastawski:2015qua,Hayden:2016cfa}. 

Despite the ever-growing list of proposed geometric duals, concrete analytic understanding remains limited.
In fact, much of the rigorously established progress in holographic entanglement has so far focused on R\'{e}nyi and von Neumann entropies~\cite{LewkowyczMaldacena2013, Dong:2016fnf}, together with a relatively small class of related quantities such as logarithmic negativity~\cite{DongQiWalter2021}.
These quantities are comparatively tractable because they can be accessed through spectral properties of reduced density matrices and related replica constructions.

By contrast, a number of fundamental entanglement measures still remain less understood. 
Examples include the entanglement of formation $E_F$~\cite{Hayden:2005sqo, Umemoto2019, MoriYoshidaConnectedWedge}, the relative entropy of entanglement $E_R$~\cite{MoriYoshidaREE}, the entanglement of purification $E_P$~\cite{Nguyen2018,TakayanagiUmemoto2018,Cheng2020}, and distillable entanglement $E_D$~\cite{MoriYoshidaConnectedWedge}. 
These quantities are defined through nontrivial variational or operational optimizations, which makes the usual replica and permutation techniques considerably less effective.\footnote{
See, e.g., Ref.~\cite{Huang2014DiscordNPComplete} for a discussion of the
computational difficulty of evaluating optimization-based quantum
correlation measures. 
}
Indeed, even for a single Haar-random state, some of these optimization-based entanglement measures were determined only relatively recently~\cite{MoriYoshidaREE,LiMoriYoshida2026}.\footnote{Entanglement of purification $E_P$ admits an analytically computable lower bound in terms of R\'{e}nyi reflected entropies, which establishes the leading order behavior in a large class of random tensor network states~\cite{AkersFaulknerLinRath2024EP}. }

\paragraph{Relative entropy of entanglement:}

In this work, we study the relative entropy of entanglement (REE)~\cite{VedralPlenio1997}. 
For a bipartite state $\rho_{AB}$, it is defined by
\begin{align}
E_R(\rho_{A:B})
:=
\min_{\sigma_{AB}\in {\rm SEP}(A:B)}
D(\rho_{AB}\Vert \sigma_{AB}),
\label{eq:REE-def}
\end{align}
where
$D(\rho\Vert\sigma)
:=
\Tr\! \left[
\rho\bigl(\log \rho-\log \sigma\bigr)
\right]$
is the relative entropy.
Here, ${\rm SEP}(A:B)$ denotes the set of states separable across the bipartition $A:B$,
\begin{align}
\sigma_{AB} = 
\sum_i p_i
(\sigma_A^{(i)}\otimes\sigma_B^{(i)}), \qquad p_i >0,\quad \sum_i p_i=1.
\label{eq:SEP-def}
\end{align}
Thus, the REE quantifies how distinguishable a given state is from the entire set of separable states. 
It is also related to other important entanglement measures~\cite{Horodecki:2009zz}:
\begin{align}
E_D(\rho_{A:B})
\leq E_R(\rho_{A:B})
\leq E_F(\rho_{A:B}),
\qquad
E_R(\rho_{A:B})
\leq I(A:B)
\label{eq:REE-hierarchy}
\end{align}
where $I(A:B)$ is the mutual information. 
By construction, the REE satisfies basic properties of an entanglement measure, including the monotonicity under local operations and classical communication (LOCC). 

The relative entropy of entanglement quantifies the information-theoretic distance from $\rho_{AB}$ to the set of separable states. 
The regularized relative entropy of entanglement appears in the generalized quantum Stein's lemma, where it governs the optimal asymptotic rate for distinguishing an entangled state from all separable alternatives (see~\cite{Lami2025Stein} for instance).
Despite its simple definition and conceptual importance, its explicit evaluation is notoriously difficult.
The problem is NP-hard in general~\cite{Huang2014DiscordNPComplete}, and explicit results are known only for special families of states~\cite{Audenaert2002AsymptoticREE, VollbrechtWerner2001Symmetry, HajdusekMurao2013GraphState, MarkhamMiyakeVirmani2007GraphStates, ChenYang2002SchmidtCorrelated, KimHwangJungPark2010REE, TakayanagiUgajinUmemoto2018}.

Here, we propose that the REE admits an intriguing holographic dual.
Consider a boundary holographic state $|\Psi\rangle_{ABC}$ with a tripartition of the boundary into $A$, $B$, and $C$. 
Let $\rho_{AB}$ be the reduced density matrix. 
Our proposal is that 
\begin{align}
E_R(\rho_{A:B}) = 
\frac{1}{4G_N}\mathrm{Area}(\Gamma_{\min})-
\frac{1}{4G_N}\mathrm{Area}(\gamma_{AB})
+O(1).
\label{eq:main-proposal}
\end{align}
Here $\gamma_{AB}$ is the usual minimal surface homologous to the boundary region $AB$, whereas $\Gamma_{\min}$ is the bulk minimal tripartition surface which separates the three boundary regions $A$, $B$, and $C$ (Fig.~\ref{fig_example1}).
Using the Ryu--Takayanagi relation $S(\rho_{AB})=\frac{1}{4G_N}\mathrm{Area}(\gamma_{AB})+O(1)$, our proposal can equivalently be written as
\begin{align}
\min_{\sigma_{AB}\in {\rm SEP}(A:B)} - \Tr \! \left[ \rho_{AB} \log \sigma_{AB} \right] = \frac{1}{4G_N}\mathrm{Area}(\Gamma_{\min}) + O(1). 
\end{align}

\paragraph{Bulk tripartite surface:}
Let us first introduce the notion of bulk tripartite surfaces~\cite{MoriYoshidaConnectedWedge,  GaddeKrishnaSharma2022,
PeningtonWalterWitteveen2023, GaddeKrishnaSharma2023,HarperTakayanagiTsuda2024,
AkersFaulknerLinRath2024, YuanLiZhou2025,IizukaMiyataNishida2025}. 
Given a tripartition of the boundary into $A:B:C$, consider a partition of the bulk into three regions $V_A,V_B,V_C$, containing $A,B,C$ respectively. 
The corresponding bulk tripartition surface $\Gamma$ consists of the interfaces separating these three regions. Namely, it can be decomposed into three portions
\begin{align}
\Gamma = \Gamma_{AB}\cup \Gamma_{BC} \cup \Gamma_{AC}, \qquad \figbox{2.2}{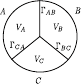}
\end{align}
where, for example, $\Gamma_{AB}$ denotes the interface between $V_A$ and $V_B$.

Among all such bulk tripartitions, we will be particularly interested in the minimal tripartition surface, defined by
\begin{align}
\Gamma_{\min}
\in
\arg\min_{\Gamma}
\mathrm{Area}(\Gamma).
\label{eq:tripartition-min}
\end{align}
Several examples are shown in Fig.~\ref{fig_example1} for AdS$_3$ and tensor networks where $\Gamma_{\min}$ can differ from the usual connected RT surface. 
In particular, even very small tensor networks already admit multiple competing tripartition surfaces, as illustrated by the two- and three-tensor examples.

Tripartite minimal surfaces have been extensively discussed in the holography literature in connection with multipartite replica quantities constructed from permutation operators~\cite{MoriYoshidaConnectedWedge, GaddeKrishnaSharma2022,
PeningtonWalterWitteveen2023, GaddeKrishnaSharma2023,HarperTakayanagiTsuda2024,
AkersFaulknerLinRath2024,YuanLiZhou2025,IizukaMiyataNishida2025}. 
These quantities are built from products of reduced density matrices such as $\rho_A$, $\rho_B$, $\rho_C$, $\rho_{AB}$, $\rho_{AC}$ , $\rho_{BC}$, $\rho_{ABC}$ with different choices of replica permutations or powers, leading to different weighted multipartite surfaces.
Related geometric quantities associated with bulk tripartitions, including multipartite volumes, have also been studied~\cite{FujikiHarperTakayanagiZenoni2026}. 

While these constructions provide interesting probes of bulk geometry, they have two important limitations. 
First, their information-theoretic interpretation often remains unclear.
They are defined through particular replica contractions rather than through an operational task or an optimization over a physically distinguished class of states~\cite{Goto:2026mbj}. 
In fact, these quantities are not genuine entanglement measure and, for generic states, are not even monotonic under partial trace~\cite{HaydenLemmSorce2023}. 
Second, a von Neumann-like replica limit can be subtle due to competing saddles and replica-symmetry breaking~\cite{PeningtonWalterWitteveen2023}.
Correspondingly, the dominant bulk geometry can change substantially as the replica parameters are varied.
Nevertheless, such quantities have found applications beyond holography, including proposed probes of the chiral central charge in topologically ordered phases~\cite{LiuZhangOhyamaKusukiRyu2025,
ShefferFanSternBergRyu2026,GassLevin2026}.

Here, our goal is different.  
We seek to relate the multipartite minimal surface to a genuine entanglement measure whose definition involves an optimization over separable states.

\begin{figure}
\centering
{
\includegraphics[width=0.9\textwidth]{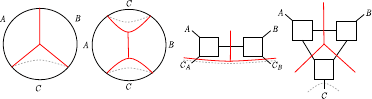}
}
\caption{Tripartite minimal surfaces (red lines) and the Ryu--Takayanagi surfaces for $AB$ (dotted lines). 
}
\label{fig_example1}
\end{figure}

\paragraph{Random tensor networks:}

In this paper, we mainly work with Haar random tensor networks (RTNs) defined on a fixed graph (Fig.~\ref{fig_Haar_RTN}).
Each bulk vertex carries an independent Haar-random tensor, while neighboring tensors are contracted through maximally entangled bonds. 
The uncontracted legs at the boundary define the Hilbert space of the
boundary state.
RTNs are equipped with natural geometric interpretations. 
For an edge $e$ with bond dimension $D_e$, we can assign the length
\begin{align}
\ell_e := \log_2 D_e ,
\end{align}
so that the area of a bulk cut $\Gamma$ is identified as
\begin{align}
\Area(\Gamma)
=
\sum_{e\in\Gamma}\ell_e .
\end{align}
In this sense, $\log D_e$ plays the role of an elementary area in Planck units, analogous to the factor $\Area/(4G_N)$ in gravity.
We will take the bond dimensions to be exponentially large, $\ell_e=O(n)$, and study the large-$n$ limit.

Random tensor networks provide an analytically tractable toy model of holography in which the Ryu--Takayanagi formula and related geometric structures emerge naturally~\cite{Pastawski:2015qua,Hayden:2016cfa}.
Of course, this description has important limitations.
The bulk geometry is discretized, and the tensor network itself is not derived from first principles from a microscopic theory of quantum gravity.
Rather, one may regard the construction as starting from a spatial slice of a given semiclassical geometry and asking what microscopic entanglement structure can reproduce its geometric properties.
At the same time, the use of random states and random operators is not particular to tensor networks.
Randomness has played a recurring role in quantum-gravity models, from Page's model of black-hole evaporation~\cite{Page:1993df} and the Hayden--Preskill information-recovery problem~\cite{Hayden:2007cs} to the Sachdev--Ye--Kitaev model~\cite{Sachdev:2010um,Maldacena:2016hyu} and random-matrix
descriptions of late-time black-hole dynamics~\cite{Cotler:2016fpe}.
RTNs provide a particularly simple setting in which this statistical structure can be combined with an explicit notion of bulk locality and
geometry.

\paragraph{Upper bound:}
Establishing an upper bound on the REE is comparatively straightforward as it is sufficient to construct a suitable separable state whose relative entropy with respect to $\rho_{AB}$ is small.
Moreover, once such a candidate state is specified, the relative entropy can often be evaluated analytically using standard replica methods.

A useful construction of separable states across $A:B:C$ can be obtained by choosing a tripartition surface $\Gamma$ and dephasing every maximally entangled bond crossing $\Gamma$ in its Schmidt basis:
\begin{align}
\ket{\Phi_e}\bra{\Phi_e}
=
\frac{1}{D_e}
\sum_{i,j=1}^{D_e}
\ket{i,i}\bra{j,j}
\quad\longrightarrow\quad
\omega_e
=
\frac{1}{D_e}
\sum_{i=1}^{D_e}
\ket{i,i}\bra{i,i}.
\label{eq:dephased-EPR}
\end{align}
The tensor network separates into three regions $V_A, V_B, V_C$, and consequently the resulting state $\sigma_{ABC}^{\Gamma}$ is separable across $A:B:C$.

The bond-dephased states $\sigma_{ABC}^{\Gamma}$ provide a natural family of candidate separable states in the optimization for the REE.\footnote{Indeed, the Schmidt-dephased state is the closest separable state for pure bipartite states~\cite{Vedral:1997hd}.} 
Our first general result is that, among this family, the state associated with the minimal tripartition surface $\Gamma_{\mathrm{min}}$ is optimal at leading order. 
More precisely,
\begin{align}
D\!\left(
\rho_{AB}\Vert\sigma^{\Gamma_{\min}}_{AB}
\right)=
\Area(\Gamma_{\min}) - \Area(\gamma_{AB})
+O(1),
\label{eq:general-upper}
\end{align}
and no other tripartition surface \(\Gamma\) gives a smaller relative entropy at leading order. 
This immediately implies
\begin{align}
E_R(\rho_{A:B})
\leq
\Area(\Gamma_{\min}) - \Area(\gamma_{AB})
+O(1).
\end{align}

We establish this result using a replica calculation. 
The ordinary tensor-network replica model involves permutation-valued spins with the familiar domain-wall interaction. 
Dephasing the bonds on $\Gamma$ modifies the interaction on these edges. 
We show that the resulting optimization problem can be reduced to a binary spin model in which ordinary bonds carry an Ising domain-wall cost, while dephased bonds carry an ``OR'' interaction. 
We refer to the resulting model as the Ising--OR model. 
Its ground-state energy is bounded below by the area of a bulk tripartition surface, which leads directly to Eq.~\eqref{eq:general-upper}.

Although our derivation is restricted to random tensor networks, this upper-bound construction appears particularly amenable to a continuum holographic interpretation. 
It is natural to ask whether dephasing along $\Gamma$ can be represented gravitationally by an appropriate bulk defect or boundary condition supported on the tripartition surface, whose replica action reproduces the corresponding area contribution. 
Understanding the associated backreaction, as well as the precise notion of separability in the continuum theory, remains an interesting open problem.

\paragraph{Lower bound:}

Establishing the matching lower bound is considerably more challenging as we must rule out the possibility that some other, potentially highly fine-tuned, separable state achieves substantially smaller value. 
Our main technical tool is a general lower bound in terms of the maximal product-state overlap. 
For a tripartite pure state $|\Psi\rangle_{ABC}$, define
\begin{align}
\Lambda(\Psi_{A:B:C})
:=
\max_{\alpha,\beta,\gamma}
\left|
\langle \alpha,\beta,\gamma|\Psi\rangle
\right|^2
\end{align}
where the maximization is over product states across $A:B:C$. 
In~\cite{MoriYoshidaREE}, it was shown that
\begin{align}
E_R(\rho_{A:B})
\geq
-S(\rho_{AB}) - 
\log \Lambda(\Psi_{A:B:C}).
\end{align}
The maximal product-state overlap is closely related to the geometric measure of entanglement~\cite{WeiGoldbart2003}.

There is a natural class of candidate product states associated with bulk tripartition surfaces. 
Given a tripartition surface $\Gamma$, replace every entangled bond crossing $\Gamma$ by a fixed product state, which we take to be $|0\rangle$ without loss of generality. 
We denote the resulting boundary separable state by $|\Psi_{ABC}^{\Gamma, |0\rangle}\rangle$. 
For the minimal tripartition surface, we find
\begin{align}
\log
\left|
\langle
\Psi^{\Gamma{\min},|0\rangle}
|
\Psi
\rangle
\right|^2 = 
- \Area(\Gamma_{\min})+O(1).
\label{eq:geometric-product-overlap}
\end{align}
The remaining question is whether an arbitrary product state can achieve a parametrically larger overlap.
In other words, the lower-bound part of our proposal is reduced to the conjecture 
\begin{align}
\log \Lambda(\Psi_{A:B:C}) = - \Area(\Gamma_{\mathrm{min}}) + O(1) \label{eq:product-overlap-conjecture}
\end{align}
with the maximal product overlap achieved by $|\Psi_{ABC}^{\Gamma, |0\rangle}\rangle$ at the leading order.

\begin{figure}
\centering
{
\includegraphics[width=0.45\textwidth]{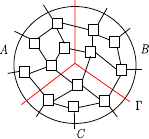}
}
\caption{Haar random tensor network with a bulk tripartition $\Gamma$.
}
\label{fig_Haar_RTN}
\end{figure}

In this paper, we establish Eq.~\eqref{eq:product-overlap-conjecture} rigorously for tensor networks consisting of one, two, and three independent Haar-random tensors. 
For the three-tensor network, this includes the regime in which the minimal tripartition surface forms a nontrivial tri-junction of the ``Mercedes'' type.

\paragraph{Remark on proof technique}

The proof is technically involved, but its underlying structure has a simple geometric interpretation. 
We optimize over the boundary product states sequentially, so that each step probes one tensor at a time and reduces the problem to maximal product-overlap problems for the remaining tensor network. 
For the two-tensor network, for example, optimizing first over one subsystem probes the attached tensor and leaves bipartite and tripartite product overlaps involving the other tensor. 
The successive optimization steps thereby expose different candidate tripartition surfaces. 
In this sense, the proof can be viewed as a local search for the minimal tripartition surface.

The probabilistic ingredient underlying the proof is an $\epsilon$-net argument~\cite{HaydenLeungShorWinter2004}. For a single Haar-random state, one discretizes the continuous set of product states by a finite $\epsilon$-net, derives a concentration bound for the overlap with each fixed product state, and then applies a union bound over the net. Variants of this typicality and counting argument have been successfully used to determine optimized entanglement quantities of Haar-random states, including $E_F$~\cite{Hayden:2005sqo}, $E_R$~\cite{MoriYoshidaREE}, and $E_D^{\mathrm{[LO]}}$ under local operations~\cite{LiMoriYoshida2026}.
Extending this strategy to random tensor networks is not straightforward, as a state obtained by contracting several independently Haar-random tensors is not itself uniformly distributed on the full boundary Hilbert space.
We overcome this difficulty by applying the $\epsilon$-net and concentration arguments sequentially, one tensor at a time. 
This sequential use of concentration and $\epsilon$-net arguments provides a useful way to control optimized quantities in random tensor networks with multiple independently random tensors, and may be applicable more broadly to other variational quantities in many-body tensor-network states.


\paragraph{Hierarchy of geometric transitions}

It is instructive to see how the minimal tripartition surface behaves in a simple continuum geometry. 
Consider vacuum AdS$_3$ and two reflection-symmetric boundary intervals $A$ and $B$, each of angular size $\theta$.  
The complement $C=C_A\cup C_B$ consists of two equal intervals, each of angular size $\pi-\theta$.  
As $\theta$ is increased, the intervals $A$ and $B$ become larger while the regions separating them become smaller.
This one-parameter family provides a simple setting in which the connectivity of the entanglement wedge, the topology of the minimal tripartition surface, and the entanglement-wedge cross section can be
compared directly.  

Interestingly, these three geometric changes occur at
distinct values of $\theta$.
It is useful to emphasize the resulting hierarchy
\begin{align}
\theta_{\rm wedge}
<
\theta_{\Gamma}
<
\theta_{\rm cross},\qquad 
\figbox{1.8}{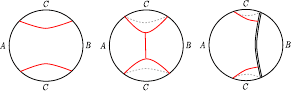}
\label{eq:angles-intro}
\end{align}
The first transition at $\theta=\theta_{\rm wedge}$ is the standard connectivity transition of the entanglement wedge where the mutual information $I(A:B)$ becomes nonzero at leading order.  
In the regime $\theta_{\rm wedge}<\theta<\theta_\Gamma$, the minimal tripartition remains the disconnected configuration and our proposal gives $E_R(A:B)=I(A:B)$ at leading order.
At the second transition, $\theta=\theta_\Gamma$, the minimal tripartition
changes to the connected two-junction configuration, and the REE departs
from the mutual information.
The final transition at $\theta=\theta_{\rm cross}$ is associated with the entanglement-wedge cross section $E_W(A:C)$ which divides the entanglement wedge of $AC$. (Note that it is between $A$ and $C$, not $A$ and $B$.)
Equivalently, according to~\cite{MoriYoshidaConnectedWedge}, this transition occurs when the locally accessible information $J(A|B) := S_A - E_F(A:C)$ become nonzero at leading order.

\paragraph{Plan of the paper}

In section~\ref{sec:replica}, we derive a general upper bound on the REE by evaluating the relative entropy with respect to a bond-dephased separable state.
In section~\ref{sec:bound}, we introduce a lower-bound method based on the maximal product-state overlap.
In sections~\ref{sec:one-random-tensor}--\ref{sec:three-random-tensors}, we apply this method to determine the REE for one-, two-, and three-tensor models.
In section~\ref{sec:hierarchy}, we present how tripartite minimal surface and other geometric objects emerge in AdS$_3$.
Finally, in section~\ref{sec:discussion}, we discuss implications of our results and possible extensions.

\section{Relative entropy for tripartite minimal surface}\label{sec:replica}

Consider a Haar RTN (random tensor network) where physical qubits live on the boundary of the network (Fig.~\ref{fig_Haar_RTN}). 
Let $A,B,C$ be boundary regions, which may consist of multiple connected components. 
Assume that the RTN graph has graph edge dimensions $D_e$ with 
\begin{align}
\ell_e := \log_2 D_e = O(n).
\end{align}
We will be interested in the large-$n$ limit.


Let $|\Psi_{ABC}\rangle$ be a normalized pure state generated by the Haar RTN, and let 
\begin{align}
\rho_{ABC} = |\Psi_{ABC}\rangle \langle \Psi_{ABC}|.
\end{align} 
We define $\sigma_{ABC}^{\Gamma}$ as the normalized mixed state obtained by dephasing every EPR bond cut by the tripartition surface $\Gamma$, while keeping all tensors and all other bonds unchanged.
Since fixing the tripartition surface $\Gamma$ disconnects the three regions, the dephased state $\sigma_{ABC}^{\Gamma}$ is separable across the tripartition $A:B:C$.\footnote{
An EPR edge is given by $\sum_{i,j}|i, i \rangle\langle j,j| $ while a dephased edge is given by $\sum_{i}|i, i \rangle\langle i,i| $ up to normalizations. 
}

In this section, we evaluate
\begin{align}
D\!\left(
\rho_{AB}
\middle\|
\sigma_{AB}^{\Gamma}
\right),
\end{align}
where
\begin{align}
\rho_{AB} = \Tr_C \rho_{ABC},
\qquad
\sigma_{AB}^{\Gamma} = \Tr_C \sigma_{ABC}^{\Gamma}.
\end{align}
Our central result is that the relative entropy $D\!\left(
\rho_{AB}
\middle\|
\sigma_{AB}^{\Gamma}
\right)$
is minimized, at leading order in the large-$n$ limit, when the dephasing surface $\Gamma$ is chosen to be a minimal bulk tripartition surface $\Gamma_{\min}$.

\begin{lemma}\label{lemma:tripartition}
For any bulk tripartition surface $\Gamma$,
\begin{align}
D\!\left(
\rho_{AB}
\middle\|
\sigma_{AB}^{\Gamma}
\right)
\geq
D\!\left(
\rho_{AB}
\middle\|
\sigma_{AB}^{\Gamma_{\min}}
\right)
+ O(1),
\end{align}
where $\Gamma_{\min}$ is a minimal bulk tripartition surface separating $A$, $B$, and $C$. Moreover,
\begin{align}
D\!\left(
\rho_{AB}
\middle\|
\sigma_{AB}^{\Gamma_{\min}}
\right)
=
\mathrm{Area}(\Gamma_{\min})
-
\mathrm{Area}(\gamma_{AB})
+
O(1),\label{eq:tripartition}
\end{align}
where $\gamma_{AB}$ is the minimal-area surface homologous to $AB$.
\end{lemma}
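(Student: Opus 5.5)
We will compute $D(\rho_{AB}\Vert\sigma^{\Gamma}_{AB}) = -S(\rho_{AB}) - \Tr[\rho_{AB}\log\sigma^{\Gamma}_{AB}]$ by a replica trick. The first term is already known from the standard RTN analysis: $S(\rho_{AB})=\Area(\gamma_{AB})+O(1)$. So everything reduces to the cross term. We write
\begin{align}
-\Tr[\rho_{AB}\log\sigma^\Gamma_{AB}] = -\lim_{m\to 1}\partial_m \Tr[\rho_{AB}(\sigma^\Gamma_{AB})^{m-1}],
\end{align}
and evaluate the quantity $Z_m:=\Tr[\rho_{AB}(\sigma^\Gamma_{AB})^{m-1}]$ at large $n$ for integer $m$. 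Following the usual RTN method, we average numerator and denominator separately; at large bond dimension the normalization fluctuations are suppressed, so this costs only $O(1)$.

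After Haar averaging, each vertex carries a permutation $g_x\in S_m$. Boundary conditions are the cyclic permutation $\tau$ on $A,B$ and the identity on $C$. One replica copy carries $\rho$ and the other $m-1$ copies carry $\sigma^\Gamma$.
\begin{itemize}
\item An undephased bond $e$ with endpoints $x,y$ contributes $D_e^{-d(g_x,g_y)}$, the familiar domain-wall weight.
\item A dephased bond contributes the same kind of weight in the copies carrying $\sigma$, but diagonal in the Schmidt basis. Its energy is therefore determined by the orbit structure of $g_x$ and $g_y$ restricted to the dephased replicas. In the $m\to1$ regime it is governed only by whether the $\rho$-copy is linked with the $\sigma$-copies.
\end{itemize}

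Next, we reduce this to a binary variable per vertex. The variable records whether the vertex is "$\tau$-like" (the $\rho$-replica is cyclically connected) or "identity-like". This should give the Ising--OR model described in the introduction:
\begin{itemize}
\item an ordinary bond costs $\ell_e$ when its endpoints disagree (Ising);
\item a dephased bond costs $\ell_e$ unless both endpoints are trivial (OR).
\end{itemize}
The leading coefficient of $(m-1)$ in $-\log Z_m$ is then $\min_{s}E_\Gamma(s)$, and the cross term equals this ground-state energy up to $O(1)$.

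We then show two inequalities, $E_\Gamma^{\min}\ge \Area(\Gamma_{\min})$ and $E_{\Gamma}^{\min}\le\Area(\Gamma)$.
\begin{itemize}
\item \textbf{Upper bound, $E_{\Gamma}^{\min}\le\Area(\Gamma)$.} Choose the configuration equal to the boundary value in each region $V_A,V_B,V_C$. This pays exactly the edges of $\Gamma$.
\item \textbf{Lower bound, $E_\Gamma^{\min}\ge \Area(\Gamma_{\min})$.} Given any configuration $s$, build a tripartition: $V_C$ is the identity-like domain, and the remaining set is split between $A$ and $B$ along the dephased edges of $\Gamma$ where the OR term is active. Then check edge by edge that every edge of the induced tripartition surface is charged at least $\ell_e$.
\end{itemize}
Taking $\Gamma=\Gamma_{\min}$ gives equality, which yields Eq.~\eqref{eq:tripartition}. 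For general $\Gamma$, the same lower bound gives the first inequality.

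The main obstacle is justifying the reduction from the full $S_m$ model with modified dephased interactions to the binary Ising--OR model. This requires showing that non-binary permutation configurations never lower the energy at order $(m-1)$, and that the analytic continuation $m\to1$ commutes with the large-$n$ minimization. For the first part, I would bound $d(g_x,g_y)$ from below by the number of cycles joining the marked $\rho$-replica. For the second, I would rely on the standard RTN assumption of a unique minimizer, with degeneracies contributing $O(1)$. The edge-by-edge charging in the tripartition lower bound also needs care at tri-junction vertices.
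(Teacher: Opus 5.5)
Your architecture matches the paper's: a replica trick for the cross term, an $S_m$ spin model with modified weights on $\Gamma$, reduction to the Ising--OR model, and the sandwich $\Area(\Gamma_{\min})\le\mathcal{T}(\Gamma)\le\Area(\Gamma)$. Your upper-bound configuration is exactly the paper's. Your lower-bound construction is correct, and more explicit than the paper's rerouting picture. Take $V_C'=\{s=0\}$, let $V_A'$ be the component of $A$ in $\{s=1\}$ after deleting the edges of $\Gamma$, and let $V_B'$ be the rest. Then every $V_A'$--$V_B'$ edge lies in $\Gamma$ with both endpoints at $s=1$, so the OR term charges it. Every edge leaving $V_C'$ is charged by one of the two terms. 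Finally, $B\subset V_B'$ because every $A$--$B$ path crosses $\Gamma$.

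The gap is in the step you flag yourself, and the route you propose for it cannot work. No binary labeling $s(g)$ with $s(e)=0$, $s(\tau)=1$ can satisfy an edgewise bound $d(g_x,g_y)\ge(m-1)|s(g_x)-s(g_y)|$ for $m\ge3$. Along a Cayley geodesic of length $m-1$ from $e$ to $\tau$, the label must flip across a single transposition, which costs only $1$. The label ``the $\rho$-replica is cyclically linked'' is also the wrong invariant. Take $g=(2\,3\,\cdots\,m)$: it fixes the $\rho$-replica, yet $d(g,\tau)=1$, and on a dephased edge $\delta(g,g)=d(e,g)=m-2$. So the dephased weight depends on cycles among the $\sigma$-replicas too, not only on whether the $\rho$-copy is linked.

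What works (Appendix~\ref{app:binary-reduction}) is the height function $x_v=d(e,g_v)/(m-1)\in[0,1]$, together with two bounds. The first is $d(g,h)\ge|d(e,g)-d(e,h)|$. The second is $\delta(g,h)\ge\max\{d(e,g),d(e,h)\}$, which holds because the relations $i_a=i_{g(a)}$, $a\ge2$, already merge each cycle of $g$, so $\kappa(g,h)\le\min\{\#(g),\#(h)\}$. These give $I_\Gamma\ge(m-1)F_\Gamma[x]$, with $|x_u-x_v|$ on ordinary edges and $\max\{x_u,x_v\}$ on $\Gamma$. The layer-cake identity $F_\Gamma[x]=\int_0^1F_\Gamma[s^{(t)}]\,dt$, where $s^{(t)}_v=1$ iff $x_v\ge t$, then shows the minimum is attained at a binary configuration.

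Your justification of the $m\to1$ continuation by a unique-minimizer assumption is also only heuristic. Uniqueness of the dominant configuration does not control the $m$-dependence of the $O(1)$ corrections, nor the exchange of $m\to1$ with large $n$. The paper sidesteps both this issue and the general-$m$ reduction.

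For the lower bound, $H_m^\Gamma=-\frac{1}{m-1}\log_2\mathrm{Tr}[\rho_{AB}(\sigma^\Gamma_{AB})^{m-1}]$ is non-increasing in $m$. Hence $-\mathrm{Tr}[\rho_{AB}\log_2\sigma^\Gamma_{AB}]=H_1^\Gamma\ge H_2^\Gamma=\mathcal{T}(\Gamma)+O(1)\ge\Area(\Gamma_{\min})+O(1)$. This needs only $m=2$, where $S_2=\{e,\tau\}$ leaves nothing to reduce, and your combinatorial bound applies verbatim.

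The matching upper bound uses no replicas. Cauchy--Schwarz over the $2^{\Area(\Gamma)}$ cut labels gives $\rho_{ABC}\le 2^{\Area(\Gamma)}(N_\Gamma/N)\,\sigma^\Gamma_{ABC}$. Separability gives $\sigma^\Gamma_{ABC}\le\sigma^\Gamma_{AB}\otimes I_C$. Concavity of the logarithm then yields $-\mathrm{Tr}[\rho_{AB}\log_2\sigma^\Gamma_{AB}]\le\Area(\Gamma)+o(1)$.

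Taking $\Gamma=\Gamma_{\min}$ and using $S(\rho_{AB})=\Area(\gamma_{AB})+O(1)$ gives both claims of the lemma. Either this route or the height-function reduction above is the ingredient missing from your plan.
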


Throughout this paper, we assume that there exists a unique minimal tripartition surface. 
Also, these results should be understood to hold almost surely, with failure probability vanishing at the large $n$ limit.

The minimal tripartition surface can be formally defined as
\begin{align}
\Gamma_{\min}
\in
\underset{\Gamma:\, A:B:C\ {\rm separated}}{\operatorname{argmin}}
\;
\operatorname{Area}(\Gamma),
\end{align}
where
\begin{align}
\operatorname{Area}(\Gamma)
:=
\sum_{e\in\Gamma} \ell_e
=
\sum_{e\in\Gamma}\log_2 D_e .
\end{align}
Here, the minimization is over bulk tripartition surfaces whose removal separates the boundary regions $A$, $B$, and $C$ into three distinct components.

Noting that $\sigma^{\Gamma_{\min}}_{AB}$ is separable across $A:B$, we obtain the following upper bound on the relative entropy of entanglement as an immediate corollary.

\begin{corollary}
The relative entropy of entanglement $E_R(\rho_{AB})$ is upper bounded by
\begin{align}
E_R(\rho_{AB})
\leq
\mathrm{Area}(\Gamma_{\min})
-
\mathrm{Area}(\gamma_{AB})
+
O(1).
\end{align}
\end{corollary}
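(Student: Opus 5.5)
The statement follows directly from the definition \eqref{eq:REE-def}, the separability of the bond-dephased state, and the second half of Lemma~\ref{lemma:tripartition}. The plan is to exhibit $\sigma^{\Gamma_{\min}}_{AB}$ as an admissible trial state in the minimization defining $E_R(\rho_{A:B})$, and then read off its relative entropy from Eq.~\eqref{eq:tripartition}.

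\textbf{Step 1: separability of the trial state.} Dephasing every bond cut by $\Gamma_{\min}$ replaces each crossing EPR pair $\ket{\Phi_e}\bra{\Phi_e}$ by the classically correlated state $\omega_e=\frac{1}{D_e}\sum_{i}\ket{i,i}\bra{i,i}$, as in Eq.~\eqref{eq:dephased-EPR}. Expand the tensor product of all $\omega_e$ over $e\in\Gamma_{\min}$ as a uniform mixture over strings $\vec i=(i_e)_{e\in\Gamma_{\min}}$. For each fixed $\vec i$, every cut bond is a product $\ket{i_e}\bra{i_e}\otimes\ket{i_e}\bra{i_e}$. The network then factorizes into three disconnected pieces, supported on $V_A$, $V_B$ and $V_C$, which produce unnormalized positive operators $\tau_A^{(\vec i)}\otimes\tau_B^{(\vec i)}\otimes\tau_C^{(\vec i)}$.

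Normalizing with weights $p_{\vec i}\propto \Tr\tau_A^{(\vec i)}\Tr\tau_B^{(\vec i)}\Tr\tau_C^{(\vec i)}$ gives $\sigma^{\Gamma_{\min}}_{ABC}\in{\rm SEP}(A:B:C)$. Strings with vanishing weight are simply dropped. Tracing out $C$ maps each term to a product $\sigma_A^{(\vec i)}\otimes\sigma_B^{(\vec i)}$, so $\sigma^{\Gamma_{\min}}_{AB}\in{\rm SEP}(A:B)$ in the sense of Eq.~\eqref{eq:SEP-def}.

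\textbf{Step 2: bound by the trial value.} Since $E_R$ is a minimum over ${\rm SEP}(A:B)$, we have
\begin{align}
E_R(\rho_{AB})\le D\!\left(\rho_{AB}\middle\|\sigma^{\Gamma_{\min}}_{AB}\right)=\Area(\Gamma_{\min})-\Area(\gamma_{AB})+O(1),
\end{align}
where the equality is Eq.~\eqref{eq:tripartition} of Lemma~\ref{lemma:tripartition}. This holds almost surely in the large-$n$ limit, inheriting the probabilistic qualifier of the lemma.

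The only point requiring care, and the closest thing to an obstacle, is finiteness of the trial value. We need $\operatorname{supp}\rho_{AB}\subseteq\operatorname{supp}\sigma^{\Gamma_{\min}}_{AB}$ so that the relative entropy is meaningful. This is implicitly part of the content of Lemma~\ref{lemma:tripartition}, which asserts a finite value, so I would simply invoke that lemma rather than re-derive it.
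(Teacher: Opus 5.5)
Your proposal is correct and follows the paper's argument: the paper gets the corollary in one line by noting that $\sigma^{\Gamma_{\min}}_{AB}$ is separable across $A:B$, using it as a trial state in the definition of $E_R$, and invoking Eq.~\eqref{eq:tripartition}. Your decomposition over Schmidt strings and your support-inclusion remark only spell out details the paper leaves implicit; the support inclusion is also guaranteed by the paper's operator inequality $\rho_{ABC}\leq 2^{\Area(\Gamma)}\frac{N_\Gamma}{N}\,\sigma^\Gamma_{ABC}$, which survives the partial trace over $C$.
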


Given Eq.~\eqref{eq:tripartition}, one might wonder whether a similar relation holds for an arbitrary tripartition surface $\Gamma$, not necessarily a minimal one.
In general, this is not the case.
The equality in Eq.~\eqref{eq:tripartition}, at leading order in $n$, is special to the minimal tripartition surface.
For a non-minimal surface, the dominant replica configuration can differ from the one aligned with the chosen tripartition surface.
This will become clear from the derivation below.

\paragraph{Normalization.}
Let us briefly comment on the normalization of the random tensor-network
state.
Although the contracted tensor-network state is not exactly normalized for
each realization, we choose the deterministic normalization factors such
that its squared norm
\begin{align}
N:=\langle\widetilde\Psi|\widetilde\Psi\rangle
\end{align}
satisfies $\mathbb E N=1$, and define the normalized state by
\begin{align}
|\Psi\rangle
=
\frac{|\widetilde\Psi\rangle}{\sqrt N}.
\end{align}
As in standard Haar random tensor networks, fluctuations of $N$ are
exponentially suppressed in the large-bond-dimension limit,
$N=1+O(e^{-cn})$ with high probability for some $c>0$~\cite{Hayden:2016cfa}.
The normalization fluctuations of the bond-dephased state  $\sigma_{ABC}^{\Gamma}$ are likewise exponentially suppressed in the large-bond-dimension limit.

Intuitively, there are two sources of normalization fluctuations.
The first comes from the local random tensors themselves, whose norms
fluctuate slightly around their mean values.
The second arises from contracting tensors along internal bonds.
Even when the individual tensors are normalized exactly, their reduced
states on a contracted bond are not perfectly maximally mixed, producing
an additional small fluctuation in the norm of the contracted network.
Both effects are exponentially suppressed in the large-$n$ limit for the
fixed-size networks considered here.
Normalization fluctuations therefore do not play an important role in our
leading-order results, and we suppress them in what follows.

\subsection{Replica calculation of relative entropy}

We use the standard replica method to evaluate the relative entropy.
Define
\begin{align}
Y_q^{\Gamma}
:=
\Tr\!\left[
\rho_{AB}
\left(\sigma_{AB}^{\Gamma}\right)^{q-1}
\right].
\end{align}
We also define the ordinary replica partition function
\begin{align}
Z_q := \Tr\!\left[ (\rho_{AB})^q \right].
\end{align}
Using $Z_1=Y_1^{\Gamma}=1$, the relative entropy can be written as
\begin{align}
D\!\left(
\rho_{AB}
\middle\|
\sigma_{AB}^{\Gamma}
\right)
=
\left.
\partial_q
\left[
\log_2 Z_q
-
\log_2 Y_q^{\Gamma}
\right]
\right|_{q=1}.
\label{eq:relative-entropy-replica}
\end{align}

Recall that Haar RTN replica averages produce an $S_q$-valued spin $g_v$ at every bulk tensor, where $S_q$ is the permutation group of $q$ elements~\cite{Hayden:2016cfa}.
For an undephased edge $e=(uv)$, the edge weight is given by 
\begin{align}
W_e(g_u,g_v)
=
D_e^{\,\#(g_u^{-1}g_v)-q}, \qquad \figbox{2.5}{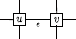}.
\end{align}
where $\#(g)$ denotes the number of cycles of the permutation $g$.
Taking the logarithm, we obtain
\begin{align}
-\log_2 W_e(g_u,g_v)
=
\ell_e\, d(g_u,g_v),
\end{align}
where
\begin{align}
d(g,h)
:=
q-\#(g^{-1}h)
\end{align}
is the Cayley distance on $S_q$.

The boundary conditions for both $Z_q=\Tr\rho_{AB}^q$ and
$Y_q^{\Gamma}
=
\Tr[\rho_{AB}(\sigma_{AB}^{\Gamma})^{q-1}]$
are
\begin{align}
g=e
\qquad &\text{on } C,
\nonumber\\
g=\tau
\qquad &\text{on } A,B,
\end{align}
where $\tau$ is the cyclic permutation.

For the ordinary replica partition function $Z_q$, the leading contribution is obtained by minimizing the total domain-wall cost
\begin{align}
\sum_{e=(uv)} \ell_e\, d(g_u,g_v)
\end{align}
over all bulk spin configurations consistent with these boundary conditions.
The dominant saddle is obtained by placing $\tau$ on the bulk region bounded by
$AB$ and the minimal surface $\gamma_{AB}$, and $e$ on the complementary bulk region bounded by $C$ and $\gamma_{AB}$ (Fig.~\ref{fig_Ising}).
Consequently,
\begin{align}
-\log_2 Z_q
=
(q-1)\,\mathrm{Area}(\gamma_{AB})
+
O(1).
\end{align}

\begin{figure}
\centering
{
\includegraphics[width=0.45\textwidth]{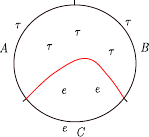}
}
\caption{The leading Ising domain wall contribution to the ordinary replica partition function $Z_q$.
}
\label{fig_Ising}
\end{figure}

For the replica partition function $Y_q^{\Gamma}$ with dephased bonds, replica $1$ comes from $\rho$, while replicas $2,\ldots,q$ come from $\sigma^{\Gamma}$.
Consider an edge $e\in\Gamma$ which is dephased in $\sigma^{\Gamma}$.
Its normalized replica weight can be written as
\begin{align}
\widetilde{W}_e(g,h)
=
D_e^{\,\kappa(g,h)-q},
\end{align}
where $\kappa(g,h)$ counts the number of independent bond indices remaining after imposing the dephasing constraints and the replica permutations $g,h\in S_q$ at the two endpoints of the edge.

To define $\kappa(g,h)$ explicitly, let $i_a$ and $j_a$ denote the ket and bra bond indices in replica $a=1,\ldots,q$.
Since replicas $2,\ldots,q$ come from the dephased state, they obey
\begin{align}
i_a=j_a,
\qquad
a=2,\ldots,q.
\end{align}
The replica permutations at the two endpoints impose
\begin{align}
j_a=i_{g(a)},
\qquad
j_a=i_{h(a)}.
\end{align}
Combining these relations, for $a=2,\ldots,q$ we obtain
\begin{align}
i_a=i_{g(a)}=i_{h(a)},
\end{align}
while for replica $1$ we obtain
\begin{align}
i_{g(1)}=i_{h(1)}.
\end{align}
We define $\kappa(g,h)$ as the number of independent equivalence classes among
$i_1,\ldots,i_q$ generated by these relations.
Each independent class contributes one independent sum over a bond index and hence a factor $D_e$.

It is useful to define the dephased-edge distance
\begin{align}
\delta(g,h)
:=
q-\kappa(g,h),
\end{align}
so that
\begin{align}
-\log_2 \widetilde{W}_e(g,h)
=
\ell_e\,\delta(g,h).
\end{align}
For the identity permutation $e$ and the cyclic permutation $\tau$, one finds
\begin{align}
\delta(e,e)&=0,
\\
\delta(e,\tau)
&=
\delta(\tau,e)
=
\delta(\tau,\tau)
=
q-1.
\end{align}
Thus, unlike an undephased edge, a dephased edge carries a nonzero cost even when the replica permutations on its two endpoints are both equal to $\tau$.

\subsection{Emergent Ising--OR model}

We now consider the minimization of the replica action for $Y_q^{\Gamma}$.
For a given configuration of replica spins $\{g_v\}$ with $g_v\in S_q$, the leading large-$n$ action is
\begin{align}
I_{\Gamma}[\{g_v\}]
=
\sum_{e=(uv)\notin\Gamma}
\ell_e\, d(g_u,g_v)
+
\sum_{e=(uv)\in\Gamma}
\ell_e\, \delta(g_u,g_v),
\label{eq:replica-action}
\end{align}
with boundary conditions
\begin{align}
g=e \qquad &\text{on } C,
\nonumber\\
g=\tau \qquad &\text{on } A,B.
\end{align}
Here, $d(g,h)$ is the usual Cayley distance associated with an undephased EPR bond, while $\delta(g,h)$ is the dephased-edge distance introduced above. 

A priori, the minimization of Eq.~\eqref{eq:replica-action} must be performed over arbitrary $S_q$-valued spin configurations throughout the bulk.
Remarkably, however, it is sufficient to consider only the identity permutation $e$ and the cyclic permutation $\tau$.

\begin{lemma}\label{lemma:Ising}
The minimum of the replica action is attained by a configuration with
$g_v\in\{e,\tau\}$ at every bulk vertex. Namely,
\begin{align}
\min_{\{g_v\in S_q\}}
I_{\Gamma}[\{g_v\}]
=
\min_{\{g_v\in\{e,\tau\}\}}
I_{\Gamma}[\{g_v\}].
\label{eq:binary-reduction}
\end{align}
\end{lemma}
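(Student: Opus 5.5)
The plan is a \emph{layer-cake} (coarea) argument. For each $g\in S_q$ let $f(g):=d(e,g)=q-\#(g)\in\{0,1,\ldots,q-1\}$. Every replica configuration then has a height function $f_v:=f(g_v)$, with $f=0$ on $C$ and $f=q-1$ on $A,B$. I will bound $I_\Gamma[\{g_v\}]$ from below by a sum of $q-1$ binary (Ising--OR) energies, one for each level set of $f$. I will then show that a binary configuration with $g_v\in\{e,\tau\}$ attains this bound.

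\emph{Step 1: edgewise inequalities.} For an undephased edge, the Cayley distance is a metric. The triangle inequality gives
\begin{align}
d(g_u,g_v)\ \geq\ |f_u-f_v| = \sum_{t=0}^{q-2}\big|[f_u>t]-[f_v>t]\big| .
\end{align}
For a dephased edge I want the stronger bound
\begin{align}
\delta(g,h)\ \geq\ \max\big(f(g),f(h)\big) = \sum_{t=0}^{q-2}\max\big([f(g)>t],[f(h)>t]\big).
\end{align}
To prove it, note that the relations defining $\kappa(g,h)$ include $i_a=i_{g(a)}$ for all $a\neq 1$. Within each cycle of $g$, deleting the single link $1\mapsto g(1)$ leaves the cycle's indices connected, since a cycle minus one edge is still a path. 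Hence the equivalence classes are at most as numerous as the cycles of $g$, so $\kappa(g,h)\le \#(g)$ and $\delta(g,h)\ge q-\#(g)=f(g)$. The same argument with $h$ gives $\delta(g,h)\ge f(h)$.

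\emph{Step 2: summing over levels.} Fix a threshold $t\in\{0,\ldots,q-2\}$ and define the binary spin $s^{(t)}_v:=[f_v>t]$. Its boundary values are $0$ on $C$ and $1$ on $AB$. Define the unit-weight Ising--OR energy
\begin{align}
E[s]:=\sum_{e\notin\Gamma}\ell_e\,|s_u-s_v|+\sum_{e\in\Gamma}\ell_e\,\max(s_u,s_v).
\end{align}
Summing the Step 1 inequalities over edges gives
\begin{align}
I_\Gamma[\{g_v\}]\ \ge\ \sum_{t=0}^{q-2}E[s^{(t)}]\ \ge\ (q-1)\min_s E[s].
\end{align}

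\emph{Step 3: achievability.} Take the minimizer $s$ of $E$ and set $g_v=\tau$ where $s_v=1$ and $g_v=e$ where $s_v=0$. Since $d(e,\tau)=q-1$, $d(e,e)=d(\tau,\tau)=0$, $\delta(e,e)=0$, and $\delta(e,\tau)=\delta(\tau,e)=\delta(\tau,\tau)=q-1$, this configuration has action exactly $(q-1)E[s]$. This matches the lower bound and proves Eq.~\eqref{eq:binary-reduction}. It also shows that the reduced problem is precisely the Ising--OR model.

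The main obstacle is the dephased-edge inequality $\delta(g,h)\ge\max(f(g),f(h))$. It hinges on carefully tracking the equivalence classes generated by the dephasing constraints $i_a=j_a$ for $a\ge2$. I would double-check the index bookkeeping there, for instance on $q=2,3$ by direct enumeration. The rest of the argument is the standard coarea reduction.
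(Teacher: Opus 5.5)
Your proposal is correct and follows essentially the same route as the paper. Both proofs use the same two edgewise inequalities (the Cayley triangle inequality, and $\delta(g,h)\geq\max\{d(e,g),d(e,h)\}$ via $\kappa(g,h)\leq\#(g)$), a layer-cake threshold decomposition of $d(e,g_v)$ into Ising--OR energies, and the same $\{e,\tau\}$ achievability check. The differences are only cosmetic: you sum over the $q-1$ integer levels rather than normalizing $x_v=d(e,g_v)/(q-1)$ and integrating over a continuous threshold, and you state explicitly (correctly) that the cycle of $g$ containing replica $1$ stays connected after removing the link $1\mapsto g(1)$, which the paper leaves implicit.
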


We defer the proof of Lemma~\ref{lemma:Ising} to Appendix~\ref{app:binary-reduction}. 
We remark that the key idea of the proof was generated by GPT-5.6 Sol after multiple iterations.

It is therefore useful to introduce an Ising-valued variable
\begin{align}
s_v
=
\begin{cases}
0, & g_v=e,\\
1, & g_v=\tau.
\end{cases}
\end{align}
The boundary conditions become
\begin{align}
s=0 \qquad &\text{on } C,
\nonumber\\
s=1 \qquad &\text{on } A,B.
\end{align}
For an undephased edge, we have
\begin{align}
d(g_u,g_v)
=
(q-1)|s_u-s_v|,
\end{align}
while for a dephased edge,
\begin{align}
\delta(g_u,g_v)
=
(q-1)\max\{s_u,s_v\}.
\end{align}
Hence the replica action reduces to
\begin{align}
I_{\Gamma}[s]
=
(q-1) F_\Gamma[s],\quad
F_\Gamma[s]:=
\left[
\sum_{e=(uv)\notin\Gamma}
\ell_e\,|s_u-s_v|
+
\sum_{e=(uv)\in\Gamma}
\ell_e\,\max\{s_u,s_v\}
\right].
\label{eq:Ising-OR-action}
\end{align}

Eq.~\eqref{eq:Ising-OR-action} defines an effective binary spin
model which we refer to as the \emph{Ising--OR model}.
On an undephased edge,
\begin{align}
|s_u-s_v|=s_u\oplus s_v,
\end{align}
so the interaction is the usual Ising domain-wall energy.
On a dephased edge, on the other hand,
\begin{align}
\max\{s_u,s_v\}=s_u\lor s_v,
\end{align}
and the edge carries a nonzero cost whenever at least one endpoint has $s=1$.
In particular, the configuration $(s_u,s_v)=(1,1)$ carries a nonzero energy on a dephased edge, in contrast to the ordinary Ising interaction (Fig.~\ref{fig_Ising_OR}).

\begin{figure}
\centering
{
\includegraphics[width=0.9\textwidth]{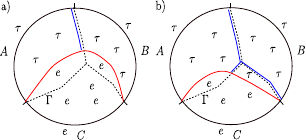}
}
\caption{Leading contributions to the dephased replica partition function $Y_q^{\Gamma}$.
The usual Ising domain wall between the $e$ and $\tau$ regions is shown in red.
Edges on the tripartition surface $\Gamma$ give an additional contribution for the configurations $(e,\tau)$, $(\tau,e)$, and $(\tau,\tau)$, shown in blue.
The tripartition surface $\Gamma$ is indicated by the dotted line.
}
\label{fig_Ising_OR}
\end{figure}

Defining
\begin{align}
\mathcal{T}(\Gamma)
:=
\min_{\substack{
s_v\in\{0,1\}\\
s|_A=s|_B=1,\,
s|_C=0
}}
\left[
\sum_{e=(uv)\notin\Gamma}
\ell_e\,|s_u-s_v|
+
\sum_{e=(uv)\in\Gamma}
\ell_e\,\max\{s_u,s_v\}
\right],
\label{eq:T-Gamma}
\end{align}
we obtain
\begin{align}
-\log_2 Y_q^{\Gamma}
=
(q-1)\mathcal{T}(\Gamma)
+
O(1).
\label{eq:Yq-Ising-OR}
\end{align}

\subsection{Minimal tripartition surface}

The problem of evaluating the mixed replica partition function is thus reduced to finding the ground-state energy of the Ising--OR model.
We now compare the ground-state energies associated with different choices of the tripartition surface.
Our goal is to show that, for any choice of $\Gamma$, $\mathcal{T}(\Gamma)
\geq
\mathrm{Area}(\Gamma_{\min})$ so that the dephased state associated with $\Gamma = \Gamma_{\mathrm{min}}$ achieves the smallest ground state energy.

Before giving the general argument, it is useful to develop some geometric intuition from simple examples. 
Consider three neighboring boundary regions $A$, $B$, and $C$, as illustrated in Fig.~\ref{fig_Ising_OR}, and choose a possibly nonminimal tripartition surface $\Gamma$. 
The edges of $\Gamma$ carry the OR interaction, while ordinary edges away from $\Gamma$ carry the usual Ising domain-wall interaction. 
For a given binary spin configuration, the energy is therefore supported on two types of edges: portion of the defect surface $\Gamma$ on which the OR interaction is nonzero, and ordinary Ising domain walls separating the $e$- and $\tau$-spin regions.

Let us first consider a configuration in which the $e$-spin region extends beyond the bulk region associated with $C$, as shown in Fig.~\ref{fig_Ising_OR}(a). 
Along portions of $\Gamma$ whose endpoints carry $(\tau,\tau)$, the OR interaction contributes the full edge weight. 
By contrast, a portion of $\Gamma$ whose two endpoints carry $(e,e)$ contributes no OR energy. 
Such a missing portion of the defect surface is compensated by an ordinary Ising domain wall separating the enlarged $e$-spin region from the surrounding $\tau$-spin region.
Geometrically, one can ``reroute'' the original tripartition surface through this Ising domain wall.
A subset of the resulting energy-carrying edges therefore forms another tripartition surface, which we denote by $\Gamma'$.
Consequently,
\begin{align}
F_\Gamma[s]
\geq
\Area(\Gamma')
\geq
\Area(\Gamma_{\min}).
\end{align}

A slightly less trivial situation arises when the $e$-spin region crosses the defect surface, as in Fig.~\ref{fig_Ising_OR}(b).
In this case, the union of all energy-carrying defect edges and Ising domain walls need not itself form a tripartition surface.
Nevertheless, one can discard some of these edges and retain a subset which forms a tripartition.
The portions of $\Gamma$ that cease to contribute are again replaced by appropriate pieces of the Ising domain wall.
Since all edge weights are positive, discarding the remaining energy-carrying edges can only decrease the total weight.

These examples suggest that Ising domain walls can be used to repair the missing part of the tripartition surface.
Consider an arbitrary binary spin configuration $s_v\in\{0,1\}$, where
$s=0$ and $s=1$ correspond to $e$ and $\tau$, respectively.
Let us decompose the chosen tripartition surface $\Gamma$ as
\begin{align}
\Gamma
=
\left(\Gamma\setminus\Gamma_{e,e}\right)
\cup
\Gamma_{e,e},
\end{align}
where $\Gamma_{e,e}$ denotes the subset of dephased edges whose two
endpoints both carry $s=0$.
The edges in $\Gamma\setminus\Gamma_{e,e}$ all contribute to the OR term,
whereas the edges in $\Gamma_{e,e}$ do not.

To see why the noncontributing portions $\Gamma_{e,e}$ can always be rerouted, consider a connected portion of $\Gamma_{e,e}$ together with the surrounding $e$-spin region (Fig.~\ref{fig_argument}).
Recall that every component of the tripartition surface separates two bulk regions associated with distinct boundary subsystems.
Since $A$ and $B$ carry $\tau$ boundary conditions, the $e$-spin regions on the two sides of $\Gamma_{e,e}$ cannot both extend to the boundary through $C$, whose boundary condition is $e$.
At least one side lies toward a bulk region associated with $A$ or $B$.
The surrounding $e$-spin region on that side must therefore eventually terminate before reaching the corresponding boundary region, and its boundary is an ordinary Ising domain wall separating $e$ and $\tau$ spins.
Following this domain wall provides a rerouting of the corresponding portion of $\Gamma_{e,e}$.

\begin{figure}
\centering
{
\includegraphics[width=0.4\textwidth]{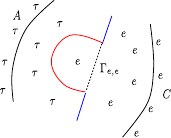}
}
\caption{Rerouting of the missing portion $\Gamma_{e,e}$ of the defect OR energy. There must be a rerouting Ising-domain wall on one side of $\Gamma_{e,e}$.
}
\label{fig_argument}
\end{figure}

The key observation is therefore that the portions of $\Gamma_{e,e}$ can be replaced by suitable portions of the Ising domain walls, namely ordinary edges across which the spin changes between $s=0$ and $s=1$.
Such a rerouting only deforms the interface inside the bulk and preserves the separation of $A$, $B$, and $C$.
In this way, a subset of the edges contributing to the Ising--OR energy can always be chosen to form another tripartition surface $\widetilde{\Gamma}_s$ separating $A$, $B$, and $C$.

Consequently,
\begin{align}
F_\Gamma[s]
\geq
\mathrm{Area}(\widetilde{\Gamma}_s)
\geq
\mathrm{Area}(\Gamma_{\min}),
\end{align}
where $\Gamma_{\min}$ is a minimal bulk tripartition surface.
Since this holds for every spin configuration, minimizing over $s$ gives
\begin{align}
\mathcal{T}(\Gamma)
\geq
\mathrm{Area}(\Gamma_{\min})
\label{eq:T-lower-general}
\end{align}
for any choice of the tripartition surface $\Gamma$.

For $\Gamma=\Gamma_{\min}$, the bound is saturated.  Indeed, choose
$s_v=1$ on $V_A\cup V_B$ and $s_v=0$ on $V_C$.  All ordinary edges then
have equal spins at their endpoints and contribute zero, while every edge
of $\Gamma_{\min}$ contributes its full weight through the OR interaction.
Hence
\begin{align}
\mathcal T(\Gamma_{\min})
\leq
\Area(\Gamma_{\min}).
\end{align}
Hence, this gives
\begin{align}
\mathcal T(\Gamma_{\min})
=
\Area(\Gamma_{\min}).
\end{align}
This establishes the main claim of Lemma~\ref{lemma:tripartition}.

\subsection{On the $q=1$ limit}

In the discussion above, we evaluated
$-\Tr[\rho_{AB}\log_2\sigma^\Gamma_{AB}]$ using the replica method,
by first computing the corresponding quantity at integer $q$ and then
formally taking the $q\to1$ limit.
Strictly speaking, one should verify that this limit is smooth and that
the leading large-$n$ behavior can be continued to $q=1$.

To address this point, define
\begin{align}
H_q^\Gamma
:=
-\frac{1}{q-1}
\log_2\Tr\!\left[
\rho_{AB}(\sigma^\Gamma_{AB})^{q-1}
\right].
\end{align}
This quantity is monotonically non-increasing with $q$, and therefore
\begin{align}
-\Tr\!\left[
\rho_{AB}\log_2\sigma^\Gamma_{AB}
\right]
=
H_1^\Gamma
\geq
H_2^\Gamma.
\end{align}
For $\Gamma=\Gamma_{\min}$, the $q=2$ replica calculation gives
\begin{align}
H_2^{\Gamma_{\min}}
=
\Area(\Gamma_{\min})+O(1).
\end{align}
Hence
\begin{align}
-\Tr\!\left[
\rho_{AB}\log_2\sigma^{\Gamma_{\min}}_{AB}
\right]
\geq
\Area(\Gamma_{\min})+O(1),
\end{align}
which establishes the desired leading-order lower bound at $q=1$.

A direct upper bound can be obtained without taking a replica limit.
For a fixed tripartition surface $\Gamma$, cut all bonds crossing
$\Gamma$ and collectively denote their Schmidt labels by $\mathbf i$.
Writing
\begin{align}
d_\Gamma:=\prod_{e\in\Gamma}D_e=2^{\Area(\Gamma)},
\end{align}
the normalized original and dephased states can be expressed as
\begin{align}
|\Psi\rangle
=
\frac{1}{\sqrt{N d_\Gamma}}
\sum_{\mathbf i}|\psi_{\mathbf i}\rangle,\qquad
\sigma^\Gamma_{ABC}
=
\frac{1}{N_\Gamma d_\Gamma}
\sum_{\mathbf i}
|\psi_{\mathbf i}\rangle\langle\psi_{\mathbf i}|,
\end{align}
where $N$ and $N_\Gamma$ are the normalization factors of the original
and dephased tensor networks, respectively.
Since there are $d_\Gamma$ configurations of $\mathbf i$, the
Cauchy--Schwarz inequality gives
\begin{align}
\rho_{ABC}
\leq
d_\Gamma\frac{N_\Gamma}{N}\,
\sigma^\Gamma_{ABC}
=
2^{\Area(\Gamma)}
\frac{N_\Gamma}{N}\,
\sigma^\Gamma_{ABC}.
\end{align}

Moreover, since $\sigma^\Gamma_{ABC}$ is separable across $A:B:C$, we may
write
\begin{align}
\sigma^\Gamma_{ABC}
=
\sum_i p_i\,
\sigma_A^{(i)}\otimes\sigma_B^{(i)}\otimes\sigma_C^{(i)}.
\end{align}
Since each $\sigma_C^{(i)}$ is a density matrix and therefore satisfies
$\sigma_C^{(i)}\leq I_C$, it follows that
\begin{align}
\sigma^\Gamma_{ABC}
\leq
\sigma^\Gamma_{AB}\otimes I_C .
\end{align}
Hence
\begin{align}
\rho_{ABC}
\leq
2^{\Area(\Gamma)}
\frac{N_\Gamma}{N}
\left(\sigma^\Gamma_{AB}\otimes I_C\right).
\end{align}
Since $\rho_{ABC}=|\Psi\rangle\langle\Psi|$ is pure, this implies
\begin{align}
\left\langle\Psi\left|
\left((\sigma^\Gamma_{AB})^{-1}\otimes I_C\right)
\right|\Psi\right\rangle
\leq
2^{\Area(\Gamma)}
\frac{N_\Gamma}{N},
\end{align}
where the inverse is understood on the support of
$\sigma^\Gamma_{AB}$.
Using Jensen's inequality, we therefore obtain
\begin{align}
-\Tr\!\left[
\rho_{AB}\log_2\sigma^\Gamma_{AB}
\right]
&=
\left\langle\Psi\left|
\log_2\!\left((\sigma^\Gamma_{AB})^{-1}\otimes I_C\right)
\right|\Psi\right\rangle
\nonumber\\
&\leq
\Area(\Gamma)
+
\log_2\frac{N_\Gamma}{N}.
\end{align}
The normalization fluctuations discussed above give
$\log_2(N_\Gamma/N)=o(1)$, and hence
\begin{align}
-\Tr\!\left[
\rho_{AB}\log_2\sigma^\Gamma_{AB}
\right]
\leq
\Area(\Gamma)+o(1).
\end{align}
This established the matching upper and lower bounds for $\Gamma = \Gamma_{\mathrm{min}}$.

\section{Lower bound from product-state overlap}\label{sec:bound}

The previous section established the proposed expression for the REE as a
general upper bound for Haar random tensor networks. 
To prove the matching lower bound, however, one must rule out the possibility
that an arbitrary separable state achieves a parametrically smaller relative
entropy. 

\subsection{Product-state overlap}

Let us begin with a generic lower bound that reduces the optimization over separable states to the maximal product-state overlap~\cite{MoriYoshidaREE}.

\begin{lemma}[Product-overlap bound]
Let $\ket{\Psi}_{ABC}$ be a tripartite pure state and
\begin{equation}
    \rho_{AB}
    =
    \Tr_C \ket{\Psi}\bra{\Psi}.
\end{equation}
Then
\begin{equation}
    E_R(\rho_{A:B})
    \geq
    -S(\rho_{AB})
    -
    \log_2 \Lambda(\Psi_{A:B:C}),
    \label{eq:REE-product-overlap-bound}
\end{equation}
where
\begin{equation}
    \Lambda(\Psi_{A:B:C})
    :=
    \max_{\alpha,\beta,\gamma}
    \left|
        \langle \alpha,\beta,\gamma|\Psi\rangle
    \right|^2 .
\end{equation}
\end{lemma}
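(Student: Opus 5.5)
Fix an arbitrary separable $\sigma_{AB}\in{\rm SEP}(A:B)$ and write
\begin{equation}
D(\rho_{AB}\Vert\sigma_{AB}) = -S(\rho_{AB}) - \Tr[\rho_{AB}\log_2\sigma_{AB}].
\end{equation}
It then suffices to show that $-\Tr[\rho_{AB}\log_2\sigma_{AB}]\geq -\log_2\Lambda(\Psi_{A:B:C})$ for every such $\sigma_{AB}$. We will first reduce to pure product states. Taking spectral decompositions of each $\sigma^{(i)}_A$ and $\sigma^{(i)}_B$, any separable state can be written as $\sigma_{AB}=\sum_k p_k\,|\alpha_k\rangle\langle\alpha_k|\otimes|\beta_k\rangle\langle\beta_k|$ with unit vectors $\alpha_k,\beta_k$ and a probability distribution $p_k$.

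Next, we will apply operator concavity of the logarithm, in the form of Jensen's inequality in the spectral basis of $\rho_{AB}$. Write $\rho_{AB}=\sum_j \lambda_j |\phi_j\rangle\langle\phi_j|$. For each unit vector $\phi_j$, Jensen's inequality for the concave function $\log$ (applied to the spectral measure of $\sigma_{AB}$ in the state $\phi_j$) gives
\begin{equation}
\langle\phi_j|\log_2\sigma_{AB}|\phi_j\rangle\leq\log_2\langle\phi_j|\sigma_{AB}|\phi_j\rangle .
\end{equation}
If the support of $\rho_{AB}$ is not contained in that of $\sigma_{AB}$, the relative entropy is infinite and there is nothing to prove. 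Combining over $j$, and then using concavity of $\log$ once more,
\begin{equation}
-\Tr[\rho_{AB}\log_2\sigma_{AB}]\geq -\sum_j\lambda_j\log_2\langle\phi_j|\sigma_{AB}|\phi_j\rangle \geq -\log_2 \Tr[\rho_{AB}\sigma_{AB}] .
\end{equation}

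It remains to bound $\Tr[\rho_{AB}\sigma_{AB}]$. By linearity,
\begin{equation}
\Tr[\rho_{AB}\sigma_{AB}]=\sum_k p_k\,\langle\alpha_k\beta_k|\rho_{AB}|\alpha_k\beta_k\rangle .
\end{equation}
Each term equals $\|(\langle\alpha_k|\otimes\langle\beta_k|\otimes I_C)|\Psi\rangle\|^2$, and this is exactly $\max_\gamma|\langle\alpha_k,\beta_k,\gamma|\Psi\rangle|^2$, attained by taking $\gamma$ to be the normalized $C$-vector $(\langle\alpha_k\beta_k|\otimes I_C)|\Psi\rangle$. Each term is therefore at most $\Lambda(\Psi_{A:B:C})$, and the convex combination is too. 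Substituting this into the Jensen bound and minimizing over $\sigma_{AB}$ gives Eq.~\eqref{eq:REE-product-overlap-bound}.

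The only delicate step is the Jensen step, which is where operator concavity is used. The safe route is to apply the scalar Jensen inequality to the spectral measure of $\sigma_{AB}$ for each eigenvector $\phi_j$ separately, rather than invoking an operator inequality directly; the support issue is handled as noted above. The identification of each term with an optimized overlap, via purification on $C$, is the key observation that makes the tripartite quantity $\Lambda$ appear. The remaining steps are routine.
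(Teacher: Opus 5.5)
Your proposal is correct and follows the paper's proof essentially step by step. Both arguments dispose of the support issue, then use concavity of the logarithm to get $\Tr[\rho_{AB}\log_2\sigma_{AB}]\leq\log_2\Tr[\rho_{AB}\sigma_{AB}]$, then decompose $\sigma_{AB}$ into pure product states and bound each $\langle\alpha_k,\beta_k|\rho_{AB}|\alpha_k,\beta_k\rangle$ by $\Lambda(\Psi_{A:B:C})$ via the projected vector on $C$. Your per-eigenvector Jensen argument is a more explicit justification of the concavity step that the paper states in one line; strictly speaking, it uses scalar concavity on spectral measures rather than operator concavity.
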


\begin{proof}
It is useful first to note that the maximal tripartite product overlap of
$\ket{\Psi}_{ABC}$ can equivalently be written as a bipartite product
overlap of $\rho_{AB}$:
\begin{equation}
    \Lambda(\Psi_{A:B:C})
    =
    \Lambda(\rho_{A:B})
    :=
    \max_{\alpha,\beta}
    \langle \alpha,\beta|
        \rho_{AB}
    |\alpha,\beta\rangle .
    \label{eq:pure-mixed-product-overlap}
\end{equation}
Indeed, for fixed $\alpha$ and $\beta$, define the projected vector
\begin{equation}
    \ket{v_{\alpha,\beta}}_C
    :=
    {}_{AB}\langle\alpha,\beta|\Psi\rangle_{ABC}.
\end{equation}
Then
\begin{equation}
    \langle \alpha,\beta|
        \rho_{AB}
    |\alpha,\beta\rangle
    =
    \|v_{\alpha,\beta}\|^2
    =
    \max_{\gamma}
    \left|
        \langle\alpha,\beta,\gamma|\Psi\rangle
    \right|^2,
\end{equation}
which proves Eq.~\eqref{eq:pure-mixed-product-overlap}.

Now let $\sigma_{AB}$ be an arbitrary separable state. If
$\operatorname{supp}\rho_{AB}\not\subseteq
\operatorname{supp}\sigma_{AB}$, then
\begin{equation}
    D(\rho_{AB}\Vert\sigma_{AB})=+\infty,
\end{equation}
and the desired inequality is immediate. We may therefore assume
$\operatorname{supp}\rho_{AB}\subseteq
\operatorname{supp}\sigma_{AB}$.

By concavity of the logarithm,
\begin{equation}
    \Tr\!\left[
        \rho_{AB}\log_2\sigma_{AB}
    \right]
    \leq
    \log_2
    \Tr\!\left[
        \rho_{AB}\sigma_{AB}
    \right].
    \label{eq:log-concavity}
\end{equation}
Since $\sigma_{AB}$ is separable, we may write
\begin{equation}
    \sigma_{AB}
    =
    \sum_i p_i
    \ket{a_i,b_i}\bra{a_i,b_i},
\end{equation}
where $p_i\geq0$ and $\sum_i p_i=1$. Hence
\begin{align}
    \Tr\!\left[
        \rho_{AB}\sigma_{AB}
    \right]
    &=
    \sum_i p_i
    \langle a_i,b_i|
        \rho_{AB}
    |a_i,b_i\rangle
    \\
    &\leq
    \Lambda(\rho_{A:B})
    =
    \Lambda(\Psi_{A:B:C}).
\end{align}
It follows that
\begin{align}
    D(\rho_{AB}\Vert\sigma_{AB})
    &=
    -S(\rho_{AB})
    -
    \Tr\!\left[
        \rho_{AB}\log_2\sigma_{AB}
    \right]
    \\
    &\geq
    -S(\rho_{AB})
    -
    \log_2\Lambda(\Psi_{A:B:C}).
\end{align}
Since this bound holds for every separable $\sigma_{AB}$, minimizing over
$\sigma_{AB}\in\mathrm{SEP}(A:B)$ proves the claim.
\end{proof}

\subsection{Product-state overlap for tripartite minimal surface}

We now evaluate the overlap with a product state associated with a bulk
tripartition surface. 
For a given $\Gamma$, let $\ket{\Psi^{\Gamma,|0\rangle}_{ABC}}$ be the normalized pure state obtained by replacing every EPR bond cut by $\Gamma$ with the product state $\ket{0}_{\Gamma}$. 

\begin{lemma}
For any bulk tripartition surface $\Gamma$,
\begin{equation}
    -\log_2
    \left|
        \langle \Psi^{\Gamma,|0\rangle}_{ABC}|\Psi_{ABC}\rangle
    \right|^2
    \geq
    -\log_2
    \left|
        \langle
            \Psi^{\Gamma_{\mathrm{min}}, |0\rangle}_{ABC}
            |
            \Psi_{ABC}
        \rangle
    \right|^2
    +O(1).
    \label{eq:product-overlap-minimization}
\end{equation}
Moreover,
\begin{equation}
    -\log_2
    \left|
        \langle
            \Psi^{\Gamma_{\mathrm{min}}, |0\rangle}_{ABC}
            |
            \Psi_{ABC}
        \rangle
    \right|^2
    =
    \operatorname{Area}(\Gamma_{\min})+O(1).
    \label{eq:product-overlap-minimal-surface}
\end{equation}
\end{lemma}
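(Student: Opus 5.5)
We will evaluate the overlap $\langle\Psi^{\Gamma,|0\rangle}|\Psi\rangle$ with the same replica/Ising machinery as in Section~\ref{sec:replica}, but applied to the squared overlap rather than to $\Tr[\rho\sigma^{q-1}]$.

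\textbf{Step 1: bond-by-bond factorization.} Cutting the bonds on $\Gamma$ turns the original network into a sum over Schmidt labels $\mathbf i$, as in the $q=1$ discussion: $|\Psi\rangle=(N d_\Gamma)^{-1/2}\sum_{\mathbf i}|\psi_{\mathbf i}\rangle$. Replacing each cut EPR pair by $|0\rangle|0\rangle$ selects the single term $\mathbf i=\mathbf 0$, so $|\Psi^{\Gamma,|0\rangle}\rangle=|\psi_{\mathbf 0}\rangle/\sqrt{N_0}$. Here $|\psi_{\mathbf 0}\rangle=|\psi^A_{\mathbf 0}\rangle\otimes|\psi^B_{\mathbf 0}\rangle\otimes|\psi^C_{\mathbf 0}\rangle$ is a product of the three subnetworks on $V_A,V_B,V_C$. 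Hence
\begin{align}
\left|\langle\Psi^{\Gamma,|0\rangle}|\Psi\rangle\right|^2
=\frac{1}{N N_0 d_\Gamma}\Big|N_0+\sum_{\mathbf i\neq\mathbf 0}\langle\psi_{\mathbf 0}|\psi_{\mathbf i}\rangle\Big|^2 .
\end{align}

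\textbf{Step 2: evaluate with second-moment replica averages.} We compute $\mathbb E|\langle\psi_{\mathbf 0}|\tilde\Psi\rangle|^2$ and $\mathbb E N_0$ (and, for concentration, the fourth moments) by Haar averaging. This is a $q=2$ replica model with spins $g_v\in S_2=\{e,\tau\}$. On uncut edges the weight is the usual Ising cost $\ell_e|s_u-s_v|$. On cut edges of $\Gamma$, one replica carries the projector onto $|0\rangle$ and the other carries the EPR bond. Contracting the bond index then gives a weight $D_e^{-1}$ irrespective of the spins (up to $O(1)$ factors), times the normalization conventions.

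The diagonal term gives $2^{-\Area(\Gamma)}$. Cross terms $\mathbf i\neq\mathbf 0$ have vanishing mean unless $\mathbf i$ and $\mathbf 0$ coincide after Haar averaging, which only happens on the diagonal. Their variance is bounded by the same Ising sum and is suppressed relative to $N_0^2$, so it contributes at most an $O(1)$ multiplicative factor. Together with the concentration of $N$ and $N_0$ (already assumed in the normalization paragraph), this gives
\begin{align}
-\log_2\left|\langle\Psi^{\Gamma,|0\rangle}|\Psi\rangle\right|^2=\Area(\Gamma)+O(1).
\end{align}
More precisely, the answer is the minimum over Ising configurations of the domain-wall energy plus $\Area(\Gamma)$. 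Since the ordinary-bond part is minimized by the constant configuration within each region, the total is simply $\Area(\Gamma)$.

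\textbf{Step 3: conclude.} If the estimate of Step~2 holds for every $\Gamma$ with the $O(1)$ uniform in the (finitely many) surfaces, Eq.~\eqref{eq:product-overlap-minimal-surface} follows directly. Eq.~\eqref{eq:product-overlap-minimization} then follows because $\Area(\Gamma)\ge\Area(\Gamma_{\min})$. A union bound over the finitely many tripartition surfaces of the fixed graph gives the almost-sure statement.

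\textbf{Main obstacle.} The main difficulty is Step~2: controlling the off-diagonal sum $\sum_{\mathbf i\neq\mathbf 0}\langle\psi_{\mathbf 0}|\psi_{\mathbf i}\rangle$ in the actual realization rather than only in expectation. There are exponentially many such terms, so a naive bound fails. I would first bound the variance through a $q=2$ replica computation, show it is $O(2^{-\Area(\Gamma)}\cdot 2^{-cn})$ relative to the squared mean, and then apply Chebyshev's inequality. If this fails, I would instead use the tensor-by-tensor Haar independence and condition on the tensors in $V_B,V_C$ while averaging over those in $V_A$.
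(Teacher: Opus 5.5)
There is a genuine gap, and it sits exactly where you suspected. In Step~2 the cut-edge weight is not spin independent. For $e\in\Gamma$, replica~1 carries $|\Phi_e\rangle$ and replica~2 carries $|0,0\rangle$. The configuration $(g_u,g_v)=(e,e)$ then gives $\Tr|\Phi_e\rangle\langle\Phi_e|\cdot\Tr|0,0\rangle\langle 0,0|=1$, while $(e,\tau)$, $(\tau,e)$ and $(\tau,\tau)$ each give $D_e^{-1}$. The cut-edge cost is therefore the OR interaction $\ell_e\max\{s_u,s_v\}$, and the two-replica answer is the Ising--OR ground-state energy $T_{111}(\Gamma)$ with $s=1$ on $A,B,C$.

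An island of $s=0$ spins surrounding internal cut edges pays only its Ising domain wall. Such configurations are precisely your off-diagonal terms $\mathbf i\neq\mathbf 0$, and they are not suppressed in general. So the claim $-\log_2|\langle\Psi^{\Gamma,|0\rangle}|\Psi\rangle|^2=\Area(\Gamma)+O(1)$ for every $\Gamma$ is false.

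Concretely, in the two-tensor network take $\Gamma$ through $C_A,C_B,R$ with $R>A+B$. This is compatible with the local triangle inequalities, e.g.\ $(A,B,C_A,C_B,R)\propto(2,2,5,5,5)$. Let $|g_\ell\rangle_A:={}_{C_AR}\langle 0,\ell|G\rangle$ and $|h_\ell\rangle_B:={}_{C_BR}\langle 0,\ell|H\rangle$, with $\ell$ running over a basis of $\mathcal H_R$ containing $|0\rangle$. Then
\[
\langle\Psi^{\Gamma,|0\rangle}|\Psi\rangle=\sqrt r\sum_{\ell}\frac{\langle g_0|g_\ell\rangle\langle h_0|h_\ell\rangle}{\|g_0\|\,\|h_0\|}.
\]
The $\ell=0$ term gives $|\cdot|^2\simeq 1/(c_Ac_Br)=2^{-\Area(\Gamma)}$. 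The $r-1$ independent cross terms add incoherently to $|\cdot|^2\simeq r\cdot r\cdot\frac{1}{ac_Ar}\cdot\frac{1}{bc_Br}=\frac{1}{abc_Ac_B}$, which is larger by $2^{R-A-B}$. Hence the true value is $\min\{R,A+B\}+C_A+C_B+O(1)<\Area(\Gamma)$. This is exactly $T_{111}(\Gamma)$, realized by setting both vertex spins to $0$, so your variance/Chebyshev plan cannot succeed for such $\Gamma$.

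What the lemma actually needs is only the sandwich $\Area(\Gamma_{\min})\le T_{111}(\Gamma)\le\Area(\Gamma)$, not $T_{111}(\Gamma)=\Area(\Gamma)$. The upper bound is your $s\equiv 1$ (diagonal) configuration. The lower bound is the missing idea, the cut-replacement argument:

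\begin{enumerate}
\item Take any configuration with $s=1$ on $A,B,C$. Reassign each connected cluster of $s=0$ vertices wholesale to a single one of $V_A,V_B,V_C$; this is possible because the clusters never touch the boundary.
\item The resulting partition defines a tripartition surface $\Gamma'$. Each edge of $\Gamma'$ is either a cut edge of $\Gamma$ with spins $(\tau,\tau)$, or an edge with spins $(e,\tau)$.
\item Each such edge pays its full weight, so the Ising--OR energy is at least $\Area(\Gamma')\ge\Area(\Gamma_{\min})$.
\end{enumerate}

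This gives the first inequality directly, and at $\Gamma=\Gamma_{\min}$ it gives the second. Even the second statement needs this argument in your framework: it is the minimality of $\Gamma_{\min}$, not a generic variance estimate, that makes the off-diagonal terms subleading there.
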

 
It is then natural to ask whether the product state associated with $\Gamma_{\min}$ achieves the maximal possible product state overlap: 
\begin{equation}
    -\log_2\Lambda_{A:B:C}(\Psi_{ABC})
    \overset{?}{=}
    \operatorname{Area}(\Gamma_{\min})+O(1).
    \label{eq:maximal-product-overlap-conjecture}
\end{equation}
Combining the proven upper bound on $E_R(\rho_{AB})$, establishing Eq.~\eqref{eq:maximal-product-overlap-conjecture} would lead to 
\begin{equation}
    E_R(\rho_{AB})
    \overset{?}{=}
    \operatorname{Area}(\Gamma_{\min})
    -
    \operatorname{Area}(\gamma_{AB})
    +O(1).
    \label{eq:REE-conjectured-form}
\end{equation}
Hence, our proposal for the holographic dual of $E_R$ in Haar RTNs is reduced to whether the maximal tripartite product overlap satisfies Eq.~\eqref{eq:maximal-product-overlap-conjecture}.

\begin{proof}
Let $\ket{\Psi^{\Gamma,|0\rangle}_{ABC}}$ be the normalized pure state obtained by replacing every EPR bond cut by the tripartition surface $\Gamma$ with the
product state $\ket{0}_{\Gamma}$. 
Since fixing the bond indices on $\Gamma$ disconnects the three regions, this state factorizes as
\begin{equation}
    \ket{\Psi^{\Gamma,|0\rangle}_{ABC}}
    =
    \ket{\phi^\Gamma_A}
    \otimes
    \ket{\phi^\Gamma_B}
    \otimes
    \ket{\phi^\Gamma_C}.
    \label{eq:product-state-factorization}
\end{equation}
We define
\begin{equation}
    \rho_{ABC}^{\Gamma,|0\rangle}
    :=
    \ket{\Psi^{\Gamma,|0\rangle}_{ABC}}
    \bra{\Psi^{\Gamma,|0\rangle}_{ABC}},
\end{equation}
and evaluate the squared overlap
\begin{equation}
    X^\Gamma_2
    :=
    \left|
        \langle \Psi^{\Gamma,|0\rangle}_{ABC}|\Psi_{ABC} \rangle
    \right|^2
    =
    \Tr\left(
        \rho_{ABC}\rho^{\Gamma,|0\rangle}_{ABC}
    \right).
    \label{eq:product-state-overlap}
\end{equation}

This is a two-replica calculation. 
Replica 1 comes from $\rho_{ABC}$, whereas replica 2 comes from $\rho^{\Gamma,|0\rangle}$. 
Since the trace in Eq.~\eqref{eq:product-state-overlap} is taken over the full boundary
$ABC$, the boundary conditions are
\begin{equation}
    g=\tau
    \qquad
    \text{on } A,B,C,
    \label{eq:product-overlap-boundary-condition}
\end{equation}
where $\tau\in S_2$ is the swap permutation.
Introducing the Ising-valued variable
\begin{equation}
    s_v
    =
    \begin{cases}
        0, & g_v=e,\\
        1, & g_v=\tau,
    \end{cases}
\end{equation}
we obtain
\begin{equation}
    -\log_2 W_e(g_u,g_v)
    =
    \ell_e |s_u-s_v|,
    \qquad
    e\notin\Gamma,
\end{equation}
and
\begin{equation}
    -\log_2 W^{(0)}_e(g_u,g_v)
    =
    \ell_e\max\{s_u,s_v\},
    \qquad
    e\in\Gamma
\end{equation}
with boundary conditions $s=1$ on $A,B,C$. 
Defining
\begin{equation}
    T_{111}(\Gamma)
    :=
    \min_{\substack{s_v\in\{0,1\}\\
                    s|_A=s|_B=s|_C=1}}
    \left[
        \sum_{e=(uv)\notin\Gamma}
            \ell_e |s_u-s_v|
        +
        \sum_{e=(uv)\in\Gamma}
            \ell_e\max\{s_u,s_v\}
    \right],
    \label{eq:T111-definition}
\end{equation}
we obtain
\begin{equation}
    -\log_2
     \left|
        \langle \Psi^{\Gamma,|0\rangle}_{ABC}|\Psi_{ABC} \rangle
    \right|^2
    =
    T_{111}(\Gamma)+O(1).
    \label{eq:product-overlap-result}
\end{equation}

Considering the uniform configuration $s_v=1$ gives
\begin{equation}
    T_{111}(\Gamma)
    \leq
    \operatorname{Area}(\Gamma).
\end{equation}
Conversely, the cut-replacement argument used for the Ising--OR model applies here as well. 
Therefore
\begin{equation}
    T_{111}(\Gamma)
    \geq
    \operatorname{Area}(\Gamma_{\min}).
\end{equation}
Combining the two bounds gives
\begin{equation}
    \operatorname{Area}(\Gamma_{\min})
    \leq
    T_{111}(\Gamma)
    \leq
    \operatorname{Area}(\Gamma).
    \label{eq:product-overlap-bounds}
\end{equation}
In particular, for a minimal tripartition surface,
\begin{equation}
    T_{111}(\Gamma_{\min})
    =
    \operatorname{Area}(\Gamma_{\min}),
\end{equation}
and hence
\begin{equation}
    -\log_2
    \left|
        \langle \Psi^{\Gamma_{\mathrm{min}},|0\rangle}_{ABC}|\Psi_{ABC} \rangle
    \right|^2
    =
    \operatorname{Area}(\Gamma_{\min})+O(1).
    \label{eq:minimal-product-overlap}
\end{equation}
\end{proof}

We have therefore constructed a tripartite product state whose overlap has
precisely the scaling required by Eq.~\eqref{eq:maximal-product-overlap-conjecture}.
What remains is to show that no product state can achieve a parametrically larger overlap.
In the following sections, we establish this statement rigorously for tensor networks consisting of one, two, and three Haar-random tensors.

\section{One random tensor}
\label{sec:one-random-tensor}

In this section, we recall the proof of the relative entropy of entanglement $E_R$ for one Haar random tensor~\cite{MoriYoshidaREE}.

\subsection{Haar random state as Gaussian vector}

Let
\begin{align}
\dim\mathcal H_A=a,
\qquad
\dim\mathcal H_B=b,
\qquad
\dim\mathcal H_C=c,
\end{align}
and introduce the logarithmic dimensions (numbers of qubits)
\begin{align}
A=\log_2a,
\qquad
B=\log_2b,
\qquad
C=\log_2c.
\end{align}
Throughout this paper, $K,\eta>0$ denote universal constants whose values may change from line to line.  
Namely, we use $K$ for multiplicative constants and $\eta$ for constants appearing in probability exponents.

We represent a Haar random state $|\Psi\rangle_{ABC}$ using a complex Gaussian tensor
\begin{align}
|G\rangle_{ABC}
=
\sum_{i=1}^{a}
\sum_{j=1}^{b}
\sum_{k=1}^{c}
G_{ijk}|i,j,k\rangle 
=  \figbox{2.0}{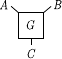}
\end{align}
with covariance
\begin{align}
\mathbb E
\left[
G_{ijk}\overline{G_{i'j'k'}}
\right]
=
\frac{1}{abc}\,
\delta_{ii'}\delta_{jj'}\delta_{kk'}.
\label{eq:one-tensor-Gaussian-covariance}
\end{align}
We assume that the logarithmic dimensions obey the triangle inequalities, 
\begin{align}
A<B+C,
\qquad
B<C+A,
\qquad
C<A+B.
\end{align}
These conditions exclude trivial regimes. 
For example, if $A>B+C$, then $BC$ is nearly maximally entangled with a subspace of $A$, so after a local unitary on $A$ the state reduces approximately to EPR pairs between $A$ and $B$ and between $A$ and $C$.

The corresponding normalized Haar random state is
\begin{align}
|\Psi\rangle_{ABC}
=
\frac{|G\rangle_{ABC}}{\sqrt{\langle G|G\rangle}}
\label{eq:one-tensor-Haar-state}
\end{align}
The Gaussian norm satisfies $\mathbb E\langle G|G\rangle=1$ and its fluctuations are strongly suppressed: 
\begin{align}
\Pr\left[
\left|
\langle G|G\rangle-1
\right|> t
\right]
\leq
K e^{-\eta t^2abc}
\label{eq:one-tensor-Gaussian-norm-concentration}
\end{align}
for $0<t\leq 1$ with some absolute constants $K,\eta$.
Hence, we will suppress the normalization factor in what follows, expressing a Haar random state $|\Psi\rangle_{ABC}$ simply as 
\begin{align}
|\Psi\rangle_{ABC} \approx |G\rangle_{ABC}.
\end{align}
This simplification is not crucial in establishing the main claims concerning the leading order behaviors. 


Throughout this paper, we frequently use the standard Gaussian concentration inequality~\cite{Ledoux2001}.

\begin{lemma}\label{lem:Gaussian-concentration}
If $g$ is a standard complex Gaussian vector and $f(g)$ is
$L$-Lipschitz, then
\begin{align}
\Pr\left[
f(g)>
\mathbb Ef(g)+tL
\right]
\leq
e^{-\eta t^2}
\label{eq:Gaussian-concentration-Lipschitz}
\end{align}
with some absolute constant $\eta >0$.
\end{lemma}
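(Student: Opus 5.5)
The statement to be proved is Lemma~\ref{lem:Gaussian-concentration}: Gaussian concentration for a Lipschitz function of a standard complex Gaussian vector. Since this is the classical Gaussian concentration inequality, the plan is to reduce it to the real case and then invoke a standard proof.

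\emph{Reduction to the real case.} Write $g=(x+iy)/\sqrt2$ with $x,y$ independent standard real Gaussian vectors in $\mathbb R^{N}$. Identify $\mathbb C^{N}\cong\mathbb R^{2N}$ isometrically, up to the factor $1/\sqrt2$. Then $f$, viewed as a function of $(x,y)$, is $L/\sqrt2$-Lipschitz with respect to the Euclidean norm. It therefore suffices to prove the real statement
\begin{align}
\Pr\left[F(X)>\mathbb E F(X)+s\right]\leq e^{-s^2/(2L'^2)}
\end{align}
for $F$ $L'$-Lipschitz on $\mathbb R^{m}$ with $X$ standard Gaussian. Setting $s=tL$ and $L'=L/\sqrt2$ then gives the claim with $\eta=1$; since any absolute constant is allowed, the precise value does not matter.

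\emph{Proof of the real statement.} By a standard smoothing argument (convolve $F$ with a small Gaussian mollifier and take a limit), we may assume $F$ is smooth with $|\nabla F|\leq L'$. The cleanest route is the Maurey--Pisier interpolation argument.
\begin{enumerate}
\item Let $X'$ be an independent copy of $X$, and set $X_\theta=X\sin\theta+X'\cos\theta$. Then
\begin{align}
F(X)-F(X')=\int_0^{\pi/2}\langle\nabla F(X_\theta),\dot X_\theta\rangle\,d\theta,
\end{align}
and for each $\theta$ the pair $(X_\theta,\dot X_\theta)$ is a pair of independent standard Gaussians.
\item Apply Jensen's inequality to $\mathbb E_{X'}$ to get $\mathbb E e^{\lambda(F(X)-\mathbb E F)}\leq\mathbb E e^{\lambda(F(X)-F(X'))}$.
\item Apply Jensen's inequality over $\theta$, with the normalized measure $\frac{2}{\pi}d\theta$, to obtain
\begin{align}
\mathbb E e^{\lambda(F(X)-\mathbb EF)}\leq \frac{2}{\pi}\int_0^{\pi/2}\mathbb E\exp\!\left(\tfrac{\pi\lambda}{2}\langle\nabla F(X_\theta),\dot X_\theta\rangle\right)d\theta\leq e^{\pi^2\lambda^2L'^2/8}.
\end{align}
The last step conditions on $X_\theta$: the inner product is then Gaussian with variance at most $L'^2$.
\item Finish with a Chernoff bound, optimizing over $\lambda$, which gives $\Pr[F-\mathbb EF>s]\leq e^{-2s^2/(\pi^2L'^2)}$.
\end{enumerate}
This yields the lemma with $\eta=4/\pi^2$ after the rescaling from the first step.

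\emph{Expected obstacles.} The only subtle points are technical. First, the smoothing step needs a limit: the mollified $F_\epsilon$ is still $L'$-Lipschitz and converges uniformly to $F$, so both the expectations and the tail probabilities pass to the limit. Second, one must check that $f$ is integrable so that $\mathbb E f$ is finite; this follows from $|f(g)|\leq|f(0)|+L\|g\|$. Alternatively, one can simply cite the Gaussian log-Sobolev inequality together with the Herbst argument from~\cite{Ledoux2001}, which even gives the sharp constant.
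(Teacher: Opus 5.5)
Your proof is correct, but it is not the paper's route: the paper gives no proof of this lemma. It simply quotes the standard Gaussian concentration inequality with a citation to Ledoux's monograph. There the real-case bound is obtained with the sharp exponent $e^{-s^2/(2L'^2)}$ from Gaussian isoperimetry or from the log-Sobolev inequality via the Herbst argument.

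You instead give a self-contained Maurey--Pisier interpolation argument. I checked the steps:
\begin{itemize}
\item The decomposition $g=(x+iy)/\sqrt2$ matches the paper's normalization $\mathbb E|g_k|^2=1$. It correctly turns an $L$-Lipschitz $f$ into an $L/\sqrt2$-Lipschitz function on $\mathbb R^{2N}$.
\item $(X_\theta,\dot X_\theta)$ is an orthogonal rotation of $(X,X')$, hence an independent standard Gaussian pair.
\item The two Jensen steps (over $X'$, then over $\theta$ with measure $\tfrac{2}{\pi}d\theta$), combined with the conditional Gaussian moment generating function, give $e^{\pi^2\lambda^2L'^2/8}$.
\item The Chernoff optimization at $\lambda=4s/(\pi^2L'^2)$ gives $e^{-2s^2/(\pi^2L'^2)}$. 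Substituting $s=tL$ and $L'=L/\sqrt2$ yields $\eta=4/\pi^2$.
\end{itemize}

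What your route buys is that it uses only rotation invariance, Jensen's inequality, and the one-dimensional Gaussian MGF. The smoothing and integrability points you flag are routine (mollification preserves the Lipschitz constant and converges uniformly, and $|f(g)|\leq |f(0)|+L\|g\|$). The cost is a non-sharp constant.

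That loss is immaterial here. The lemma only asserts some absolute $\eta>0$, and every later application in the paper absorbs constants into $\eta$ and $K$ anyway. Your fallback of citing log-Sobolev plus Herbst would recover the sharp value $\eta=1$, which is essentially what the paper's reference supplies.
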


Here, the Lipschitz constant $L$ is defined by
\begin{align}
|f(g)-f(g')|
\leq
L\|g-g'\|
\end{align}
for arbitrary Gaussian vectors $g$ and $g'$.

\subsection{Maximal product overlap}
\label{sec:one-tensor-product-overlap}

We will be mainly interested in the maximal product overlap of $|\Psi\rangle$.
For normalized states $|\alpha\rangle\in\mathcal H_A$ and $|\beta\rangle\in\mathcal H_B$, it is convenient to define 
\begin{align}
&|G_\alpha\rangle_{BC}
:=
{}_A\langle\alpha|G\rangle_{ABC}= \figbox{2.0}{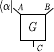},
\label{eq:one-tensor-G-alpha}
\\
&|G_{\alpha,\beta}\rangle_C
:=
{}_{AB}\langle\alpha,\beta|G\rangle_{ABC} = \figbox{2.0}{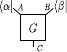}.
\label{eq:one-tensor-G-alpha-beta} \\
&G_{\alpha,\beta,\gamma}
:=
{}_{ABC}\langle\alpha,\beta,\gamma |G\rangle_{ABC} = \figbox{2.0}{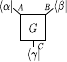}.
\label{eq:one-tensor-G-alpha-beta} 
\end{align}
Here, contracted tensor indices are indicated by subscripts of the tensor $G$. 

The maximal product overlap of $|\Psi\rangle_{ABC}$ can be defined as 
\begin{align}
\Lambda(\Psi_{A:B:C}) := \max_{\alpha,\beta,\gamma} |G_{\alpha,\beta,\gamma}|^2. 
\end{align}
Equivalently, we can express this as the maximal product overlap for $\rho_{AB}$:
\begin{align}
\Lambda(\Psi_{A:B:C}) = \Lambda(\rho_{A:B})
:=
\max_{\alpha, \beta}
\bigl\|\,|G_{\alpha,\beta}\rangle\,\bigr\|^2. 
\end{align}
Here, the maximization over $\gamma$ has been performed automatically since, for fixed $\alpha$ and $\beta$, the maximum is attained by choosing $|\gamma\rangle$ parallel to $|G_{\alpha,\beta}\rangle$.

\begin{lemma}\label{lem:overlap}
There exist universal constants $K,\eta>0$ such that
\begin{align}
\Pr\left[
\Lambda(\rho_{A:B})
>
K
\left(
\frac{1}{ab}
+
\frac{1}{ac}
+
\frac{1}{bc}
\right)
\right]
\leq
e^{-\eta(a+b+c)}.
\label{eq:one-tensor-product-overlap-probability}
\end{align}
Equivalently, except with probability at most
$e^{-\eta(a+b+c)}$,
\begin{align}
-\log_2\Lambda(\rho_{A:B})
=
\min\{A+B,A+C,B+C\}
+
O(1).
\label{eq:one-tensor-product-overlap-area}
\end{align}
\end{lemma}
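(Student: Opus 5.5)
The statement has two directions, and I would treat them separately. The lower bound on $\Lambda(\rho_{A:B})$ (equivalently the upper bound on $-\log_2\Lambda$) is elementary. Take $|\alpha\rangle,|\beta\rangle,|\gamma\rangle$ to be computational basis states; then $|G_{\alpha,\beta,\gamma}|^2$ is a single Gaussian entry of variance $1/(abc)$. That is too small, so I would use better product states instead. For instance, fix $\alpha,\beta$ as basis states, so that $\|G_{\alpha,\beta}\|^2$ is a sum of $c$ entries and concentrates at $1/(ab)$ by Lemma~\ref{lem:Gaussian-concentration}. The same construction with the roles of $C$ and the other parties permuted gives $1/(ac)$ and $1/(bc)$. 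This yields $-\log_2\Lambda\le\min\{A+B,A+C,B+C\}+O(1)$ with overwhelming probability. The real work is the upper bound \eqref{eq:one-tensor-product-overlap-probability}.

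For the upper bound I would use the standard $\epsilon$-net plus union-bound strategy. Fix unit vectors $\alpha,\beta$. The function $f(G)=\|G_{\alpha,\beta}\|$ is a norm of a linear contraction, so it is Lipschitz with constant $L=1/\sqrt{abc}$ after rescaling $G$ to a standard Gaussian vector. Its mean is at most $\sqrt{\mathbb E\|G_{\alpha,\beta}\|^2}=1/\sqrt{ab}$. Lemma~\ref{lem:Gaussian-concentration} then gives
\begin{align}
\Pr\left[\|G_{\alpha,\beta}\|>\tfrac{1}{\sqrt{ab}}+\tfrac{t}{\sqrt{abc}}\right]\le e^{-\eta t^2}.
\end{align}
Next I would take $\epsilon$-nets $\mathcal N_A,\mathcal N_B$ on the unit spheres, with $|\mathcal N_A|\le(5/\epsilon)^{2a}$ and $|\mathcal N_B|\le(5/\epsilon)^{2b}$, and choose $t^2=K'(a+b+c)$ so that the union bound over $|\mathcal N_A||\mathcal N_B|$ points leaves failure probability $e^{-\eta(a+b+c)}$. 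On the net this gives
\begin{align}
\|G_{\alpha,\beta}\|^2\lesssim \frac{1}{ab}+\frac{a+b+c}{abc}\sim\frac{1}{ab}+\frac{1}{bc}+\frac{1}{ac}+\frac{1}{ab},
\end{align}
which is exactly the claimed scale.

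The extension from the net to all $\alpha,\beta$ uses the bilinearity of $(\alpha,\beta)\mapsto G_{\alpha,\beta}$. Let $M=\max_{\alpha,\beta}\|G_{\alpha,\beta}\|$, and write $\alpha=\alpha_0+(\alpha-\alpha_0)$ and $\beta=\beta_0+(\beta-\beta_0)$ with $\alpha_0,\beta_0$ in the nets. Then $M\le \max_{\text{net}}+2\epsilon M+\epsilon^2M$, so $M\le \max_{\text{net}}/(1-3\epsilon)$. With $\epsilon=1/6$ this only costs a constant factor, which is absorbed into $K$.

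I expect the main obstacle to be the bookkeeping that the union bound actually closes. The net cardinality is exponential in $a+b$ but not in $c$, yet I spend $t^2\propto a+b+c$. This is harmless because it only adds the terms $1/(bc)$ and $1/(ac)$, which are present in the target bound. I should also check that the asymmetric choice of optimizing $\gamma$ implicitly does not lose the symmetric structure. It does not: the $1/(ab)$ term comes from the mean, and the other two come from the fluctuation term $(a+b)/(abc)$. Finally, the normalization $\langle G|G\rangle=1+o(1)$ from \eqref{eq:one-tensor-Gaussian-norm-concentration} transfers the bound to $\Psi$. The triangle inequalities are not needed for the bound itself; they only ensure that the minimum is nontrivial.
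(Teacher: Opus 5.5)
Your proposal is correct and follows essentially the paper's argument: Gaussian concentration of $\|G_{\alpha,\beta}\|$ for fixed $\alpha,\beta$ (mean at most $1/\sqrt{ab}$, Lipschitz constant $1/\sqrt{abc}$), a union bound over $\epsilon$-nets with deviation $t\sim\sqrt{a+b+c}$, and a net-to-sphere step. The differences are minor: you use one joint net over $(\alpha,\beta)$ with a bilinear net-to-sphere estimate, whereas the paper nets over $\beta$ and then over $\alpha$ sequentially, a structure it reuses for the two- and three-tensor networks. You also spell out the matching lower bound on $\Lambda$ via basis product states, which the paper does not prove here and which instead follows from its product-state construction in Sec.~\ref{sec:bound}.
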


This result immediately suggests
\begin{align}
E_{R}(\rho_{AB}) = \min\{A+B,A+C,B+C\} - C + O(1).
\end{align}
For one Haar random tensor, candidate tripartition surfaces are 
\begin{align}
\figbox{2.0}{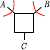}, \ \figbox{2.0}{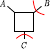}, \ \figbox{2.0}{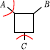}
\end{align}
whose areas are $A+B$, $B+C$, and $C+A$ respectively. 
Hence, we arrive at 
\begin{align}
E_{R}(\rho_{AB}) = \mathrm{Area}(\Gamma_{\mathrm{min}}) -  \mathrm{Area}(\gamma_{AB}) + O(1)
\end{align}
where $\mathrm{Area}(\Gamma_{\mathrm{min}})  =\min\{A+B,A+C,B+C\}$.

\subsection{Overlap for fixed $\alpha,\beta$}
In this and the next, we prove Lemma~\ref{lem:overlap}.
We first estimate $\bigl\|\,|G_{\alpha,\beta}\rangle\,\bigr\|$ for fixed $\alpha$ and $\beta$.
Its expectation value can be readily evaluated by using the covariance:
\begin{align}
\mathbb E
\bigl\|\,|G_{\alpha,\beta}\rangle\,\bigr\|^2
=
\frac{1}{ab}, \qquad
\mathbb E
\bigl\|\,|G_{\alpha,\beta}\rangle\,\bigr\|
\leq
\frac{1}{\sqrt{ab}}.
\label{eq:one-tensor-fixed-product-mean}
\end{align}

In order to apply the Gaussian concentration bound (Lemma~\ref{lem:Gaussian-concentration}), we construct a function $f_{\alpha,\beta}(g)$ of a standard complex Gaussian vector $g$ such that
\begin{align}
\mathbb{E}_{g}[ f_{\alpha,\beta}(g) ] = \mathbb E_G \bigl\|\,|G_{\alpha,\beta}\rangle\,\bigr\|.
\end{align}
For fixed $\alpha$ and $\beta$, define the rescaled projected vector
\begin{align}
|g\rangle_C
:=
\sqrt{abc}\,
|G_{\alpha,\beta}\rangle_C
\end{align}
which is a standard complex Gaussian vector on $\mathcal H_C$.
We then define
\begin{align}
f_{\alpha,\beta}(g)
:=
\frac{1}{\sqrt{abc}}\,
\bigl\|\,|g\rangle\,\bigr\|.
\end{align}
Thus, $f_{\alpha,\beta}(g)$ simply outputs the rescaled norm of the
Gaussian vector $|g\rangle_C$.
By construction, $f_{\alpha,\beta}(g)
=
\bigl\|\,|G_{\alpha,\beta}\rangle\,\bigr\|$,
and hence Eq.~\eqref{eq:one-tensor-fixed-product-mean} holds.

We next evaluate the Lipschitz constant $L$ of
$f_{\alpha,\beta}(g)$.
Consider two arbitrary vectors $|g\rangle_C$ and $|g'\rangle_C$, and
define their variation by
\begin{align}
|\delta g\rangle_C
:=
|g-g'\rangle_C.
\end{align}
Using the reverse triangle inequality, we have
\begin{align}
\left|
f_{\alpha,\beta}(g)
-
f_{\alpha,\beta}(g')
\right|
&=
\frac{1}{\sqrt{abc}}
\left|
\bigl\|\,|g\rangle\,\bigr\|
-
\bigl\|\,|g'\rangle\,\bigr\|
\right|
\nonumber\\
&\leq
\frac{1}{\sqrt{abc}}
\bigl\|\,|\delta g\rangle\,\bigr\|.
\label{eq:one-tensor-fixed-product-Lipschitz}
\end{align}
Thus, $f_{\alpha,\beta}$ is $L$-Lipschitz with
\begin{align}
L
=
\frac{1}{\sqrt{abc}}.
\end{align}

The Gaussian concentration inequality therefore gives
\begin{align}
\Pr\left[
\bigl\|\,|G_{\alpha,\beta}\rangle\,\bigr\|
>
\frac{1}{\sqrt{ab}}
+
\frac{t}{\sqrt{abc}}
\right]
\leq
e^{-\eta t^2}.
\label{eq:one-tensor-fixed-product-tail}
\end{align}
Since $\max_{\gamma}
\left|
\langle \alpha,\beta,\gamma|\Psi\rangle
\right|
=
\bigl\|\,|G_{\alpha,\beta}\rangle\,\bigr\|$, 
this shows that, for fixed $\alpha$ and $\beta$, optimizing only over $\gamma$ gives a squared product overlap of order
\begin{align}
\max_{\gamma}
\left|
\langle \alpha,\beta,\gamma|\Psi\rangle
\right|^2
\simeq
\frac{1}{ab}.
\end{align}

\subsection{Optimization over $\alpha, \beta$}

We next optimize over $\beta$.
Fix $\alpha$, and let $\mathcal N_B$ be an $\epsilon$-net of the unit sphere in $\mathbb C^b$, where $\epsilon$ is a fixed constant satisfying $0<\epsilon<1/2$.  
The net can be chosen such that~\cite{HaydenLeungShorWinter2004}
\begin{align}
|\mathcal N_B|
\leq
\left(
1+\frac{2}{\epsilon}
\right)^{2b}
\leq
e^{K b}.
\label{eq:one-tensor-B-net-size}
\end{align}
Applying Eq.~\eqref{eq:one-tensor-fixed-product-tail} to every
$\beta_0\in\mathcal N_B$ and using the union bound gives
\begin{align}
&\Pr\left[
\exists\beta_0\in\mathcal N_B:
\bigl\|\,|G_{\alpha,\beta_0}\rangle\,\bigr\|
>
\frac{1}{\sqrt{ab}}
+
\frac{t}{\sqrt{abc}}
\right]
\leq
e^{K b-\eta t^2}.
\label{eq:one-tensor-B-net-union-bound}
\end{align}
Shifting
\begin{align}
t \rightarrow \sqrt{\frac{K}{\eta}}  \sqrt b+t,
\end{align}
and adjusting the universal constants, we obtain
\begin{align}
\Pr\left[
\max_{\beta_0\in\mathcal N_B}
\bigl\|\,|G_{\alpha,\beta_0}\rangle\,\bigr\|
>
K
\left(
\frac{1}{\sqrt{ab}}
+
\frac{1}{\sqrt{ac}}
+
\frac{t}{\sqrt{abc}}
\right)
\right]
\leq
e^{-\eta t^2}.
\label{eq:one-tensor-B-net-bound}
\end{align}

One can convert the above bound on the $\epsilon$-net $\mathcal{N}_B$ to arbitrary $|\beta\rangle \in \mathcal{H}_B$ by using the standard net-to-sphere argument. 
Every normalized $|\beta\rangle\in\mathcal H_B$ is
$\epsilon$-close to some $|\beta_0\rangle\in\mathcal N_B$.
From the triangle inequality, we have
\begin{align}
\bigl\| |G_{\alpha,\beta}\rangle \bigr\|
&\leq
\bigl\| |G_{\alpha,\beta_0}\rangle \bigr\|
+
\bigl\| |G_{\alpha,\beta-\beta_0}\rangle \bigr\|.
\end{align}
By linearity in $\beta$,
\begin{align}
\bigl\| |G_{\alpha,\beta-\beta_0}\rangle \bigr\|
\leq
\|\beta-\beta_0\|\, \max_{\|\beta\|=1}
\bigl\| |G_{\alpha,\beta}\rangle \bigr\|
\leq
\epsilon \max_{\|\beta\|=1}
\bigl\| |G_{\alpha,\beta}\rangle \bigr\|.
\end{align}
Therefore, for every normalized $|\beta\rangle$, we have
\begin{align}
\bigl\| |G_{\alpha,\beta}\rangle \bigr\|
\leq
\max_{\beta_0\in\mathcal N_B}
\bigl\| |G_{\alpha,\beta_0}\rangle \bigr\|
+\epsilon \max_{\|\beta\|=1}
\bigl\| |G_{\alpha,\beta}\rangle \bigr\|.
\end{align}
Taking the maximum over $|\beta\rangle$ on the left-hand side gives
\begin{align}
\max_{\|\beta\|=1}
\bigl\| |G_{\alpha,\beta}\rangle \bigr\|
\leq
\max_{\beta_0\in\mathcal N_B}
\bigl\| |G_{\alpha,\beta_0}\rangle \bigr\|
+\epsilon \max_{\|\beta\|=1}
\bigl\|\,|G_{\alpha,\beta}\rangle\,\bigr\|
\end{align}
and hence
\begin{align}
\max_{\|\beta\|=1}
\bigl\| |G_{\alpha,\beta}\rangle \bigr\|
\leq
\frac{1}{1-\epsilon}
\max_{\beta_0\in\mathcal N_B}
\bigl\| |G_{\alpha,\beta_0}\rangle \bigr\|.
\label{eq:one-tensor-B-net-to-sphere}
\end{align}
Absorbing the fixed factor $(1-\epsilon)^{-1}$ into $K$, we find
\begin{align}
\Pr\left[
\max_{\beta}
\bigl\|\,|G_{\alpha,\beta}\rangle\,\bigr\|
>
K
\left(
\frac{1}{\sqrt{ab}}
+
\frac{1}{\sqrt{ac}}
+
\frac{t}{\sqrt{abc}}
\right)
\right]
\leq
e^{-\eta t^2}.
\label{eq:one-tensor-fixed-alpha-bound}
\end{align}

We finally optimize over $\alpha$.
Let $\mathcal N_A$ be an $\epsilon$-net of the unit sphere in
$\mathbb C^a$, satisfying
\begin{align}
|\mathcal N_A|
\leq
e^{K a}.
\end{align}
Applying Eq.~\eqref{eq:one-tensor-fixed-alpha-bound} to every
$\alpha_0\in\mathcal N_A$, using the union bound, and shifting
\begin{align}
t\rightarrow \sqrt{\frac{K}{\eta}}\sqrt a+t,
\end{align}
gives
\begin{align}
\Pr\left[
\max_{\alpha_0\in\mathcal N_A}
\max_\beta
\bigl\|\,|G_{\alpha_0,\beta}\rangle\,\bigr\|
>
K
\left(
\frac{1}{\sqrt{ab}}
+
\frac{1}{\sqrt{ac}}
+
\frac{1}{\sqrt{bc}}
+
\frac{t}{\sqrt{abc}}
\right)
\right]
\leq
e^{-\eta t^2}.
\label{eq:one-tensor-A-net-bound}
\end{align}
A net-to-sphere argument gives
\begin{align}
\Pr\left[
\max_{\alpha,\beta}
\bigl\|\,|G_{\alpha,\beta}\rangle\,\bigr\|
>
K
\left(
\frac{1}{\sqrt{ab}}
+
\frac{1}{\sqrt{ac}}
+
\frac{1}{\sqrt{bc}}
+
\frac{t}{\sqrt{abc}}
\right)
\right]
\leq
e^{-\eta t^2}.
\label{eq:one-tensor-full-tail}
\end{align}

Finally, choosing $t = \sqrt{a} + \sqrt{b} + \sqrt{c}$ and adjusting the constant $K$, we obtain
\begin{align}
\Lambda(\rho_{A:B})
=\max_{\alpha, \beta}
\bigl\|\,|G_{\alpha,\beta}\rangle\,\bigr\|^2
\leq
K
\left(
\frac{1}{ab}
+
\frac{1}{ac}
+
\frac{1}{bc}
\right)
\end{align}
except with probability at most $e^{-\eta(a+b+c)}$. This proves the lemma.

\section{Two Haar random tensors}
\label{sec:two-random-tensors}

In this section, we derive the relative entropy of entanglement $E_R$
for a tensor network consisting of two independent Haar random tensors
connected through an internal bond $R$.

\subsection{Haar tensors as Gaussian tensors}

Let
\begin{align}
\dim\mathcal H_A&=a,
&
\dim\mathcal H_B&=b,
&
\dim\mathcal H_{C_A}&=c_A,
&
\dim\mathcal H_{C_B}&=c_B,
&
\dim\mathcal H_{R}&=r,
\end{align}
and introduce the logarithmic dimensions
\begin{align}
A&=\log_2a,
&
B&=\log_2b,
&
C_A&=\log_2c_A,
&
C_B&=\log_2c_B,
&
R&=\log_2 r.
\end{align}

We represent the two independent Haar random tensors using independent
complex Gaussian tensors
\begin{align}
|G\rangle_{A C_A R}
&=
\sum_{i=1}^{a}
\sum_{\mu=1}^{c_A}
\sum_{\ell=1}^{r}
G_{i\mu\ell}
|i,\mu,\ell\rangle,
\\
|H\rangle_{B C_B R}
&=
\sum_{j=1}^{b}
\sum_{\nu=1}^{c_B}
\sum_{\ell=1}^{r}
H_{j\nu\ell}
|j,\nu,\ell\rangle,
\end{align}
with covariances
\begin{align}
\mathbb E\left[
G_{i\mu\ell}
\overline{G_{i'\mu'\ell'}}
\right]
&=
\frac{1}{a c_A r}\,
\delta_{ii'}\delta_{\mu\mu'}\delta_{\ell\ell'},
\label{eq:two-tensor-G-covariance}
\\
\mathbb E\left[
H_{j\nu\ell}
\overline{H_{j'\nu'\ell'}}
\right]
&=
\frac{1}{b c_B r}\,
\delta_{jj'}\delta_{\nu\nu'}\delta_{\ell\ell'}.
\label{eq:two-tensor-H-covariance}
\end{align}
In particular,
\begin{align}
\mathbb E\langle G|G\rangle
=
\mathbb E\langle H|H\rangle
=
1.
\end{align}

We assume that the logarithmic dimensions of each local tensor obey the triangle inequalities,
\begin{align}
A<C_A+R,
\qquad
C_A<A+R,
\qquad
R<A+C_A,
\label{eq:two-tensor-G-triangle}
\end{align}
and
\begin{align}
B<C_B+R,
\qquad
C_B<B+R,
\qquad
R<B+C_B.
\label{eq:two-tensor-H-triangle}
\end{align}

We use the symbol $\star$ to denote the contraction of the two $R$ legs:
\begin{align}
|G\star H\rangle_{A C_A B C_B}
:=
\sum_{\ell=1}^{r}
{}_R\langle\ell|G\rangle_{A C_A R}
\otimes
{}_R\langle\ell|H\rangle_{B C_B R} = \ \figbox{2.0}{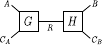}.
\label{eq:two-tensor-star-definition}
\end{align}
The global pure state generated by the two-tensor network is then
\begin{align}
|\Psi\rangle_{A C_A B C_B}
\simeq
\sqrt r\,|G\star H\rangle_{A C_A B C_B}.
\end{align}
Here, the factor $\sqrt{r}$ is chosen such that
\begin{align}
\mathbb E
\langle \Psi| \Psi\rangle
=
1.
\label{eq:two-tensor-expected-normalization}
\end{align}

\subsection{Maximal product overlap}
\label{sec:two-tensor-product-overlap}

Recall the following projected tensor notations:
\begin{align}
|G_\alpha\rangle_{C_A R}
&:=
{}_A\langle\alpha|G\rangle_{A C_A R},
\\
|H_\beta\rangle_{C_B R}
&:=
{}_B\langle\beta|H\rangle_{B C_B R},
\end{align}
For the $R$-contracted state, we define
\begin{align}
|G\star H_\beta\rangle_{A C_A C_B}
&:= 
{}_B\langle\beta|G\star H\rangle_{A C_A B C_B} 
= \ \figbox{2.0}{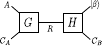},
\\
|G_\alpha\star H_\beta\rangle_{C_A C_B}
&:=
{}_{AB}\langle\alpha,\beta
|G\star H\rangle_{A C_A B C_B}
= \ \figbox{2.0}{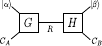}
.
\label{eq:two-tensor-projected-contractions}
\end{align}
Here, $\star$ continues to denote the raw contraction over the common
$R$ leg.
With this notation, the maximal product overlap can be expressed as 
\begin{align}
\Lambda(\rho_{A:B})
:=
\max_{\alpha,\beta}
\langle\alpha,\beta|
\rho_{AB}
|\alpha,\beta\rangle = r\,
\max_{\alpha,\beta}
\bigl\|\,|G_\alpha\star H_\beta\rangle\,\bigr\|^2.
\end{align}

Our goal is to prove that the maximal product overlap can be associated with the minimal tripartition surface.
Under the triangle inequalities, there are four candidate tripartite surfaces:
\begin{align}
\figbox{2.0}{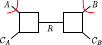},
\ \
\figbox{2.0}{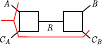},
\ \
\figbox{2.0}{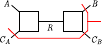},
\ \
\figbox{2.0}{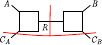}.
\end{align}
Their areas are
\begin{align}
A+B,
\quad
A+C_A + C_B,
\quad
B + C_A + C_B,
\quad
C_A+C_B+R,
\end{align}
respectively. Hence, the minimal tripartition area is 
\begin{align}
\mathrm{Area}(\Gamma_{\mathrm{min}}) = \min\left\{
A+B,\,
A+C_A + C_B,\,
B + C_A + C_B ,\,
C_A+C_B+R
\right\}.
\end{align}

\begin{lemma}\label{lemma:2TN}
There exist universal constants $K,\eta>0$ such that
\begin{align}
\Pr\left[
\Lambda(\rho_{A:B})
>
K
\left(
\frac{1}{ab}
+ \frac{1}{ac_A c_B} + \frac{1}{bc_A c_B} +
\frac{1}{c_Ac_Br}
\right)
\right]
\leq
\exp\left[
-\eta\min\{a,b,c_A,c_B,r\}
\right].
\label{eq:two-tensor-product-overlap-probability}
\end{align}
Equivalently, except with the probability shown above,
\begin{align}
-\log_2\Lambda(\rho_{A:B})
=
\min\left\{
A+B,\,
A+C_A+C_B,\,
B+C_A+C_B,\,
C_A+C_B+R
\right\}
+
O(1).
\label{eq:two-tensor-product-overlap-area}
\end{align}
\end{lemma}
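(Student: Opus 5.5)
We prove the upper bound on $\Lambda(\rho_{A:B})$ by optimizing sequentially, one random tensor at a time, as in Section~\ref{sec:one-random-tensor}. The matching lower bound on $\Lambda$, and hence the equality in Eq.~\eqref{eq:two-tensor-product-overlap-area}, then follows from Eq.~\eqref{eq:product-overlap-minimal-surface}. The model is symmetric under $A\leftrightarrow B$, $C_A\leftrightarrow C_B$, $G\leftrightarrow H$, so we assume without loss of generality that $a\geq b$.

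\textbf{Step 1: fix $H$ and use the randomness of $G$.} Regard $H_\beta$ as a linear map $M_\beta:\mathcal H_R\to\mathcal H_{C_B}$. Then $|G_\alpha\star H_\beta\rangle=(I_{C_A}\otimes M_\beta)|G_\alpha\rangle$ after transposing the $R$ leg. Conditioned on $H$, and for fixed $\alpha,\beta$, the vector $\sqrt{ac_Ar}\,|G_\alpha\rangle$ is a standard complex Gaussian on $C_AR$. The function $g\mapsto\|(I\otimes M_\beta)g\|/\sqrt{ac_Ar}$ has:
\begin{align}
\text{mean}\leq \frac{\|H_\beta\|}{\sqrt{ar}},
\qquad
\text{Lipschitz constant } \frac{\|M_\beta\|_{\rm op}}{\sqrt{ac_Ar}}.
\end{align}
Lemma~\ref{lem:Gaussian-concentration} gives a tail bound for each fixed pair. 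We take $\epsilon$-nets on both $\alpha$ and $\beta$, of total size $e^{K(a+b)}$, apply the union bound, and run the net-to-sphere argument in both arguments; the expression is bilinear, so this costs only a constant factor. This yields, with probability $1-e^{-\eta(a+b)}$ over $G$ conditioned on $H$,
\begin{align}
r\max_{\alpha,\beta}\bigl\|\,|G_\alpha\star H_\beta\rangle\,\bigr\|^2
\leq K\left(\frac{\max_\beta\|H_\beta\|^2}{a}+\frac{(a+b)\max_\beta\|M_\beta\|_{\rm op}^2}{ac_A}\right).
\end{align}

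\textbf{Step 2: bound the $H$-dependent quantities for a single tensor.} First, $\max_\beta\|H_\beta\|^2$ is the largest eigenvalue of the reduced state of $H$ on $B$. The bipartite version of the net argument of Section~\ref{sec:one-random-tensor} bounds it by $K(1/b+1/(c_Br))$. Second, $\max_\beta\|M_\beta\|^2_{\rm op}=\max_{\beta,\rho}\|H_{\beta,\rho}\|^2$ is exactly the single-tensor maximal product overlap of $H$ across $B:R:C_B$. Lemma~\ref{lem:overlap} bounds it by $K(1/(br)+1/(bc_B)+1/(c_Br))$, with failure probability $e^{-\eta(b+c_B+r)}$. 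Substituting, and using $(a+b)/a\leq 2$, gives
\begin{align}
\Lambda\leq K\left(\frac{1}{ab}+\frac{1}{ac_Br}+\frac{1}{bc_Ar}+\frac{1}{bc_Ac_B}+\frac{1}{c_Ac_Br}\right).
\end{align}
The total failure probability is at most $\exp[-\eta\min\{a,b,c_A,c_B,r\}]$.

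\textbf{Step 3: reduce to the four surface terms.} Each extraneous term must be dominated by one of the four target terms, using the triangle inequalities \eqref{eq:two-tensor-G-triangle} and \eqref{eq:two-tensor-H-triangle}.
\begin{itemize}
\item For $1/(ac_Br)$: if $a\geq c_A$, then $ac_Br\geq c_Ac_Br$. Otherwise $c_Br\geq b$ from $B<C_B+R$, so $ac_Br\geq ab$.
\item For $1/(bc_Ar)$: symmetrically, if $b\geq c_B$ it is dominated by $1/(c_Ac_Br)$. Otherwise $c_Ar\geq a$ gives domination by $1/(ab)$.
\end{itemize}
This is where I expect the main obstacle. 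The prefactor $(a+b)/a$ must be kept bounded, which is why the ordering $a\geq b$ matters: we net over $\beta$ in the $G$-step, which is cheap relative to the $\alpha$ net. If the naive order fails in some regime, I would first try swapping which tensor is conditioned on. Failing that, I would split into cases according to which surface is minimal.

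Taking logarithms then gives $-\log_2\Lambda\geq\mathrm{Area}(\Gamma_{\min})+O(1)$. The explicit product state $\Psi^{\Gamma_{\min},|0\rangle}$ supplies the reverse inequality.
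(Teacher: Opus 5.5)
Your proof is correct and is essentially the paper's argument with the roles of $G$ and $H$ exchanged: the paper fixes $G$ and uses the randomness of $H$ when $b\geq a$, with mean term $\|G_\alpha\|$ and Lipschitz term $\lambda_\alpha(G)$; you fix $H$ when $a\geq b$, with your $\|H_\beta\|$ and $\|M_\beta\|_{\rm op}$ playing exactly those roles, and you use one joint net over $(\alpha,\beta)$ where the paper nets sequentially, which is only cosmetic. Your case splits in Step 3 are unnecessary, since $c_Br>b$ and $c_Ar>a$ already give $1/(ac_Br),\,1/(bc_Ar)\leq 1/(ab)$ directly, and closing the equality with the $\Gamma_{\min}$ product state matches how the paper reads off the four surface terms.
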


The proof is technically involved, but the underlying idea is physically
simple.
We first consider an average over the tensor $H$ while fixing $G,\alpha,\beta$, and then optimize over $\beta$, which probes the tensor $H$.
In this process, the product overlap is expressed in terms of single tensor objects which correspond to ``mean'' and ``variation'', 
\begin{align}
\bigl\|\,|G_\alpha\rangle\,\bigr\|
= \ \figbox{2.5}{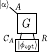}, \qquad 
\lambda_\alpha(G)
=  \ \figbox{2.5}{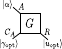} \ .
\end{align}
A key observation is that the former is reduced to a bipartite overlap problem with $A:C_AR$ and the latter is reduced to a tripartite overlap problem with $A:C_A:R$. 
We then optimize over $\alpha$, which probes the tensor $G$.
Thus, this sequential optimization reduces the problem to evaluating maximal product overlaps of the remaining tensor $|G\rangle$, both across a bipartition and across a tripartition.
In this sense, the proof has the flavor of locally searching for the minimal tripartition surface.

\subsection{Overlap for fixed $\alpha,\beta$ and $G$}
\label{sec:two-tensor-fixed-alpha-beta}

We begin by estimating
\begin{align}
\sqrt r\,
\bigl\|\,|G_\alpha\star H_\beta\rangle\,\bigr\|
\end{align}
for fixed $\alpha$ and $\beta$.
We first consider the probability only over the Gaussian tensor $H$ while keeping the tensor $G$ fixed.
Importantly, no randomness of $G$ is used at this stage, and thus the following
argument applies to an arbitrary fixed tensor $G$.

We begin by evaluating the expectation value.
Averaging over $H$, we obtain
\begin{align}
\mathbb E_H\left[
r\,
|G_\alpha\star H_\beta\rangle
\langle G_\alpha\star H_\beta|
\right]
=
\frac{1}{b}\,
\Tr_R\left[
|G_\alpha\rangle\langle G_\alpha|
\right]
\otimes
\frac{I_{C_B}}{c_B}.
\label{eq:two-tensor-fixed-alpha-beta-second-moment}
\end{align}
Hence, we have 
\begin{align}
\mathbb E_H
\left[
r\,
\bigl\|\,|G_\alpha\star H_\beta\rangle\,\bigr\|^2
\right]
=
\frac{1}{b}
\bigl\|\,|G_\alpha\rangle\,\bigr\|^2
\end{align}
and
\begin{align}
\mathbb E_H
\left[
\sqrt r\,
\bigl\|\,|G_\alpha\star H_\beta\rangle\,\bigr\|
\right]
\leq
\frac{1}{\sqrt b}
\bigl\|\,|G_\alpha\rangle\,\bigr\|.
\end{align}
Since we have averaged only over $H$, the tensor $G$ remains in the expression.

The remaining norm has a simple overlap interpretation:
\begin{align}
\bigl\|\,|G_\alpha\rangle\,\bigr\|
=
\max_{
|\phi\rangle\in\mathcal H_{C_A}\otimes\mathcal H_R
}
\left|
\langle\alpha,\phi|G\rangle
\right| = \ \figbox{2.5}{fig_2TN_GA_bipartition} \ .
\end{align}
This corresponds to the product amplitude of the tensor $G$ across the bipartition $A:(C_A R)$, optimized over the state $\phi$ on $C_A R$.
Importantly, $\phi$ is an arbitrary state on $C_A R$ and does not need to factorize across $C_A:R$.

In order to apply the Gaussian concentration bound, we introduce the rescaled projected vector
\begin{align}
|h\rangle_{C_B R}
:=
\sqrt{b c_B r}\,
|H_\beta\rangle_{C_B R}.
\label{eq:two-tensor-standard-Gaussian-vector}
\end{align}
By the covariance of $H$, $|h\rangle_{C_B R}$ is a standard complex Gaussian vector on $\mathcal H_{C_B}\otimes\mathcal H_R$.
We then define
\begin{align}
f_{\alpha,\beta;G}(h)
:=
\frac{1}{\sqrt{b c_B}}\,
\bigl\|\,|G_\alpha\star h\rangle\,\bigr\|.
\label{eq:two-tensor-fixed-product-function}
\end{align}
By construction, $f_{\alpha,\beta;G}(h)
=
\sqrt r\,
\bigl\|\,|G_\alpha\star H_\beta\rangle\,\bigr\|$.
Consequently, we obtain
\begin{align}
\mathbb E_h
\left[
f_{\alpha,\beta;G}(h)
\right] = \mathbb E_H
\left[
\sqrt r\,
\bigl\|\,|G_\alpha\star H_\beta\rangle\,\bigr\|
\right].
\end{align}

We next evaluate the Lipschitz constant of $f_{\alpha,\beta;G}(h)$.
It is useful to define the maximal product overlap amplitude of the projected tensor
$|G_\alpha\rangle_{C_A R}$ by
\begin{align}
\lambda_\alpha(G)
:=
\max_{
|\gamma\rangle\in\mathcal H_{C_A},\ 
|u\rangle\in\mathcal H_R
}
\left|
\langle\gamma,u|G_\alpha\rangle
\right| 
=  \ \figbox{2.5}{fig_2TN_GA_tripartition} \ .
\end{align}
Now, consider two arbitrary vectors $|h\rangle$ and $|h'\rangle$ on
$C_B R$, and define their variation by
\begin{align}
|\delta h\rangle_{C_B R}
:=
|h-h'\rangle_{C_B R}.
\end{align}
Decomposing the variation with respect to an orthonormal basis $|\nu\rangle$ of
$C_B$, we write
\begin{align}
|\delta h\rangle_{C_B R}
=
\sum_\nu
|\nu\rangle_{C_B}
|\delta h_\nu\rangle_R.
\label{eq:two-tensor-delta-h-decomposition}
\end{align}
We then have
\begin{align}
|G_\alpha\star\delta h\rangle_{C_A C_B}
=
\sum_\nu
|G_{\alpha,\delta h_\nu}\rangle_{C_A}
\otimes
|\nu\rangle_{C_B},
\label{eq:two-tensor-varied-contraction}
\end{align}
where
\begin{align}
|G_{\alpha,\delta h_\nu}\rangle_{C_A}
:=
{}_{AR}\langle
\alpha,\delta h_\nu
|G\rangle_{A C_A R}.
\end{align}
Since the states $|\nu\rangle_{C_B}$ are mutually orthogonal,
\begin{align}
\bigl\|\,|G_\alpha\star\delta h\rangle\,\bigr\|^2
=
\sum_\nu
\bigl\|\,|G_{\alpha,\delta h_\nu}\rangle\,\bigr\|^2.
\label{eq:two-tensor-varied-contraction-norm}
\end{align}

By the definition of $\lambda_\alpha(G)$, for an
arbitrary vector $|v\rangle_R$,
\begin{align}
\bigl\|\,|G_{\alpha,v}\rangle\,\bigr\|
\leq
\lambda_\alpha(G)
\bigl\|\,|v\rangle\,\bigr\|.
\end{align}
Hence,
\begin{align}
\bigl\|\,|G_\alpha\star\delta h\rangle\,\bigr\|^2
&\leq
\lambda_\alpha(G)^2
\sum_\nu
\bigl\|\,|\delta h_\nu\rangle\,\bigr\|^2
\nonumber\\
&=
\lambda_\alpha(G)^2
\bigl\|\,|\delta h\rangle\,\bigr\|^2.
\label{eq:two-tensor-varied-contraction-bound}
\end{align}
We therefore obtain
\begin{align}
\left|
f_{\alpha,\beta;G}(h)
-
f_{\alpha,\beta;G}(h')
\right|
&\leq
\frac{1}{\sqrt{b c_B}}
\bigl\|\,|G_\alpha\star\delta h\rangle\,\bigr\|
\nonumber\\
&\leq
\frac{\lambda_\alpha(G)}{\sqrt{b c_B}}\,
\bigl\|\,|\delta h\rangle\,\bigr\|
\label{eq:two-tensor-fixed-alpha-beta-Lipschitz}
\end{align}
where we used the reverse triangle inequality in the first line.
Thus, $f_{\alpha,\beta;G}$ is $L_\alpha$-Lipschitz with
\begin{align}
L_\alpha
=
\frac{\lambda_\alpha(G)}{\sqrt{b c_B}}.
\label{eq:two-tensor-fixed-alpha-beta-Lipschitz-constant}
\end{align}

The Gaussian concentration inequality now gives
\begin{align}
\Pr_H\left[
\sqrt r\,
\bigl\|\,|G_\alpha\star H_\beta\rangle\,\bigr\|
>
\frac{
\bigl\|\,|G_\alpha\rangle\,\bigr\|
}{\sqrt b}
+
\frac{t}{\sqrt{b c_B}}\,
\lambda_\alpha(G)
\right]
\leq
e^{-\eta t^2}.
\label{eq:two-tensor-fixed-alpha-beta-tail}
\end{align}
It is useful to recall that the two quantities appearing on the right-hand side have simple interpretations in terms of product overlaps:
\begin{align}
\bigl\|\,|G_\alpha\rangle\,\bigr\|
= \ \figbox{2.5}{fig_2TN_GA_bipartition}, \qquad 
\lambda_\alpha(G)
=  \ \figbox{2.5}{fig_2TN_GA_tripartition} \ .
\end{align}

\subsection{Optimization over $\alpha,\beta$}

We next optimize over $\beta$.
Fix $\alpha$ and $G$, and let $\mathcal N_B$ be an $\epsilon$-net in $\mathbb C^b$ such that
$
|\mathcal N_B|
\leq
e^{K b}$.
Applying Eq.~\eqref{eq:two-tensor-fixed-alpha-beta-tail} to every
$\beta_0\in\mathcal N_B$ and using the union bound gives
\begin{align}
&\Pr_H\left[
\exists\beta_0\in\mathcal N_B:
\sqrt r\,
\bigl\|\,|G_\alpha\star H_{\beta_0}\rangle\,\bigr\|
>
\frac{
\bigl\|\,|G_\alpha\rangle\,\bigr\|
}{\sqrt b}
+
\frac{t}{\sqrt{b c_B}}\,
\lambda_\alpha(G)
\right]
\leq
e^{K b-\eta t^2}.
\end{align}
Shifting
$t
\rightarrow
\sqrt{\frac{K}{\eta}}\sqrt b+t$, using the standard net-to-sphere argument, and adjusting the universal constants, we obtain
\begin{align}
&\Pr_H\left[
\max_{\beta}
\sqrt r\,
\bigl\|\,|G_\alpha\star H_\beta\rangle\,\bigr\|
>
K
\left(
\frac{
\bigl\|\,|G_\alpha\rangle\,\bigr\|
}{\sqrt b}
+
\left[
\frac{1}{\sqrt{c_B}}
+
\frac{t}{\sqrt{b c_B}}
\right]
\lambda_\alpha(G)
\right)
\right]
\leq
e^{-\eta t^2}.
\label{eq:two-tensor-fixed-alpha-optimized-beta}
\end{align}
We finally optimize over $\alpha$.
Let $\mathcal N_A$ be an $\epsilon$-net of the unit sphere in
$\mathbb C^a$, satisfying
$|\mathcal N_A|
\leq
e^{K a}$.
Applying Eq.~\eqref{eq:two-tensor-fixed-alpha-optimized-beta} to every
$\alpha_0\in\mathcal N_A$, using the union bound, and shifting
$
t
\longrightarrow
\sqrt{\frac{K}{\eta}}\sqrt a+t$ gives
\begin{align}
&\Pr_H\left[
\max_{\alpha,\beta}
\sqrt r\,
\bigl\|\,|G_\alpha\star H_\beta\rangle\,\bigr\|
\right.
\left.
>
K
\left(
\frac{1}{\sqrt b}
\max_\alpha
\bigl\|\,|G_\alpha\rangle\,\bigr\|
+
\frac{\sqrt a+\sqrt b+t}{\sqrt{b c_B}}
\max_\alpha
\lambda_\alpha(G)
\right)
\right]
\leq
e^{-\eta t^2}.
\label{eq:two-tensor-alpha-beta-optimized}
\end{align}

It remains to evaluate the two quantities involving the tensor $G$.
The first is the maximal product amplitude across the bipartition
$A:(C_A R)$:
\begin{align}
\max_\alpha
\bigl\|\,|G_\alpha\rangle\,\bigr\|
=
\max_{\alpha,\phi}
\left|
\langle\alpha,\phi|G\rangle
\right| = \ \figbox{2.5}{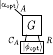} \ .
\end{align}
The same Gaussian concentration and $\epsilon$-net argument as in the
one-tensor case gives
\begin{align}
\Pr_G\left[
\max_\alpha
\bigl\|\,|G_\alpha\rangle\,\bigr\|
>
K
\left(
\frac{1}{\sqrt a}
+
\frac{1}{\sqrt{c_A r}}
\right)
\right]
\leq
e^{-\eta(a+c_A r)}.
\label{eq:two-tensor-G-bipartite-overlap}
\end{align}
Since $a\ll c_A r$, this implies 
\begin{align}
\max_\alpha
\bigl\|\,|G_\alpha\rangle\,\bigr\|
\leq
\frac{K}{\sqrt a}
\end{align}
except with exponentially small failure probability.

The second quantity is the maximal tripartite product amplitude of
$G$:
\begin{align}
\max_\alpha\lambda_\alpha(G)
=
\max_{\alpha,\gamma,u}
\left|
\langle\alpha,\gamma,u|G\rangle
\right| = \ \figbox{2.5}{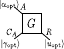} \ .
\end{align}
Applying the one-tensor result to the three legs $A$, $C_A$, and $R$,
we obtain
\begin{align}
\Pr_G\left[
\max_\alpha\lambda_\alpha(G)
>
K
\left(
\frac{1}{\sqrt{a c_A}}
+
\frac{1}{\sqrt{a r}}
+
\frac{1}{\sqrt{c_A r}}
\right)
\right]
\leq
e^{-\eta(a+c_A+r)}.
\end{align}
implying 
\begin{align}
\max_\alpha\lambda_\alpha(G)
\leq
K\left(
\frac{1}{\sqrt{a c_A}}
+
\frac{1}{\sqrt{a r}}
+
\frac{1}{\sqrt{c_A r}}
\right)
\end{align}
except with exponentially small failure probability.
 
Define
\begin{align}
m
:=
\min\{a,b,c_A,c_B,r\}.
\end{align}
Choosing $t=\sqrt m$, we find
\begin{align}
&\sqrt r\,
\max_{\alpha,\beta}
\bigl\|\,|G_\alpha\star H_\beta\rangle\,\bigr\|
\leq
K
\left[
\frac{1}{\sqrt b}
\left(
\frac{1}{\sqrt a}
+
\frac{1}{\sqrt{c_A r}}
\right)
\right.
\left.
+
\frac{\sqrt a+\sqrt b}{\sqrt{b c_B}}
\left(
\frac{1}{\sqrt{a c_A}}
+
\frac{1}{\sqrt{a r}}
+
\frac{1}{\sqrt{c_A r}}
\right)
\right]
\label{eq:two-tensor-A-oriented-amplitude}
\end{align}
except with probability at most $e^{-\eta m}$.
Squaring Eq.~\eqref{eq:two-tensor-A-oriented-amplitude} and adjusting the constant $K$, we obtain the following bound:
\begin{align}
\Lambda(\rho_{A:B})
\leq
K
\left[
\frac{1}{ab}
+
\frac{1}{b c_A c_B}
+
\frac{1}{b c_B r}
+
\frac{a}{b c_A c_B r}
+
\frac{1}{a c_A c_B}
+
\frac{1}{c_A c_B r}
\right]
\label{eq:two-tensor-A-oriented-overlap}
\end{align}
where some terms are omitted due to local triangle inequalities.

So far, we have first averaged over $H$ while keeping $G$ fixed, and subsequently optimized over $\beta$ and $\alpha$.
We can instead apply exactly the same argument in the opposite order.
In particular, we first average over $G$ while keeping $H$ fixed, optimize over $\alpha$, and finally optimize over $\beta$.
This amounts to the exchange
\begin{align}
A\leftrightarrow B,
\qquad
C_A\leftrightarrow C_B,
\qquad
G\leftrightarrow H
\end{align}
and we obtain
\begin{align}
\Lambda(\rho_{A:B})
\leq
K
\left[
\frac{1}{ab}
+
\frac{1}{a c_A c_B}
+
\frac{1}{a c_A r}
+
\frac{b}{a c_A c_B r}
+
\frac{1}{b c_A c_B}
+
\frac{1}{c_A c_B r}
\right].
\label{eq:two-tensor-H-oriented-overlap-simplified}
\end{align}

We now choose the better of the two bounds.
Suppose first that
\begin{align}
b\geq a.
\end{align}
We use the first bound and find
\begin{align}
\frac{1}{b c_B r}
\leq
\frac{1}{ab}, \qquad
\frac{a}{b c_A c_B r}
\leq
\frac{1}{c_A c_B r}.
\end{align}
Hence,
\begin{align}
\Lambda(\rho_{A:B})
\leq
K
\left[
\frac{1}{ab}
+
\frac{1}{a c_A c_B}
+
\frac{1}{b c_A c_B}
+
\frac{1}{c_A c_B r}
\right].
\end{align}
If instead $a\geq b$, we use the second bound and arrive at the same bound.\footnote{We speculate that favorable optimization order can be designed from the minimal surface configuration.}

The four terms correspond to the four candidate tripartition areas
\begin{align}
A+B,
\qquad
A+C_A+C_B,
\qquad
B+C_A+C_B,
\qquad
C_A+C_B+R,
\end{align}
respectively.
Hence, except with exponentially small failure probability,
\begin{align}
-\log_2\Lambda(\rho_{A:B})
=
\min\left\{
A+B,\,
A+C_A+C_B,\,
B+C_A+C_B,\,
C_A+C_B+R
\right\}
+
O(1).
\end{align}
This proves the lemma.

\section{Three Haar random tensors}
\label{sec:three-random-tensors}

In this section, we derive the relative entropy of entanglement $E_R$ for a tensor network consisting of three independent Haar random tensors connected in a triangle.
We focus on the Mercedes regime of the triangular three-tensor network, where the internal tripartition is smaller than every boundary-pair cut.

\subsection{Haar tensors as Gaussian tensors}

Let
\begin{align}
\dim\mathcal H_A&=a,
&
\dim\mathcal H_B&=b,
&
\dim\mathcal H_C&=c,
\nonumber\\
\dim\mathcal H_{C_A}&=c_A,
&
\dim\mathcal H_{C_B}&=c_B,
&
\dim\mathcal H_R&=r,
\end{align}
and introduce the logarithmic dimensions
\begin{align}
A&=\log_2a,
&
B&=\log_2b,
&
C&=\log_2c,
\nonumber\\
C_A&=\log_2c_A,
&
C_B&=\log_2c_B,
&
R&=\log_2r.
\end{align}

We represent the three independent Haar random tensors using independent complex Gaussian tensors
\begin{align}
|G\rangle_{A C_A R}
&=
\sum_{i=1}^{a}
\sum_{\mu=1}^{c_A}
\sum_{\ell=1}^{r}
G_{i\mu\ell}|i,\mu,\ell\rangle,
\\
|H\rangle_{B C_B R}
&=
\sum_{j=1}^{b}
\sum_{\nu=1}^{c_B}
\sum_{\ell=1}^{r}
H_{j\nu\ell}|j,\nu,\ell\rangle,
\\
|J\rangle_{C C_A C_B}
&=
\sum_{k=1}^{c}
\sum_{\mu=1}^{c_A}
\sum_{\nu=1}^{c_B}
J_{k\mu\nu}|k,\mu,\nu\rangle,
\end{align}
with covariances
\begin{align}
\mathbb E\left[
G_{i\mu\ell}
\overline{G_{i'\mu'\ell'}}
\right]
&=
\frac{1}{a c_A r}\,
\delta_{ii'}\delta_{\mu\mu'}\delta_{\ell\ell'},
\\
\mathbb E\left[
H_{j\nu\ell}
\overline{H_{j'\nu'\ell'}}
\right]
&=
\frac{1}{b c_B r}\,
\delta_{jj'}\delta_{\nu\nu'}\delta_{\ell\ell'},
\\
\mathbb E\left[
J_{k\mu\nu}
\overline{J_{k'\mu'\nu'}}
\right]
&=
\frac{1}{c c_A c_B}\,
\delta_{kk'}\delta_{\mu\mu'}\delta_{\nu\nu'}.
\end{align}
In particular,
\begin{align}
\mathbb E\langle G|G\rangle
=
\mathbb E\langle H|H\rangle
=
\mathbb E\langle J|J\rangle
=
1.
\end{align}

We assume the local triangle inequalities
\begin{align}
A<C_A+R,
\qquad
C_A<A+R,
\qquad
R<A+C_A,
\end{align}
\begin{align}
B<C_B+R,
\qquad
C_B<B+R,
\qquad
R<B+C_B,
\end{align}
and
\begin{align}
C<C_A+C_B,
\qquad
C_A<C+C_B,
\qquad
C_B<C+C_A.
\end{align}

We use the symbol $\star$ to denote contraction over all common internal
legs.  In particular,
\begin{align}
|G\star H\star J\rangle_{ABC}
:=
\sum_{i,j,k,\mu,\nu,\ell}
G_{i\mu\ell}
H_{j\nu\ell}
J_{k\mu\nu}
|i,j,k\rangle.
\label{eq:three-tensor-star-definition}
\end{align}
Although the symbol is written sequentially, it denotes the simultaneous contraction of the three internal bonds $R$, $C_A$, and $C_B$.

The global pure state generated by the three-tensor network is
\begin{align}
|\Psi\rangle_{ABC}
\simeq
\sqrt{r c_A c_B}\,
|G\star H\star J\rangle_{ABC} = \sqrt{r c_A c_B} \figbox{2.0}{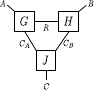}.
\label{eq:three-tensor-global-state}
\end{align}
The prefactor is chosen such that
\begin{align}
\mathbb E\langle\Psi|\Psi\rangle=1.
\end{align}
As in the one- and two-tensor cases, the overall normalization
fluctuations are strongly suppressed and will be omitted below.

\subsection{Maximal product overlap}
\label{sec:three-tensor-product-overlap}

For normalized states $\alpha\in\mathcal H_A$ and
$\beta\in\mathcal H_B$, define
\begin{align}
|G_\alpha\rangle_{C_A R}
:=
{}_A\langle\alpha|G\rangle_{A C_A R},
\qquad
|H_\beta\rangle_{C_B R}
:=
{}_B\langle\beta|H\rangle_{B C_B R},
\end{align}
and
\begin{align}
|G_\alpha\star J\rangle_{C C_B R}
&:=
{}_A\langle\alpha|G\star J\rangle_{A C C_B R} = \ \figbox{2.0}{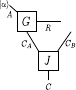}\  ,
\\
|G_\alpha\star H_\beta\star J\rangle_C
&:=
{}_{AB}\langle\alpha,\beta|
G\star H\star J\rangle_{ABC} = \ \figbox{2.0}{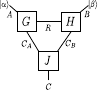}\ .
\end{align}
With this notation, the maximal product overlap is
\begin{align}
\Lambda(\rho_{A:B})
&:=
\max_{\alpha,\beta}
\langle\alpha,\beta|
\rho_{AB}
|\alpha,\beta\rangle
\nonumber\\
&=
r c_A c_B
\max_{\alpha,\beta}
\bigl\|\,
|G_\alpha\star H_\beta\star J\rangle
\,\bigr\|^2 . 
\label{eq:three-tensor-Lambda}
\end{align}

Under the local triangle inequalities, there are four potentially minimal tripartition
surfaces: 
\begin{align}
\figbox{2.0}{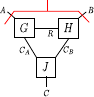},
\qquad
\figbox{2.0}{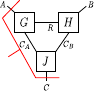},
\qquad
\figbox{2.0}{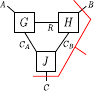},
\qquad
\figbox{2.0}{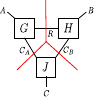}.
\end{align}
Their areas are
\begin{align}
A+B,
\qquad
A+C,
\qquad
B+C,
\qquad
R+C_A+C_B,
\end{align}
respectively.
In this section, we focus on the Mercedes phase:
\begin{align}
R+C_A+C_B
<
\min\{A+B,A+C,B+C\}
\label{eq:three-tensor-Mercedes-phase}
\end{align}
for simplicity of discussion.

Combining Eq.~\eqref{eq:three-tensor-Mercedes-phase} with the local triangle inequalities gives the useful relations
\begin{align}
A>\max\{C_A,R\},
\qquad
B>\max\{C_B,R\},
\qquad
C>\max\{C_A,C_B\}.
\label{eq:three-tensor-useful-inequalities}
\end{align}
For example, $A+B>R+C_A+C_B$ together with $B<C_B+R$ implies $A>C_A$, while
$A+C>R+C_A+C_B$ together with $C<C_A+C_B$ implies $A>R$.

\begin{lemma}
Let
\begin{align}
m
:=
\min\{a,b,c,c_A,c_B,r\}.
\end{align}
There exist universal constants $K,\eta>0$ such that
\begin{align}
\Pr\left[
\Lambda(\rho_{A:B})
>
\frac{K}{r c_A c_B}
\right]
\leq
e^{-\eta m}.
\label{eq:three-tensor-product-overlap-probability}
\end{align}
Equivalently, except with the probability shown above,
\begin{align}
-\log_2\Lambda(\rho_{A:B})
=
R+C_A+C_B
+
O(1).
\label{eq:three-tensor-product-overlap-area}
\end{align}
\end{lemma}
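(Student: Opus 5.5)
The plan is to follow the sequential strategy of Lemma~\ref{lemma:2TN}. We average over one boundary tensor while the other two are held fixed, optimize over the boundary state that probes it, and thereby reduce the problem to maximal product overlaps of the remaining two-tensor network $G\star J$. Those overlaps are then controlled by Lemma~\ref{lemma:2TN} itself. Concretely, we first treat the case $a\leq b$ and integrate out $H$ first. For fixed $G$, $J$, and $\alpha$, set $|X_\alpha\rangle_{C C_B R}:=|G_\alpha\star J\rangle$. Then
\begin{align}
\sqrt{r c_A c_B}\,|G_\alpha\star H_\beta\star J\rangle_C=\sqrt{r c_A c_B}\,|X_\alpha\star H_\beta\rangle_C .
\end{align}

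\emph{Step 1 (fixed $\alpha,\beta$, average over $H$).} Averaging over $H$ gives
\begin{align}
\mathbb E_H\bigl[r c_A c_B\,\||X_\alpha\star H_\beta\rangle\|^2\bigr]=\frac{c_A}{b}\,\||X_\alpha\rangle\|^2 .
\end{align}
Next, rescale $|h\rangle=\sqrt{b c_B r}\,|H_\beta\rangle$, which is a standard Gaussian vector on $C_B R$. The map $h\mapsto \sqrt{r c_A c_B}\,\||X_\alpha\star h\rangle\|/\sqrt{b c_B r}$ is Lipschitz with constant $\sqrt{c_A/b}\,\mu_\alpha$. Here $\mu_\alpha:=\max_{\gamma\in\mathcal H_C,\ \phi\in\mathcal H_{C_B}\otimes\mathcal H_R}|\langle\gamma,\phi|X_\alpha\rangle|$ is the operator norm of $X_\alpha$ viewed as a map $C_BR\to C$; this is the analogue of $\lambda_\alpha(G)$. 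Lemma~\ref{lem:Gaussian-concentration} then yields a tail bound of the form of Eq.~\eqref{eq:two-tensor-fixed-alpha-beta-tail}.

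\emph{Step 2 (nets).} We take a union bound over an $\epsilon$-net of $\beta$, shift $t\to\sqrt{K b/\eta}+t$, and apply the net-to-sphere argument. We then repeat this for $\alpha$, shifting $t\to\sqrt{Ka/\eta}+t$, and set $t=\sqrt m$. The result is
\begin{align}
\Lambda(\rho_{A:B})\leq K\Bigl[\frac{c_A}{b}\max_\alpha\||G_\alpha\star J\rangle\|^2+c_A\Bigl(1+\frac{a}{b}\Bigr)\max_\alpha\mu_\alpha^2\Bigr].
\end{align}

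\emph{Step 3 (reduce to the two-tensor network $G\star J$).} The normalized two-tensor state is $\sqrt{c_A}\,|G\star J\rangle$: it has a bond $C_A$, with $G$ carrying the legs $A,R$ and $J$ carrying the legs $C,C_B$.
\begin{itemize}
\item The first maximum is a bipartite overlap $A:(C\,C_B R)$. Via the one-tensor bipartite argument, as in Eq.~\eqref{eq:two-tensor-G-bipartite-overlap}, applied to $G$ with $J$ averaged, we expect $c_A\max_\alpha\||G_\alpha\star J\rangle\|^2\lesssim 1/a$. This contributes $1/(ab)\leq 1/(rc_Ac_B)$ by the Mercedes condition.
\item The second maximum is exactly the quantity $\Lambda$ of Lemma~\ref{lemma:2TN} for the tripartition $A:C:(R\,C_B)$. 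In the lemma's notation, $(a,b,c_A,c_B,r)\to(a,c,r,c_B,c_A)$, and the freely optimized third party is $R C_B$. This gives
\begin{align}
c_A\max_\alpha\mu_\alpha^2\lesssim\frac{1}{ac}+\frac{1}{a r c_B}+\frac{1}{c r c_B}+\frac{1}{r c_A c_B}.
\end{align}
\end{itemize}
The triangle inequalities of the $G$ and $J$ tensors hold by assumption, so the lemma applies.

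\emph{Step 4 (compare with $1/(rc_Ac_B)$).} With $a\leq b$ we have $1+a/b\leq 2$, so it suffices to compare the four terms above with $1/(rc_Ac_B)$ using Eq.~\eqref{eq:three-tensor-useful-inequalities}. The term $1/(ac)$ is dominated by the Mercedes condition $A+C>R+C_A+C_B$. The term $1/(arc_B)$ is dominated because $a>c_A$, and $1/(crc_B)$ because $c>c_A$. The last term is exactly the target. Hence $\Lambda\leq K/(rc_Ac_B)$.

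\emph{Step 5 (the case $a>b$).} Here the factor $a/b$ would spoil the term $1/(b r c_B)$, which is the reason the order matters. We therefore run the mirror argument under $A\leftrightarrow B$, $C_A\leftrightarrow C_B$, $G\leftrightarrow H$: integrate out $G$ first and reduce to the network $H\star J$. This gives the same bound.

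\emph{Lower bound.} The matching lower bound on $\Lambda$ is already supplied by the product state $|\Psi^{\Gamma_{\min},|0\rangle}\rangle$ of the product-overlap lemma in Section~\ref{sec:bound}.

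The main obstacle is Step 3, specifically verifying that the operator-norm quantity $\mu_\alpha$ is exactly a two-tensor tripartite overlap to which Lemma~\ref{lemma:2TN} applies. Two things need checking. First, the combined leg $R\,C_B$ must be allowed to be entangled, which is fine because Lemma~\ref{lemma:2TN} optimizes the third party freely. Second, the normalization factor $c_A$ and the failure probabilities must stay within $e^{-\eta m}$. If the lemma's ``$C_A C_B$'' legs must be distinct boundary legs, I would instead redo the two-tensor argument directly, averaging $J$ with $G$ fixed; this is the same computation.
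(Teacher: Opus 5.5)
Your proposal is correct and is essentially the paper's proof. It averages over $H$ with Lipschitz constant $\sqrt{c_A/b}\,\mu_\alpha$ (the paper's $\lambda_\alpha(G,J)$), takes nets over $\beta$ and then $\alpha$, and identifies $c_A\max_\alpha\mu_\alpha^2$ with the two-tensor overlap of Lemma~\ref{lemma:2TN} under $(a,b,c_A,c_B,r)\to(a,c,r,c_B,c_A)$; it then uses $a>c_A$, $c>c_A$ from the Mercedes phase and reduces to $a\leq b$ via the $A\leftrightarrow B$ mirror, exactly as the paper does. The one step you leave at ``we expect'' is closed in the paper by the deterministic factorization $T_1=\max_\alpha\bigl\|\,|G_\alpha\star J\rangle\,\bigr\|\leq\max_\alpha\bigl\|\,|G_\alpha\rangle\,\bigr\|\cdot\max_{u,\phi}|\langle u,\phi|J\rangle|\leq K/\sqrt{a c_A}$, which uses the bipartite overlaps of $G$ across $A:(C_A R)$ and of $J$ across $C_A:(C\,C_B)$ together with the local triangle inequalities, so no separate averaging over $G$ or $J$ is needed.
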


The proof follows the same recursive structure as in the two-tensor case.
We first average over the tensor $H$ while keeping $G$ and $J$ fixed, and then optimize sequentially over $\beta$ and $\alpha$.
The expectation value and the Lipschitz constant reduce the problem to, respectively, a bipartite overlap and a two-tensor product-overlap problem for the remaining $G$--$J$ subnetwork.
In this sense, the proof locally reconstructs the Mercedes-type tripartition surface.

\subsection{Overlap for fixed $\alpha,\beta,G,J$}
\label{sec:three-tensor-fixed-alpha-beta}

We begin by estimating
\begin{align}
\sqrt{r c_A c_B}\,
\bigl\|\,
|G_\alpha\star H_\beta\star J\rangle 
\,\bigr\| = \ \sqrt{r c_A c_B}\,
\left\|\, \figbox{2.0}{fig_3TN_GHJ_ab.pdf} \,\right\| 
\end{align}
for fixed $\alpha$ and $\beta$.
We first take the probability only over the Gaussian tensor $H$, while keeping $G$ and $J$ fixed.

We begin by evaluating the expectation value.
Averaging over $H$, we obtain 
\begin{align}
\mathbb E_H\left[
r c_A c_B\,
\bigl\|\,
|G_\alpha\star H_\beta\star J\rangle
\,\bigr\|^2
\right]
=
\frac{c_A}{b}
\bigl\|\,
|G_\alpha\star J\rangle
\,\bigr\|^2, \label{eq:three-tensor-fixed-alpha-beta-second-moment}
\end{align}
and Jensen's inequality gives
\begin{align}
\mathbb E_H\left[
\sqrt{r c_A c_B}\,
\bigl\|\,
|G_\alpha\star H_\beta\star J\rangle
\,\bigr\|
\right]
\leq
\sqrt{\frac{c_A}{b}}\,
\bigl\|\,
|G_\alpha\star J\rangle
\,\bigr\| \ = \sqrt{\frac{c_A}{b}}\, \ \left\|  \figbox{2.0}{fig_3TN_GJ_a.pdf}\ \right\|.
\label{eq:three-tensor-fixed-alpha-beta-mean}
\end{align}
The remaining norm has a simple overlap interpretation:
\begin{align}
\bigl\|\,
|G_\alpha\star J\rangle
\,\bigr\|
=
\max_{\phi}
\left|
\langle\alpha,\phi|G\star J\rangle
\right| = \figbox{2.5}{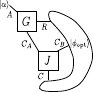}
\label{eq:three-tensor-expectation-overlap}
\end{align}
where $\phi$ is an arbitrary state on the combined Hilbert space
$C C_B R$.
Thus the expectation-value term is a bipartite product amplitude of the
effective two-tensor network $G\star J$ across $A:(C C_B R)$.

In order to apply the Gaussian concentration bound, define the rescaled
projected vector
\begin{align}
|h\rangle_{C_B R}
:=
\sqrt{b c_B r}\,
|H_\beta\rangle_{C_B R}.
\end{align}
By the covariance of $H$, $|h\rangle_{C_B R}$ is a standard complex
Gaussian vector.
We define
\begin{align}
f_{\alpha,\beta;G,J}(h)
:=
\sqrt{\frac{c_A}{b}}\,
\bigl\|\,
|G_\alpha\star J\star h\rangle
\,\bigr\|.
\label{eq:three-tensor-fixed-product-function}
\end{align}
By construction,
\begin{align}
f_{\alpha,\beta;G,J}(h)
=
\sqrt{r c_A c_B}\,
\bigl\|\,
|G_\alpha\star H_\beta\star J\rangle
\,\bigr\|.
\end{align}

We next evaluate its Lipschitz constant.
Define
\begin{align}
\lambda_\alpha(G,J)
:=
\max_{\gamma,u}
\left|
\langle\gamma,u|G_\alpha\star J\rangle
\right|  = \figbox{2.5}{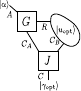}
\label{eq:three-tensor-lambda-definition}
\end{align}
where $\gamma\in\mathcal H_C$ and $u\in\mathcal H_{C_B}\otimes\mathcal H_R$.
Importantly, $u$ is an arbitrary state on the combined Hilbert space $C_B R$ and need not factorize across $C_B:R$.

Consider two arbitrary vectors $h,h'$ on $C_B R$, and define
\begin{align}
|\delta h\rangle
:=
|h-h'\rangle.
\end{align}
Using the reverse triangle inequality and the definition of
$\lambda_\alpha(G,J)$, we have
\begin{align}
&
\left|
f_{\alpha,\beta;G,J}(h)
-
f_{\alpha,\beta;G,J}(h')
\right|
\nonumber\\
&\qquad\leq
\sqrt{\frac{c_A}{b}}\,
\bigl\|\,
|G_\alpha\star J\star\delta h\rangle
\,\bigr\|
\nonumber\\
&\qquad\leq
\sqrt{\frac{c_A}{b}}\,
\lambda_\alpha(G,J)
\bigl\|\,|\delta h\rangle\,\bigr\|.
\end{align}
Thus, $f_{\alpha,\beta;G,J}$ is $L_\alpha$-Lipschitz with
\begin{align}
L_\alpha
=
\sqrt{\frac{c_A}{b}}\,
\lambda_\alpha(G,J).
\label{eq:three-tensor-Lipschitz-constant}
\end{align}

The Gaussian concentration inequality now gives
\begin{align}
&\Pr_H\left[
\sqrt{r c_A c_B}\,
\bigl\|\,
|G_\alpha\star H_\beta\star J\rangle
\,\bigr\|
\right.
\nonumber\\
&\hspace{19mm}\left.
>
\sqrt{\frac{c_A}{b}}
\left(
\bigl\|\,
|G_\alpha\star J\rangle
\,\bigr\|
+
t\,\lambda_\alpha(G,J)
\right)
\right]
\leq
e^{-\eta t^2}.
\label{eq:three-tensor-fixed-alpha-beta-tail}
\end{align}

\subsection{Optimization over $\alpha,\beta$}

We next optimize over $\beta$ and $\alpha$.
Applying Eq.~\eqref{eq:three-tensor-fixed-alpha-beta-tail} to $\epsilon$-nets on the unit spheres in $\mathbb C^b$ and $\mathbb C^a$, shifting the deviation parameter by the corresponding net entropies, and using the standard net-to-sphere argument, we obtain
\begin{align}
&\Pr_H\left[
\sqrt{r c_A c_B}\,
\max_{\alpha,\beta}
\bigl\|\,
|G_\alpha\star H_\beta\star J\rangle
\,\bigr\|
\right.
\nonumber\\
&\hspace{15mm}\left.
>
K\sqrt{\frac{c_A}{b}}
\left(
\max_\alpha
\bigl\|\,
|G_\alpha\star J\rangle
\,\bigr\|
+
(\sqrt a+\sqrt b+t)
\max_\alpha\lambda_\alpha(G,J)
\right)
\right]
\nonumber\\
&\hspace{35mm}
\leq
e^{-\eta t^2}.
\label{eq:three-tensor-alpha-beta-optimized}
\end{align}
It remains to evaluate the two quantities involving the $G$--$J$
subnetwork.

The first quantity is the bipartite product amplitude
\begin{align}
T_1
:=
\max_\alpha
\bigl\|\,
|G_\alpha\star J\rangle
\,\bigr\| = \max_\alpha \ \left\|  \figbox{2.0}{fig_3TN_GJ_a.pdf}\ \right\| \ .
\end{align}
The contraction of the $C_A$ leg with $J$ can amplify a vector by at
most the maximal bipartite product amplitude of $J$ across
$C_A:(C C_B)$.
Consequently,
\begin{align}
T_1
&\leq
\left(
\max_\alpha
\bigl\|\,|G_\alpha\rangle\,\bigr\|
\right)
\left(
\max_{u,\phi}
|\langle u,\phi|J\rangle|
\right) \\ 
&= \max_{\alpha} \ \left\|  \figbox{2.5}{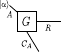} \right\| \ \
\max_{u,\phi} \figbox{2.5}{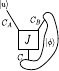}
\end{align}
where $u\in C_A$ and $\phi\in C C_B$.
Using the bipartite Gaussian overlap estimates and the local triangle inequalities,
\begin{align}
\max_\alpha
\bigl\|\,|G_\alpha\rangle\,\bigr\|
\leq
\frac{K}{\sqrt a},
\qquad
\max_{u,\phi}
|\langle u,\phi|J\rangle|
\leq
\frac{K}{\sqrt{c_A}}.
\end{align}
Therefore,
\begin{align}
T_1
\leq
\frac{K}{\sqrt{a c_A}}.
\label{eq:three-tensor-T1-bound}
\end{align}

The second quantity is
\begin{align}
T_2
:=
\max_\alpha
\lambda_\alpha(G,J).
\end{align}
For a normalized state $\gamma\in C$, define
\begin{align}
|J_\gamma\rangle_{C_A C_B}
:=
{}_C\langle\gamma|J\rangle_{C C_A C_B}.
\end{align}
Optimizing over the state $u$ on $C_B R$ gives
\begin{align}
T_2
=
\max_{\alpha,\gamma}
\bigl\|\,
|G_\alpha\star J_\gamma\rangle
\,\bigr\| \ 
= \max_{\alpha,\gamma}\ \left\|\ \figbox{2.5}{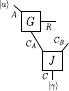} \ \right\|
\label{eq:three-tensor-T2-two-tensor}
\end{align}
This is precisely the two-tensor overlap problem studied in
Sec.~\ref{sec:two-random-tensors}, with external product legs $A,C$, remaining legs $R,C_B$, and shared bond $C_A$.
The two-tensor result (lemma~\ref{lemma:2TN}) gives
\begin{align}
c_A T_2^2
\leq
K
\left(
\frac{1}{ac}
+
\frac{1}{a r c_B}
+
\frac{1}{c r c_B}
+
\frac{1}{r c_B c_A}
\right).
\label{eq:three-tensor-T2-general}
\end{align}
The Mercedes inequalities Eq.~\eqref{eq:three-tensor-Mercedes-phase} imply
\begin{align}
a>c_A,
\qquad
c>c_A.
\end{align}
Hence the middle two terms in
Eq.~\eqref{eq:three-tensor-T2-general} are bounded by the final term,
and we obtain
\begin{align}
T_2
\leq
K
\left(
\frac{1}{\sqrt{a c c_A}}
+
\frac{1}{c_A\sqrt{r c_B}}
\right).
\label{eq:three-tensor-T2-bound}
\end{align}

By exchanging $A$ and $B$ if necessary, we may assume
\begin{align}
a\leq b.
\end{align}
Choosing $t=\sqrt m$ in
Eq.~\eqref{eq:three-tensor-alpha-beta-optimized}, and substituting Eqs.~\eqref{eq:three-tensor-T1-bound} and \eqref{eq:three-tensor-T2-bound}, we obtain
\begin{align}
\Lambda(\rho_{A:B})^{1/2}
\leq
K
\left(
\frac{1}{\sqrt{ab}}
+
\frac{1}{\sqrt{ac}}
+
\frac{1}{\sqrt{bc}}
+
\frac{1}{\sqrt{r c_A c_B}}
\right)
\label{eq:three-tensor-four-amplitude-bound}
\end{align}
except with probability at most $e^{-\eta m}$.
Squaring and adjusting the constant $K$ gives
\begin{align}
\Lambda(\rho_{A:B})
\leq
K
\left(
\frac{1}{ab}
+
\frac{1}{ac}
+
\frac{1}{bc}
+
\frac{1}{r c_A c_B}
\right).
\label{eq:three-tensor-four-overlap-bound}
\end{align}

In the Mercedes phase,
\begin{align}
r c_A c_B
<
ab,
\qquad
r c_A c_B
<
ac,
\qquad
r c_A c_B
<
bc.
\end{align}
Therefore,
\begin{align}
\Lambda(\rho_{A:B})
\leq
\frac{K}{r c_A c_B}
\end{align}
except with probability at most $e^{-\eta m}$.
This proves the lemma.

Finally, since
\begin{align}
\mathrm{Area}(\gamma_{AB})
=
C,
\end{align}
we obtain
\begin{align}
E_R(\rho_{A:B})
&=
R+C_A+C_B-C
+
O(1)
\nonumber\\
&=
\mathrm{Area}(\Gamma_{\min})
-
\mathrm{Area}(\gamma_{AB})
+
O(1).
\label{eq:three-tensor-ER}
\end{align}

\section{Hierarchy of geometric transitions in AdS$_3$}\label{sec:hierarchy}
Let us illustrate rich structures of the connected wedge in AdS/CFT. As we have seen for the relative entropy of entanglement as well as the distillable entanglement~\cite{MoriYoshidaConnectedWedge}, there are several distinct phases even when the entanglement wedge is connected.

For simplicity, we focus on pure global AdS$_3$. Consider the reflection-symmetric boundary ordering $A,\ C_A,\ B,\ C_B$, with $|A|=|B|=\theta$ and $|C_A|=|C_B|$ as shown in Fig.~\ref{fig_hierarchy}.
The AdS radius is rescaled to unity and the total circumference is set to $2\pi$.

\begin{figure}
\centering
{
\includegraphics[width=0.8\textwidth]{fig_hierarchy}
}
\caption{Transitions for connected entanglement wedge, minimal tripartition surface, and entanglement wedge cross section. Note that the entanglement wedge cross section is considered between $A$ and $C$. 
}
\label{fig_hierarchy}
\end{figure}

To see how $A$ and $B$ are entangled, consider mutual information
\begin{equation}
    I(A:B)=S_A+S_B-S_{AB},
\end{equation}
the holographic one-way distillable entanglement~\cite{MoriYoshidaConnectedWedge}~\footnote{$J_W$ is also dual to the locally accessible information on the boundary, quantifying the information gain by local measurements.}
\begin{equation}
    E_D(A\leftarrow B)=J_W(A|B):=S_A-E_W(A:C_A C_B),
\end{equation}
and relative entropy of entanglement $E_R(A:B)$.
Here $E_W(X:Y)$ denotes the entanglement wedge cross section between $X$ and $Y$, which is dual to the entanglement of formation $E_F$~\cite{MoriYoshidaConnectedWedge} as well as the entanglement of purification $E_P$~\cite{TakayanagiUmemoto2018}.

\paragraph{Mutual information}

The mutual information exhibits a phase transition when $A$ and $B$ each covers a quarter of the full circumference. The leading-order value is
\begin{equation}
    I(A:B) =
    \begin{cases}
        0, \qquad & \displaystyle\theta<\frac{\pi}{2}\approx 1.57\\
        \displaystyle\frac{2c}{3} \log\frac{\sin\displaystyle\frac{\theta}{2}}{\sin\displaystyle\frac{\pi-\theta}{2}}, \qquad & \displaystyle\frac{\pi}{2}\le \theta< \pi
    \end{cases}
\end{equation}
where we used the Brown-Henneaux relation $c=\frac{3}{2G_N}$.

This transition corresponds to the disconnected--connected wedge transition for the entanglement wedge of $\rho_{AB}$. The entanglement wedge cross section $E_W(A:B)$ also exhibits a phase transition from zero to finite value simultaneously.

\paragraph{One-way distillable entanglement}

Since the mutual information only captures the total correlation, let us consider a finer, operational measure: one-way distillable entanglement $E_D(A\leftarrow B)$, which quantifies the maximum number of Bell pairs that can be distilled from $\rho_{AB}$ by one-way LOCC from $B$ to $A$.\footnote{See~\cite{MoriYoshidaConnectedWedge,LiMoriYoshida2026} for the precise definition.}
Holographically, the leading-order value is given by
\begin{equation}
    J_W(A|B) =
    \begin{cases}
        0,\qquad &\theta< \theta_{\rm cross}:=4\, \mathrm{arctan} \sqrt{\sqrt{2}-1}\approx 2.29\\
        \displaystyle\frac{c}{3} \log\frac{\displaystyle\sin\frac{\theta}{2}\tan\frac{\theta}{2}}{2}, \qquad & \theta_{\rm cross}\le \theta< \pi.
    \end{cases}
\end{equation}
The detailed calculation is presented in Appendix A of~\cite{MoriYoshidaConnectedWedge}.

\paragraph{Relative entropy of entanglement}

Finally, the relative entropy of entanglement also exhibits a transition from $O(1)$ to $O(c)$. When $A-B$ wedge is connected, the problem of finding the minimal tripartition surface (Steiner trees) amounts to optimizing over two bulk triway intersection points. The same computation is carried out in~\cite{Anegawa:2025prn, GaddeKrishnaSharma2023} and we find, at the leading order,
\begin{align}
    E_R(A:B)  &=\frac{\mathrm{Area}\,\Gamma_{\min}}{4G_N} -S_{AB}\\
    &=
    \begin{cases}
        0, \qquad & \theta<\displaystyle\frac{\pi}{2}\\
        I(A:B),\qquad & \displaystyle\frac{\pi}{2}\le \theta\le \theta_{\Gamma} \approx 2.06 \\
        -\displaystyle\frac{c}{6}\log \displaystyle\frac{27}{16} + \frac{c}{3}\log\frac{\qty(1+\displaystyle\cos\frac{\pi-\theta}{2})\displaystyle\cos\frac{\theta}{2}}{\displaystyle\sin^2\frac{\pi-\theta}{2}}, \qquad & \theta_{\Gamma}\le \theta <\pi.
    \end{cases}
\end{align}
Here $\Gamma_{\rm min}$ is either the union of the disjoint RT surfaces for $A$ and $B$ or
the minimal tripartition surface like the middle figure in Fig.~\ref{fig_hierarchy}.
$\theta_{\Gamma}$ is determined so that the transition becomes continuous. Its precise value can be found by solving
\begin{equation}
    \sinh \qty[\log\qty(\sqrt{\frac{27}{16}} \tan^2\frac{\theta_{\Gamma}}{2})] = \tan\frac{\theta_{\Gamma}}{2}.
\end{equation}





\paragraph{Transitions of mixed-state wedges}

First, the entanglement wedge of $A\cup B$ becomes connected as we increase $\abs{A}=\abs{B}=\theta$ from $0$.
The transition is determined by equality of the connected and disconnected RT configurations. The transition occurs at $\theta=\pi/2\approx 1.57$ where the mutual information becomes nonzero. At this point, $A-B$ starts to share geometric entanglement.


Second, the minimal tripartition surface undergoes its own connectivity transition.
The closest separable state is no longer given by the disconnected entanglement wedge; instead, it is given by the connected tripartition wedge with two bulk trijunctions.
The transition occurs at $\theta=\theta_{\rm \Gamma}\approx 2.06$. The REE starts to deviate from the mutual information at this point.



Finally, there is a further transition associated with the entanglement wedge cross section $E_W(A:C_AC_B)=E_W(B:C_AC_B)$.
The entanglement wedge after EPR-distilling LOCC becomes connected only after $\theta=\theta_{\rm cross}\approx 2.29$. At this point, a part of geometric entanglement shared between $A$ and $B$ becomes (one-shot) one-way LOCC distillable.

The three transition scales therefore satisfy
\begin{equation}
    \theta_{\rm wedge}:=\frac{\pi}{2} < \theta_{\Gamma}\approx 2.06 < \theta_{\rm cross} \approx 2.29.
\end{equation}
The plot of the transition points and each measure is presented in Fig.~\ref{fig:plot}.

\begin{figure}
    \centering
    \includegraphics[width=0.7\linewidth]{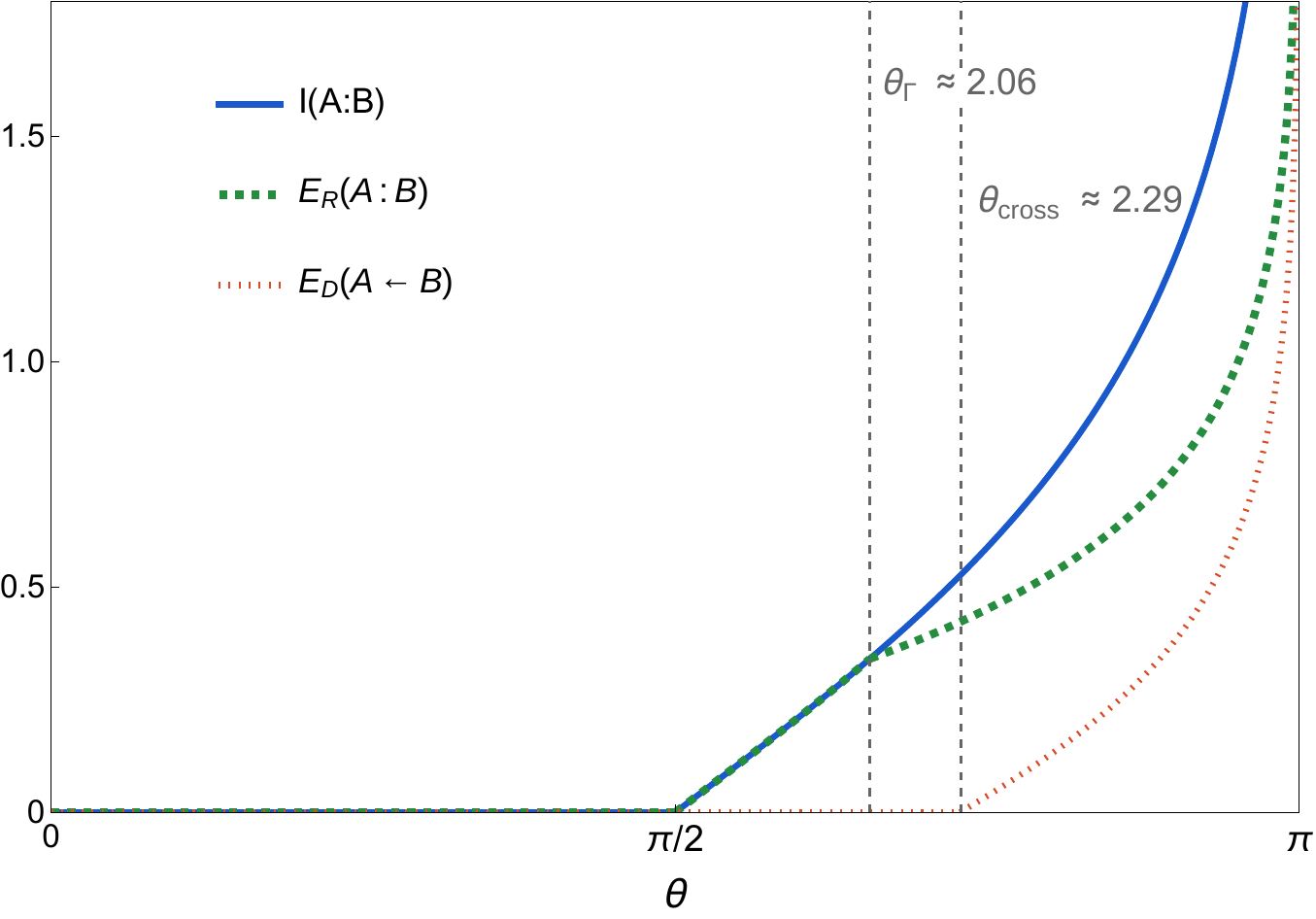}
    \caption{Plot of holographic mixed-state measures. There are distinct phases whose boundaries are specified by the angles $\theta_{\rm wedge},\theta_{\Gamma},\theta_{\rm cross}$.}
    \label{fig:plot}
\end{figure}

\section{Discussion}\label{sec:discussion}

In this paper, we proposed a bulk dual of the relative entropy of entanglement $E_R$ in terms of the bulk minimal tripartition surface. 
By focusing on random tensor networks, we developed a particular recursive method for determining $E_R$ for small networks. 
Despite great interest and importance in quantum information, analytic or numerical computation of $E_R$ still remains challenging. 
Our results may also be of independent interest from a purely quantum-information-theoretic perspective.
We conclude with various discussions. 

\paragraph{Broader motivation:}

It is useful to place the present results in the broader context of multipartite entanglement in holography (see~\cite{AkersRath2020} for instance).
Despite substantial progress, the structure and operational significance of multipartite entanglement in holographic states remain much less understood.
At the same time, genuinely multipartite correlations appear to play an important role in several conceptual questions in quantum gravity, including recent discussions of semiclassical baby universes~\cite{MoriYoshidaBabyUniverse}.
Our results provide a new geometric quantity that is directly tied to such multipartite structure.

It is also notable that the relevant bulk object is not an ordinary minimal surface, but a tripartition surface that can contain a nontrivial tri-junction.
Such junction structures probe the organization of bulk geometry in a more local and fine-grained way than standard bipartite cuts.
It would be interesting to understand whether this perspective can be sharpened into a useful probe of bulk locality, perhaps even at sub-AdS scales.

\paragraph{Multipartite generalization:}

A natural extension of our work is to consider the relative entropy with respect to the set of $m$-partite separable states.
It is tempting to conjecture that the corresponding holographic dual is controlled by a minimal $m$-partition surface in the bulk.
It would be interesting to determine whether the arguments developed here for tripartite separability admit such a generalization.

\paragraph{On entanglement of formation $E_F$ in the two-tensor model:}

Our results also have immediate implications for the entanglement of
formation $E_F(\rho_{AB})$.
In the holography literature, it has been proposed that, at leading order,
$E_F(\rho_{AB})$ is given by the area of the entanglement wedge cross
section $E_W$.\footnote{
A certain no-go argument against this proposal was given by Umemoto
in~\cite{Umemoto2019}.
We recently pointed out that this argument can be circumvented if the
proposal is understood as an approximate relation holding only at leading
order~\cite{MoriYoshidaConnectedWedge}.
}
For the two-tensor network, let us focus on the regime where the
entanglement wedge is connected and $R$ is the minimal cross section.
Then, this proposal reduces to
\begin{align}
E_F(\rho_{AB}) \overset{?}{=} R + O(1).
\end{align}
Indeed, one can readily construct a candidate decomposition of $\rho_{AB}$
(the two-tensor network itself), giving the upper bound
\begin{align}
E_F(\rho_{AB}) \leq R + O(1).
\end{align}

When
\begin{align}
R < A+B-C_A-C_B,
\end{align}
our result gives $E_R=R+O(1)$.
Together with the general inequality $E_R\leq E_F$, this gives the
matching lower bound.  Hence,
\begin{align}
E_F(\rho_{AB}) = R + O(1).
\end{align}
Thus, in this parameter regime, the proposed relation between $E_F$ and
the entanglement wedge cross section is indeed satisfied at leading order.
To the best of our knowledge, this was not previously known.

The situation is different when
\begin{align}
0<A+B-C_A-C_B<R.
\end{align}
In this regime, the above argument does not determine $E_F$.
Indeed, our result gives
\begin{align}
E_R
=
A+B-C_A-C_B+O(1),
\end{align}
which can be parametrically smaller than $R$.
If the proposed relation $E_F=E_W+O(1)=R+O(1)$ continues to hold, this
would imply a leading-order separation between $E_R$ and $E_F$ already
in the two-tensor network.
We will address this question in a separate work.

\paragraph{Toward holographic theories:}

A direct formulation of the REE in continuum holographic theories involves an additional subtlety.
The usual notion of separability assumes a tensor-product decomposition into subsystems, whereas local operator algebras in continuum quantum field theory do not admit such a factorization in a straightforward manner~\cite{Harlow2017RT,Witten2022CrossedProduct}.
A satisfactory formulation may therefore require an algebraic definition of separability and of the relative entropy of entanglement itself~\cite{HollandsSanders2018}.
Instead, we worked with Haar random tensor networks, where the Hilbert spaces are finite dimensional and separability is unambiguous, while retaining the geometric structure that motivates the holographic proposal.

A possible gravity-side interpretation is to introduce a defect supported on the bulk tripartition surface $\Gamma$, with an action whose leading replica dependence is proportional to
\begin{align}
\frac{n-1}{4G_N}\,\Area(\Gamma).
\end{align}
At the level of the on-shell action, such a term would naturally reproduce the area contribution appearing in our result. 
However, unlike the usual cosmic-brane construction for entanglement entropy, it is not yet clear what underlying replica structure would give rise to the $n$-dependence above.
It would also be interesting to understand how such a defect backreacts on the bulk geometry, in particular when $\Gamma$ contains trijunctions.
Finally, the defect action alone does not explain why the corresponding boundary state should be separable.
A satisfactory continuum formulation should therefore clarify both the gravitational origin of the defect and the appropriate notion of separability in the boundary theory.

A complementary approach is to define the dephased state directly from
the boundary path integral.
An EPR pair corresponds, through the Choi--Jamio\l kowski isomorphism, to gluing two path integrals by an identity operator.
Dephasing additionally identifies the Schmidt labels between the ket and
bra, suggesting a modified gluing condition in which several sheets meet
along a common locus.
Such a construction may be related to the ``book'' boundary conditions
studied in many-body systems
~\cite{Aditya:2025yex,Ashida:2023ziz,Jiang:2025iet}.
It would be interesting to explore whether this perspective provides a new
way to formulate the dephased state directly in the boundary theory and to
study the corresponding relative entropy using CFT or BCFT techniques
~\cite{Caputa:2018xuf,TakayanagiUgajinUmemoto2018}.

\paragraph{Geometric interpretation of bond dephasing:}

It is also natural to ask whether the bond-dephased state appearing in our upper bound admits a direct bulk interpretation.
A useful analogy comes from projective measurements in tensor networks. Projecting the bonds along a surface onto fixed states replaces the contracted entangled bonds by product states, a construction reminiscent of inserting
end-of-world (EoW) brane-like objects along the measurement surface~\cite{Takayanagi2011BCFT,FujitaTakayanagiTonni2011, AntoniniEtAl2022, MoriYoshidaConnectedWedge}.
Bond dephasing can be viewed as performing the same measurement in the Schmidt basis and subsequently discarding the measurement outcome.
Thus, the dephased state corresponds not to a fixed postselected brane configuration, but rather to an incoherent mixture over the possible
measurement outcomes.
We will refer to this measure-and-forget object heuristically as a ``GHZ brane.''
It would be interesting to understand whether such an incoherent mixture of postselected configurations admits a useful semiclassical bulk description.
This may provide another route toward understanding the bulk meaning of the separable state used in our upper-bound construction.

\paragraph{Toward general random tensor networks:}

It is natural to ask whether the lower-bound argument can be generalized to
arbitrary random tensor networks.
While the structure of the proof based on sequential optimization gives some reason for optimism, there is an important technical complication.
Already in the two-tensor example, obtaining the optimal bound required
choosing the order of optimization appropriately.
A fixed order does not necessarily generate all candidate tripartition surfaces, and different optimization orders must be compared in order to recover the minimal one. 
Instead, it may be possible to design a suitable optimization order from the minimal surface. 
As the number of tensors increases, the number of possible optimization orders and intermediate subnetworks grows rapidly.
Developing a systematic way to organize these choices while retaining the globally minimal tripartition surface appears to be the main obstacle to extending the proof to general random tensor networks.

\paragraph{Implications for bulk reconstruction:}

The Ryu--Takayanagi surface has an important interpretation in bulk reconstruction where the entanglement wedge bounded by the minimal surface
identifies a bulk region whose operators can be reconstructed from the
corresponding boundary subsystem.
The appearance of a tripartition surface in the REE raises the possibility of an analogous multipartite reconstruction structure.
It would be interesting to understand what class of bulk degrees of freedom, if any, is naturally associated with the regions bounded by the tripartition surface and what boundary operations are required for their reconstruction.
We will explore this question separately.

\subsection*{Acknowledgment}

We thank Pratik Rath and Tadashi Takayanagi for useful discussions. 
Research at Perimeter Institute is supported in part by the Government of Canada through the Department of Innovation, Science and Economic Development and by the Province of Ontario through the Ministry of Colleges and Universities. 
This work was supported by the RIKEN TRIP initiative and JSPS KAKENHI Grant Number 23KJ1154, 24K17047.

\appendix

\section{Reduction to the Ising--OR model (Lemma~\ref{lemma:Ising})}
\label{app:binary-reduction}

In this appendix, we prove that the minimization of the replica action \eqref{eq:replica-action}
\begin{align}
I_{\Gamma}[\{g_v\}]
=
\sum_{e=(uv)\notin\Gamma}
\ell_e\, d(g_u,g_v)
+
\sum_{e=(uv)\in\Gamma}
\ell_e\, \delta(g_u,g_v)
\label{eq:app-replica-action}
\end{align}
over arbitrary replica spins $g_v\in S_q$ can be reduced to configurations with $g_v\in\{e,\tau\}$.

We first establish two useful inequalities.

\begin{lemma}
For arbitrary $g,h\in S_q$,
\begin{align}
d(g,h)
\geq
\left|
d(e,g)-d(e,h)
\right|.
\label{eq:app-cayley-triangle}
\end{align}
Furthermore, the dephased-edge distance satisfies
\begin{align}
\delta(g,h)
\geq
\max\left\{
d(e,g),d(e,h)
\right\}.
\label{eq:app-dephased-bound}
\end{align}
\end{lemma}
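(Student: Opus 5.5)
The first inequality is the reverse triangle inequality for the Cayley distance. The second follows by lower-bounding the index-identification count $\kappa(g,h)$ in terms of the cycle structure of $g$ alone, and then symmetrizing in $g$ and $h$.

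\emph{First inequality.} The Cayley distance $d(g,h)=q-\#(g^{-1}h)$ equals the minimal number of transpositions needed to write $g^{-1}h$. It is therefore a genuine metric on $S_q$: symmetric, left-invariant, and satisfying the triangle inequality. Concatenating minimal transposition words for $g^{-1}k$ and $k^{-1}h$ gives a word for $g^{-1}h$. Applying the triangle inequality twice, $d(e,g)\le d(e,h)+d(h,g)$ and $d(e,h)\le d(e,g)+d(g,h)$, and using symmetry yields Eq.~\eqref{eq:app-cayley-triangle}.

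\emph{Second inequality.} By symmetry of the defining relations under $g\leftrightarrow h$, it suffices to show $\delta(g,h)\ge d(e,g)$, i.e.\ $\kappa(g,h)\le \#(g)$. Recall that $\kappa(g,h)$ counts the equivalence classes of $\{i_1,\dots,i_q\}$ generated by $i_a\sim i_{g(a)}$ and $i_a\sim i_{h(a)}$ for $a=2,\dots,q$, together with $i_{g(1)}\sim i_{h(1)}$. The plan is to show that the relations $i_a\sim i_{g(a)}$ for all $a$, including $a=1$, already hold in this equivalence relation. Then every cycle of $g$ lies inside a single class, and the number of classes is at most $\#(g)$.

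For $a\ne1$ the relation $i_a\sim i_{g(a)}$ is imposed directly, so only $a=1$ needs work. Let $a_0:=g^{-1}(1)$, the predecessor of $1$ in its $g$-cycle. If $a_0=1$, i.e.\ $g(1)=1$, the relation is trivial. Otherwise $a_0\neq1$, and the imposed relation $i_{a_0}\sim i_{g(a_0)}=i_1$ holds. Following the $g$-cycle of $1$ forward from $g(1)$, each step $b\mapsto g(b)$ with $b\neq 1$ is an imposed relation. The chain $i_{g(1)}\sim i_{g^2(1)}\sim\cdots\sim i_{a_0}\sim i_1$ therefore links $i_{g(1)}$ back to $i_1$, and $i_1\sim i_{g(1)}$ follows. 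Hence every $g$-cycle is contained in one class, so $\kappa(g,h)\le\#(g)$. Interchanging the roles of $g$ and $h$ gives $\kappa(g,h)\le\#(h)$, and combining the two gives Eq.~\eqref{eq:app-dephased-bound}.

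\emph{Main obstacle.} The only delicate step is the special replica $1$, which is undephased. There one must check that the single relation $i_{g(1)}\sim i_{h(1)}$ is not needed to close the $g$-cycle through $1$. The argument above shows that the dephasing relations on replicas $2,\dots,q$ already close that cycle by themselves. As a sanity check, the bound reproduces $\delta(\tau,\tau)=q-1=d(e,\tau)$ and $\delta(e,e)=0$.
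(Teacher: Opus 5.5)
Your proposal is correct and follows essentially the same route as the paper. The first inequality comes from the triangle inequality for the Cayley distance, and the second from $\kappa(g,h)\le\#(g)$, since the relations $i_a\sim i_{g(a)}$ for $a=2,\dots,q$ already merge each cycle of $g$ into one class, followed by the swap $g\leftrightarrow h$; your explicit check that the cycle through the undephased replica $1$ closes without the relation $i_1\sim i_{g(1)}$ spells out the one step the paper leaves implicit.
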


\begin{proof}
The first inequality follows immediately from the triangle inequality for
the Cayley distance.

To prove Eq.~\eqref{eq:app-dephased-bound}, recall that the equivalence relations determining $\kappa(g,h)$ include
\begin{align}
i_a=i_{g(a)},
\qquad
a=2,\ldots,q.
\label{eq:app-g-relations}
\end{align}
This relation leaves at most $\#(g)$ independent equivalence classes, as every cycle of $g$ gives one connected equivalence class.
Therefore
\begin{align}
\kappa(g,h)\leq \#(g).
\end{align}
It follows that
\begin{align}
\delta(g,h)
=
q-\kappa(g,h)
\geq
q-\#(g)
=
d(e,g)
\end{align}
as well as $\delta(g,h)\geq d(e,h)$ by exchanging $g$ and $h$. 
\end{proof}


We now associate a real-valued variable to each replica spin:
\begin{align}
x_v
:=
\frac{d(e,g_v)}{q-1}.
\label{eq:app-x-definition}
\end{align}
Since the maximal Cayley distance from the identity is $q-1$, we have
\begin{align}
0\leq x_v\leq 1.
\end{align}
The boundary conditions become
\begin{align}
x=0 \qquad &\text{on } C,
\nonumber\\
x=1 \qquad &\text{on } A,B.
\end{align}

Using Eqs.~\eqref{eq:app-cayley-triangle} and
\eqref{eq:app-dephased-bound}, the replica action obeys
\begin{align}
I_{\Gamma}[\{g_v\}]
\geq
(q-1)F_{\Gamma}[x],
\label{eq:app-action-lower-bound}
\end{align}
where
\begin{align}
F_{\Gamma}[x]
:=
\sum_{e=(uv)\notin\Gamma}
\ell_e\, |x_u-x_v|
+
\sum_{e=(uv)\in\Gamma}
\ell_e\, \max\{x_u,x_v\}.
\label{eq:app-continuous-action}
\end{align}

At this stage, $x_v$ is allowed to take arbitrary values in $[0,1]$.
We now use the standard thresholding representation of the Lovasz extension~\cite{Lovasz1983} to show that the minimum of Eq.~\eqref{eq:app-continuous-action} is always attained by a binary configuration $x_v=0,1$.

For each $t\in(0,1)$, define the thresholded variables
\begin{align}
s_v^{(t)}
:=
\begin{cases}
1, & x_v\geq t,\\
0, & x_v<t,
\end{cases}
\label{eq:app-threshold}
\end{align}
so we have a layer-cake representation of $x_v$ as
\begin{equation}
    x_v =\int_0^1 s_v^{(t)}\, dt.
    \label{eq:layer-cake-rep}
\end{equation}

Since $\abs{s_u^{(t)}-s_v^{(t)}}=1$ precisely when $t$ lies between
$x_u$ and $x_v$, we have
\begin{equation}
    |x_u-x_v|
    =
    \int_0^1
    \left|s_u^{(t)}-s_v^{(t)}\right|\,dt,
    \label{eq:app-layer-cake-ising}
\end{equation}
for any $x_u,x_v\in[0,1]$.

Similarly, $\max\{s_u^{(t)},s_v^{(t)}\}$ is nonzero, thus unity, only when $t\le x_u$ or $t\le x_v$. Since this is equivalent to $t\le \max\{x_u,x_v\}$, from Eq.~\eqref{eq:layer-cake-rep},
\begin{equation}
    \max\{x_u,x_v\} = \int_0^1 \max\left\{ s_u^{(t)},s_v^{(t)} \right\}\,dt
    \label{eq:app-layer-cake-or}
\end{equation}
for any $x_u,x_v\in[0,1]$.

Substituting these identities into Eq.~\eqref{eq:app-continuous-action},
we obtain
\begin{align}
F_{\Gamma}[x]
=
\int_0^1
F_{\Gamma}[s^{(t)}]\,dt.
\label{eq:app-coarea}
\end{align}
Because this is an averaged quantity, for any fractional configuration $x$, there exists a threshold
$t$ such that
\begin{align}
F_{\Gamma}[s^{(t)}]
\leq
F_{\Gamma}[x].
\end{align}
Thus allowing intermediate values $0<x<1$ cannot lower the minimum.
Since binary configurations are themselves allowed in the original minimization, the two minima are equal:
\begin{align}
\min_{\substack{
0\leq x_v\leq1\\
x|_A=x|_B=1,\,
x|_C=0}}
F_{\Gamma}[x]
=
\min_{\substack{
s_v\in\{0,1\}\\
s|_A=s|_B=1,\,
s|_C=0}}
F_{\Gamma}[s].
\label{eq:app-binary-min}
\end{align}

It remains to show that every binary configuration can be realized by replica permutations which saturate the bound Eq.~\eqref{eq:app-action-lower-bound}.
Given a binary configuration $\{s_v\}$, choose
\begin{align}
g_v
=
\begin{cases}
e, & s_v=0,\\
\tau, & s_v=1.
\end{cases}
\label{eq:app-binary-permutation}
\end{align}
For an undephased edge,
\begin{align}
d(g_u,g_v)
=
(q-1)|s_u-s_v|,
\end{align}
since
\begin{align}
d(e,e)=d(\tau,\tau)=0,
\qquad
d(e,\tau)=d(\tau,e)=q-1.
\end{align}
For a dephased edge,
\begin{align}
\delta(g_u,g_v)
=
(q-1)\max\{s_u,s_v\},
\end{align}
since
\begin{align}
\delta(e,e)=0,
\end{align}
while
\begin{align}
\delta(e,\tau)
=
\delta(\tau,e)
=
\delta(\tau,\tau)
=
q-1.
\end{align}
Hence the replica configuration in Eq.~\eqref{eq:app-binary-permutation} satisfies
\begin{align}
I_{\Gamma}[\{g_v\}]
=
(q-1)F_{\Gamma}[s].
\end{align}
Combining this observation with Eqs.~\eqref{eq:app-action-lower-bound} and
\eqref{eq:app-binary-min}, we conclude that
\begin{align}
\min_{\{g_v\in S_q\}}
I_{\Gamma}[\{g_v\}]
=
(q-1)
\min_{\substack{
s_v\in\{0,1\}\\
s|_A=s|_B=1,\,
s|_C=0}}
F_{\Gamma}[s]
\label{eq:app-final-binary-reduction}
\end{align}
which establishes Lemma~\ref{lemma:Ising}.

\mciteSetMidEndSepPunct{}{\ifmciteBstWouldAddEndPunct.\else\fi}{\relax}
%
\providecommand{\href}[2]{#2}\begingroup\raggedright\endgroup

\end{document}